\documentclass{elsarticle}

\setlength{\textwidth}{14.6cm}
\setlength{\textheight}{22cm}
\advance\hoffset by -1.5cm
\advance\voffset by -1cm

\usepackage{amssymb,amsthm,amsmath,enumerate,graphicx,hyperref,amsfonts,latexsym,url}
\usepackage[utf8]{inputenc}
\usepackage[T1]{fontenc}        
\usepackage[normalem]{ulem}
\usepackage[margin=2cm,font=small,labelfont=bf]{caption}

\usepackage{tikz}
\usetikzlibrary{snakes,shapes}
\tikzstyle{tre}=[circle,draw,minimum size=3mm,inner sep=1pt]
\tikzstyle{trep}=[circle,draw,minimum size=2mm,inner sep=0.1pt]
\newcommand{\etq}[1]{%
\draw (#1) node {\scriptsize $#1$};
}

\renewcommand{\leq}{\leqslant}
\renewcommand{\geq}{\geqslant}
\newcommand{\NN}{\mathbb{N}}

\newcommand{\TT}{\mathcal{T}}
\newcommand{\BT}{\mathcal{BT\!}}

\newcommand{\RR}{\mathbb{R}}

\theoremstyle{plain}
 \newtheorem{theorem}{Theorem}
 \newtheorem{lemma}[theorem]{Lemma}
 \newtheorem{corollary}[theorem]{Corollary}
 \newtheorem{proposition}[theorem]{Proposition}
 \theoremstyle{definition}
 \newtheorem{example}[theorem]{Example}
 
 \newtheorem{remark}[theorem]{Remark}

\makeatletter
\def\ps@pprintTitle{%
     \let\@oddhead\@empty
     \let\@evenhead\@empty
     \def\@oddfoot{}%
     \let\@evenfoot\@oddfoot}
\makeatother

\begin{document}
\begin{frontmatter}

\title{A balance index for phylogenetic trees based on rooted quartets}

\author[uib1]{Tom\'as M. Coronado}
\ead{t.martinez@uib.eu}
\author[uib1]{Arnau Mir}
\ead{arnau.mir@uib.eu}
\author[uib1]{Francesc Rossell\'o}
\ead{cesc.rossello@uib.eu}
\author[upc]{Gabriel Valiente}
\ead{valiente@cs.upc.edu}
\address[uib1]{Balearic Islands Health Research Institute (IdISBa) and Department of Mathematics and
  Computer Science,  University of the Balearic Islands,
  E-07122 Palma, Spain}
\address[upc]{Algorithms, Bioinformatics, Complexity and Formal Methods Research Group, Technical University of Catalonia, E-08034 Barcelona, Spain}

\begin{abstract}
We define a new balance index for rooted phylogenetic trees based on the symmetry of the evolutive history of every set of 4 leaves. This index makes sense for multifurcating trees and it can be computed in time linear in the number of leaves. We determine its maximum and minimum values for arbitrary and bifurcating trees, and we provide exact formulas for its expected value and variance on bifurcating trees under Ford's $\alpha$-model and Aldous' $\beta$-model and on arbitrary trees under the $\alpha$-$\gamma$-model. 
\end{abstract}
\end{frontmatter}

\section{Introduction}

One of the most broadly studied properties of the topology of rooted phylogenetic trees is their balance, that is, the tendency of the subtrees rooted at all children of any given node to have a similar shape. The main reason for this interest is that the balance of a rooted tree embodies the symmetry of the evolutive history it describes, and hence it reflects, at least to some extent, a feature of the forces that drove the evolution of the set of species represented in the tree; see Chapter 33 of \citep{fel:04}.

{The balance of a tree is usually quantified by means of \emph{balance  indices}. 
 The two most popular such indices are   \emph{Colless' index} \citep{Colless:82} for bifurcating trees, which is defined as the sum, over all internal nodes $v$, of the absolute value of the difference between the number of descendant leaves of the pair of children of $v$, and   \emph{Sackin's index} \citep{Sackin:72,Shao:90}, which is defined, for arbitrary trees, as the sum of the depths of all leaves in the tree. But many other balance indices have been proposed in the literature, like for instance, for bifurcating trees,  the variance of the depths
of the leaves \citep{Kirk,Sackin:72}, the sum of the reciprocals of
the orders of the rooted subtrees \citep{Shao:90}, and the number of
cherries \citep{cherries}, and, for arbitrary trees,  the \emph{total cophenetic index}~\citep{MRR}  and a generalization of the Colless index \citep{MRR:Plos}.
For more indices, see again Chapter 33 in the book by \citet{fel:04}. All these balance indices depend  only on the topology of the trees, not on the branch lengths or the actual labels on their leaves, although  the balance of time-stamped trees has also been considered by 
\citet{Dearlove}. This abundance of
balance indices is partly motivated by the advice given by \citet{Shao:90} to use more than one such index to quantify the
 balance of a tree, as well as by their use as tools to test stochastic models of evolution~\citep{BF,Egui,Kirk,Mooers97,Shao:90}; other properties of the shapes of phylogenetic trees used in this connection include the distribution of clades' sizes \citep{Zhu11,Zhu15} and the joint distribution of the numbers of  rooted subtrees of different types \citep{WuChoi:16}.

In this paper we propose a new balance index, the \emph{rooted quartet index}. To define it, we associate to each 4-tuple of different leaves of the tree $T$ a value that quantifies the symmetry of the joint evolution of the species they represent,  in the sense that it grows with the number of isomorphisms of the restriction of $T$ to them (the \emph{rooted quartet} they define), and then we add up these values over all 4-tuples of different leaves of $T$.
 The idea behind the definition of this balance index is that a highly symmetrical evolutive process should give rise to symmetrical evolutive histories of many small subsets of taxa. In terms of phylogenetic trees, this leads us to expect that, the most symmetrical a phylogenetic tree is, the most symmetrical will be its restrictions to subsets of leaves of a fixed cardinality. Since the smallest number of leaves yielding enough different tree topologies to allow a meaningful  comparison of their symmetry is 4,  we  assess the balance of a tree by  measuring the symmetry of all its rooted quartets and adding up the results.
 And indeed, in Section \ref{sec:maximin} below we shall find the trees with maximum and minimum values of our rooted quartet index in both the arbitrary and the bifurcating cases, and it will turn out that the minimum value is reached exactly at the combs (see Fig. \ref{fig:exs}.(a)), which are usually considered the least balanced trees, and the maximum value is reached, in the arbitrary case, exactly at the rooted stars (see Fig. \ref{fig:exs}.(b))  and, in the bifurcating case, exactly at the maximally balanced trees (cf. Fig. \ref{fig:maxbal} ), which in both cases are considered the most balanced trees.
 
Besides taking its maximum and minimum values at the expected trees, other important features of our index are that it can be easily computed in linear time and that its mean value and variance can be explicitly computed on any probabilistic model of phylogenetic trees satisfying two natural conditions: independence under relabelings and sampling consistency. This allows us to provide these values for two well-known probabilistic models of bifurcating phylogenetic trees, Ford's $\alpha$-model~\citep{Ford1} and Aldous' $\beta$-model~\citep{Ald1}, which include as specific instances the Yule~\citep{Harding71,Yule} and the uniform~\citep{CS,Rosen78,cherries} models, as well as for Chen-Ford-Winkel's $\alpha$-$\gamma$-model of multifurcating trees~\citep{Ford2}. To our knowledge, this is the first shape index for which closed formulas for the expected value and the variance under the $\alpha$-$\gamma$-model have been provided.}

The rest of this paper is organized as follows. In the next section we introduce the basic notations and facts on phylogenetic trees that will be used in the rest of the paper, and we recall several preliminary results on probabilistic models of phylogenetic trees, proving those results for which we have not been able to find a suitable reference in the literature.
Then, in Section 3, we define our rooted quartet index $\mathit{rQI}$ and we establish its basic properties. In Section 4 we compute its maximum and minimum values, and finally, in Section 5, we compute its expected value and variance under different probabilistic models. This paper is accompanied by the GitHub page~\url{https://github.com/biocom-uib/Quartet_Index} containing a set of Python scripts that perform several computations related to this index.

\section{Preliminaries}

\subsection{Notations and conventions}

In this paper, by an (\emph{unlabeled})  \emph{tree} we mean a  rooted tree without out-degree 1 nodes. As it is usual, we understand such a tree as a directed graph, with its arcs pointing away from the root. A tree is \emph{bifurcating}  when all its internal nodes have out-degree 2; when we want to emphasize that a tree need not be bifurcating, we shall call it \emph{multifurcating}. We shall denote by $L(T)$ the set of leaves of a tree $T$,  by $V_{int}(T)$ its set of internal nodes, and by $\mathrm{child}(u)$ the set of \emph{children} of an internal node $u$, that is, those nodes $v$ such that $(u,v)$ is an arc in $T$.  We shall always consider two isomorphic trees as equal, and we shall denote by $\TT^*_n$ and $\BT^*_n$ the sets of (isomorphism classes of) multifurcating trees and of bifurcating trees with $n$ leaves, respectively. 

A  \emph{phylogenetic tree} on a set $\Sigma$ is a tree with its leaves bijectively labeled in $\Sigma$. An \emph{isomorphism} of phylogenetic trees is an isomorphism of trees that preserves the leaves' labels. To simplify the language, we shall always identify a leaf of a phylogenetic tree  with its label and we shall say that two isomorphic phylogenetic trees ``are the same''. 
We shall denote by $\TT(\Sigma)$ and $\BT(\Sigma)$ the sets of (isomorphism classes of)  multifurcating phylogenetic trees and of bifurcating phylogenetic trees on $\Sigma$, respectively.    If $\Sigma$ and $\Sigma'$ are any two sets of labels of the same cardinality, say $n$, then any bijection $\Sigma\leftrightarrow\Sigma'$ extends in a natural way to bijections $\TT(\Sigma)\leftrightarrow \TT(\Sigma')$ and $\BT(\Sigma)\leftrightarrow \BT(\Sigma')$. When the specific set of labels $\Sigma$ is irrelevant and only its cardinality matters, we shall write  $\TT_n$ and $\BT_n$ (with $n=|\Sigma|$) instead of $\TT(\Sigma)$ and $\BT(\Sigma)$, and we shall identify   $\Sigma$ with the set  $[n]=\{1,2,\ldots,n\}$.
If $|\Sigma|=n$, there exists a forgetful mapping $\pi:\TT(\Sigma)\to \TT_n^*$ that sends every phylogenetic tree $T$ on $\Sigma$ to its underlying unlabeled tree: we shall call  $\pi(T)$ the \emph{shape} of $T$. We shall write $T_1\equiv T_2$ to denote that two phylogenetic trees $T_1,T_2$ (possibly on different sets of labels of the same cardinality) have the same shape.

We shall represent trees and phylogenetic trees by means of their usual Newick format,\footnote{See \url{http://evolution.genetics.washington.edu/phylip/newicktree.html}} although we shall omit the ending mark ``;'' in order not to confuse it in the text with a semicolon punctuation mark. In the case of (unlabeled) trees, we shall denote the leaves with $*$ symbols.

Given two nodes $u,v$ in a tree $T$, we say that $v$ is a  \emph{descendant} of $u$, and also that $u$ is an  \emph{ancestor} of $v$,
when there exists a path from $u$ to $v$ in   $T$; this, of course, includes the case of the stationary path from a node $u$ to itself, and hence, in this context, we shall use the adjective \emph{proper} to mean that $u\neq v$. 
Given a node $v$ of a  tree $T$, the \emph{subtree $T_v$ of $T$ rooted at $v$} is the subgraph of $T$ induced by the  descendants of $v$. We shall denote by $\kappa_T(v)$, or simply by $\kappa(v)$ if $T$ is implicitly understood, the number of leaves of $T_v$.

Given a tree $T$ and a subset $X\subseteq L(T)$, the \emph{restriction} $T(X)$ of $T$ to $X$ is the tree  obtained by first taking the subgraph of $T$ induced by all the ancestors of leaves in $X$ and then {suppressing} its out-degree 1 nodes. By \emph{suppressing} a node $u$ with out-degree 1  we mean that  if  $u$ is the root,  we  remove it together with the arc incident to it, and, if $u$ is not the root and if $u'$ and $u''$ are, respectively, its parent and its child, then we remove the node $u$ and the arcs $(u',u),(u,u'')$ and we replace them by a new arc $(u',u'')$.
For every $Y\subseteq L(T)$, the tree $T(-Y)$ obtained by \emph{removing} $Y$ from $T$ is nothing but the restriction $T(L(T)\setminus Y)$. If $T$ is a phylogenetic tree on a set $\Sigma$ and $X\subseteq \Sigma$,  the restrictions $T(X)$ and $T(-X)$ are phylogenetic trees on $X$ and $\Sigma\setminus X$, respectively.

A \emph{comb} is a bifurcating phylogenetic tree such that all its internal nodes have a leaf child: see Fig. \ref{fig:exs}.(a).
All combs with the same number $n$ of leaves have the same shape, and 
we shall generically denote them (as well as their shape in $\TT_n^*$)  by $K_n$. A \emph{rooted star} is a phylogenetic tree all of whose leaves are children of the root: see Fig. \ref{fig:exs}.(b).  For every set $\Sigma$, there is only one rooted star on $\Sigma$, and  if $|\Sigma|=n$, we shall generically denote it (as well as its shape)  by $S_n$.

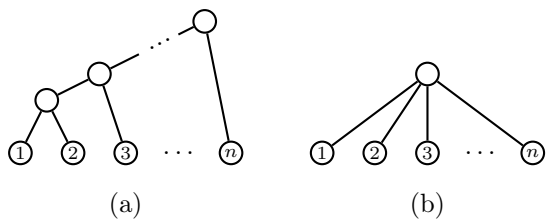
\begin{figure}[htb]
\begin{center}
\begin{tikzpicture}[thick,>=stealth,scale=0.35]
\draw(0,0) node [tre] (1) {};  \etq 1
\draw(2,0) node [tre] (2) {};  \etq 2
\draw(4,0) node [tre] (3) {};  \etq 3
\draw(6,0) node  {$\ldots$};  
\draw(8,0) node [tre] (n) {};  \etq n
\draw(1,2) node[tre] (a) {};
\draw(3,3) node[tre] (b) {};
\draw(7,5) node[tre] (r) {};
\draw(5,4) node  {.};
\draw(5.3,4.15) node  {.};
\draw(5.6,4.3) node  {.};
\draw (a)--(1);
\draw (a)--(2);
\draw (b)--(3);
\draw (b)--(a);
\draw (b)-- (4.5,3.75);
\draw (r)-- (6,4.5);
\draw (r)-- (n);
\draw(4,-2) node {(a)};
\end{tikzpicture}
\qquad
\begin{tikzpicture}[thick,>=stealth,scale=0.35]
\draw(0,0) node [tre] (1) {};  \etq 1
\draw(2,0) node [tre] (2) {};  \etq 2
\draw(4,0) node [tre] (3) {};  \etq 3
\draw(6,0) node  {$\ldots$};  
\draw(8,0) node [tre] (n) {};  \etq n
\draw(4,3) node[tre] (r) {};
\draw  (r)--(1);
\draw  (r)--(2);
\draw  (r)--(3);
\draw  (r)--(n);
\draw(4,-2) node {(b)};
\end{tikzpicture}
\end{center}
\caption{\label{fig:exs} 
(a) A  comb $K_n$. (b) A rooted star $S_n$.}
\end{figure}

Given $k\geq 2$ phylogenetic trees $T_1,\ldots,T_k$, with every $T_i\in \TT(\Sigma_i)$ and the sets of labels $\Sigma_i$ pairwise disjoint,
 their \emph{root join} is   the phylogenetic tree $T_1\star T_2\star\cdots \star T_k$ on $\bigcup_{i=1}^k\Sigma_i$ obtained by connecting the roots of (disjoint copies of) $T_1,\ldots,T_k$ to a new common root $r$; see Fig. \ref{fig:starcons}.  If $T_1,\ldots,T_k$ are unlabeled trees, a similar construction yields a tree $T_1\star\cdots \star T_k$.
 
 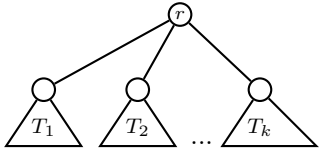
\begin{figure}[htb]
\begin{center}
\begin{tikzpicture}[thick,>=stealth,scale=0.25]
\draw(0,0) node[tre] (z1) {}; 
\draw (z1)--(-2,-3)--(2,-3)--(z1);
\draw(0,-2) node  {\footnotesize $T_1$};
\draw(5,0) node[tre] (z2) {}; 
\draw (z2)--(3,-3)--(7,-3)--(z2);
\draw(5,-2) node  {\footnotesize $T_2$};
\draw(8,-2.8) node {.}; 
\draw(8.4,-2.8) node {.}; 
\draw(8.8,-2.8) node {.}; 
\draw(11.5,0) node[tre] (z3) {}; 
\draw (z3)--(9.5,-3)--(14.5,-3)--(z3);
\draw(11.5,-2) node  {\footnotesize $T_k$};
\draw(7.25,4) node[tre] (r) {}; \etq r
\draw (r)--(z1);
\draw (r)--(z2);
\draw (r)--(z3);
\end{tikzpicture}
\end{center}
\caption{\label{fig:starcons} The root join  $T_1\star \cdots \star T_k$.}
\end{figure}

Let $T$ be a bifurcating  tree. For every $v\in V_{int}(T)$, say with children $v_1,v_2$,  the \emph{balance value} of $v$ is $bal_T(v)=|\kappa(v_1)-\kappa(v_2)|$. An internal node $v$ of  $T$ is \emph{balanced} when $bal_T(v)\leq 1$. So, a node $v$  with children $v_1$ and $v_2$ is  balanced if, and only if, $\{\kappa(v_1),\kappa(v_2)\}=\{\lfloor \kappa(v)/2\rfloor,\lceil \kappa(v)/2\rceil\}$.
We shall say that a bifurcating  tree $T$ is \emph{maximally balanced} when all its internal nodes are balanced. Recursively, a bifurcating  tree is maximally balanced when its root is balanced and the subtrees rooted at the children of the root are both maximally balanced.
This implies that,  for any fixed number $n$ of nodes, there is only one maximally balanced tree in $\BT_n^*$;  see Section 2.1 in \citep{MRR}. {Fig.~\ref{fig:maxbal} depicts the maximally balanced trees with $n=6,7,8$ leaves. When $n$ is a power of 2, the maximally balanced bifurcating  tree with $n$ leaves is the \emph{fully symmetric bifurcating tree}, where, for each internal node, the pair of subtrees rooted at its children  are  isomorphic; see again Fig.~\ref{fig:maxbal} for $n=8$.

\begin{figure}[htb]
\begin{center}
\begin{tikzpicture}[thick,>=stealth,scale=0.22]
\draw(0,0) node [trep] (1) {};  
\draw(2,0) node [trep] (2) {};  
\draw(4,0) node [trep] (3) {};  
\draw(6,0) node [trep] (4) {};  
\draw(8,0) node [trep] (5) {};  
\draw(10,0) node [trep] (6) {};  
\draw(1,2) node[trep] (a) {};
\draw(9,2) node[trep] (c) {};
\draw(2,4) node[trep] (b) {};
\draw(8,4) node[trep] (d) {};
\draw(5,6) node[trep] (r) {};
\draw  (a)--(1);
\draw  (a)--(2);
\draw  (b)--(a);
\draw  (b)--(3);
\draw  (c)--(5);
\draw  (c)--(6);
\draw  (d)--(4);
\draw  (d)--(c);
\draw  (r)--(b);
\draw  (r)--(d);
\end{tikzpicture}
\quad
\begin{tikzpicture}[thick,>=stealth,scale=0.22]
\draw(0,0) node [trep] (1) {};  
\draw(2,0) node [trep] (2) {}; 
\draw(4,0) node [trep] (3) {};  
\draw(6,0) node [trep] (4) {};  
\draw(8,0) node [trep] (5) {};  
\draw(10,0) node [trep] (6) {}; 
\draw(12,0) node [trep] (7) {}; 
\draw(1,2) node[trep] (a) {};
\draw(5,2) node[trep] (b) {};
\draw(3,4) node[trep] (c) {};
\draw(11,2) node[trep] (d) {};
\draw(9,4) node[trep] (e) {};
\draw(6,6) node[trep] (r) {};
\draw  (a)--(1);
\draw  (a)--(2);
\draw  (b)--(3);
\draw  (b)--(4);
\draw  (c)--(a);
\draw  (c)--(b);
\draw  (d)--(6);
\draw  (d)--(7);
\draw  (e)--(d);
\draw  (e)--(5);
\draw  (r)--(c);
\draw  (r)--(e);
\end{tikzpicture}
\quad
\begin{tikzpicture}[thick,>=stealth,scale=0.22]
\draw(0,0) node [trep] (1) {};  
\draw(2,0) node [trep] (2) {};  
\draw(4,0) node [trep] (3) {};  
\draw(6,0) node [trep] (4) {};  
\draw(8,0) node [trep] (5) {};  
\draw(10,0) node [trep] (6) {}; 
\draw(12,0) node [trep] (7) {}; 
\draw(14,0) node [trep] (8) {}; 
\draw(1,2) node[trep] (a) {};
\draw(5,2) node[trep] (b) {};
\draw(3,4) node[trep] (c) {};
\draw(9,2) node[trep] (d) {};
\draw(13,2) node[trep] (e) {};
\draw(11,4) node[trep] (f) {};
\draw(7,6) node[trep] (r) {};
\draw  (a)--(1);
\draw  (a)--(2);
\draw  (b)--(3);
\draw  (b)--(4);
\draw  (c)--(a);
\draw  (c)--(b);
\draw  (d)--(5);
\draw  (d)--(6);
\draw  (e)--(7);
\draw  (e)--(8);
\draw  (f)--(d);
\draw  (f)--(e);
\draw  (r)--(c);
\draw  (r)--(f);
\end{tikzpicture}
\end{center}
\caption{\label{fig:maxbal} 
The shapes of the maximally balanced trees with 6, 7, and 8 leaves. The tree with 8 leaves is fully symmetric.}
\end{figure}
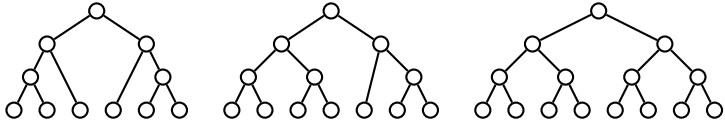
}

\subsection{Probabilistic models}
A \emph{probabilistic model of phylogenetic trees} $P_{n}$, $n \geq 1$,  is a family of probability mappings $P_n:\TT_n\to [0,1]$, each one sending each phylogenetic tree in $\TT_n$ to its probability under this model.  Every such a probabilistic model of phylogenetic trees  $P_{n}$ induces
a \emph{probabilistic model of trees}, that is, a family of probability mappings $P^*_n:\TT^*_n\to [0,1]$, by defining the probability of a tree   as the sum of the probabilities of all phylogenetic trees in $\TT_n$ with that shape:
$$
P^*_n(T^*)=\sum_{T\in \TT_n\atop \pi(T)=T^*} P_n(T).
$$
If $|\Sigma|=n$, then $P_n:\TT_n\to [0,1]$ induces also a probability mapping $P_{\Sigma}$ on $\TT(\Sigma)$ through the bijection $\TT_{\Sigma}\leftrightarrow \TT_n$ induced by a given bijection $\Sigma\leftrightarrow [n]$. 

A \emph{probabilistic model of bifurcating phylogenetic trees} is a probabilistic model  of  phylogenetic trees $P_n$ such that $P_n(T)= 0$ for every $T\in \TT_n\setminus \BT_n$.

A probabilistic model  of phylogenetic trees $P_{n}$ is  \emph{shape invariant} (or \emph{exchangeable}, according to \citet{Ald1}) when, for every $T,T'\in \TT_n$, if $T\equiv T'$, then $P_n(T)=P_n(T')$. In this case, for every $T^*\in \TT_n^*$ and for every $T\in \pi^{-1}(T^*)$,
$$
P_n^*(T^*)=\big|\{T'\in \TT_n: \pi(T')=T^*\}\big|\cdot P_n(T).
$$
Conversely, every probabilistic model of trees $P^*_{n}$ defines a  shape invariant probabilistic model  of phylogenetic trees $P_{n}$ by means of
\begin{equation}
P_n(T)=\frac{P_n^*(\pi(T))}{\big|\{T'\in \TT_n: T'\equiv T\}\big|}.
\label{eq:pstar}
\end{equation}
Notice that if $P_{n}$ is  {shape invariant}, then, for every set of labels $\Sigma$, say, with $|\Sigma|=n$, the probability mapping $P_{\Sigma}:\TT(\Sigma)\to [0,1]$ induced by the mapping $P_{n}$ does not depend on the specific bijection $\Sigma\leftrightarrow [n]$ used to define it.

A probabilistic model  of phylogenetic trees $P_{n}$ is \emph{sampling consistent} \citep{Ald1} (or also \emph{deletion stable}, according to \citet{Ford1}) when, for
every $n\geq 2$, if we choose a tree $T\in \TT_n$ with probability distribution $P_n$ and we remove its leaf $n$, the resulting tree is obtained with  probability distribution $P_{n-1}$; formally, when, for every $n\geq 2$ and for every $T_0\in \TT_{n-1}$,
$$
P_{n-1}(T_0)=\hspace*{-2.5ex}\sum_{T\in \TT_n\atop T(-n)=T_0}\hspace*{-2.5ex} P_n(T).
$$
It is straightforward to prove, by induction on $n-m$ and using that, for every $T\in \TT_n$ and for every $1\leq m< n$, the restriction of $T(-n)$ to $[m]$ is simply $T([m])$, that this condition is equivalent to the following: $P_{n}$ is {sampling consistent} when, for every $n\geq 2$, for every $1\leq m< n$, and for every $T_0\in \TT_m$,
\begin{equation}
P_m(T_0)=\hspace*{-2.5ex}\sum_{T\in \TT_n\atop T([m])=T_0}\hspace*{-2ex} P_n(T).
\label{eq:scphylo}
\end{equation}
It is also easy to prove that if $P_{n}$ is sampling consistent \emph{and} shape invariant, so that the probability of a phylogenetic tree is not affected by permutations of its leaves, then, for every $n\geq 2$, for every $\emptyset\neq X\subsetneq [n]$, say, with $|X|=m$, and for every $T_0\in \TT(X)$,
$$
P_X(T_0)=\hspace*{-2.5ex}\sum_{T\in \TT_n\atop T(X)=T_0}\hspace*{-2.5ex} P_n(T).
$$
(where $P_X$ stands for the probability mapping on $\TT(X)$ induced by $P_{m}$ through any bijection $X\leftrightarrow \big[m\big]$).

A probabilistic model  of  trees $P^*_{n}$ is \emph{sampling consistent} when, for every $n\geq 2$, 
 if we choose a tree $T\in \TT_n^*$ with probability distribution $P^*_n$ and  a leaf $x\in L(T)$ equiprobably and if we remove $x$ from $T$, the resulting tree is obtained with  probability distribution $P_{n-1}^*$:  formally, when, for
every $n\geq 2$ and for every $T_0\in \TT^*_{n-1}$,
$$
P^*_{n-1}(T_0) =\sum_{T\in \TT_n^*}\frac{\big|\{x\in L(T): T(-x)=T_0\}\big|}{n}\cdot P^*_n(T).
$$

We prove now several lemmas on probabilistic models that will be used in Section \ref{sec:QI}. The first lemma provides an extension of equation (\ref{eq:scphylo}) to trees; we include it because we have not been able to find a suitable reference for it in the literature. In it, and henceforth, $\mathcal{P}_k(X)$ denotes the set of all subsets of cardinality $k$ of $X$.

\begin{lemma}\label{lem:sampcons1}
A probabilistic model  of  trees $P^*_{n}$ is sampling consistent if, and only if, for every $n\geq 2$, for every $1\leq m< n$, and for every $T_0\in \TT^*_m$,
$$
P^*_m(T_0) = \sum_{T\in \TT_n^*}\frac{\big|\{X\in \mathcal{P}_m(L(T)): T(X)=T_0\}\big|}{\binom{n}{m}}\cdot P^*_n(T).
$$
\end{lemma}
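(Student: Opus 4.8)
The plan is to prove the two implications separately, with essentially all of the work going into the ``only if'' direction via an induction on $n-m$ combined with a double-counting identity. For the ``if'' direction I would simply specialize the displayed identity to the case $m=n-1$: every $X\in\mathcal{P}_{n-1}(L(T))$ is of the form $L(T)\setminus\{x\}$ for a unique leaf $x$, we have $T(L(T)\setminus\{x\})=T(-x)$, and $\binom{n}{n-1}=n$, so the identity reads
$$
P^*_{n-1}(T_0)=\sum_{T\in\TT_n^*}\frac{\bigl|\{x\in L(T):T(-x)=T_0\}\bigr|}{n}\cdot P^*_n(T)\qquad(n\geq 2,\ T_0\in\TT^*_{n-1}),
$$
which is verbatim the definition of sampling consistency.

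For the ``only if'' direction I would induct on $d=n-m\geq 1$. The base case $d=1$ is again exactly the definition of sampling consistency, now read at level $n=m+1$. For the inductive step, suppose $d\geq 2$ and that the identity holds for the pair $(n,m+1)$; apply the definition of sampling consistency at level $m+1$ to expand $P^*_m(T_0)$ as a sum over $T'\in\TT^*_{m+1}$ of $\tfrac{1}{m+1}\bigl|\{x\in L(T'):T'(-x)=T_0\}\bigr|\cdot P^*_{m+1}(T')$, substitute the inductive expression for each $P^*_{m+1}(T')$ as a sum over $T\in\TT_n^*$, and interchange the two summations. This rewrites $P^*_m(T_0)$ as $\sum_{T\in\TT_n^*}c_T\cdot P^*_n(T)$ with
$$
c_T=\frac{1}{(m+1)\binom{n}{m+1}}\sum_{T'\in\TT^*_{m+1}}\bigl|\{x\in L(T'):T'(-x)=T_0\}\bigr|\cdot\bigl|\{X\in\mathcal{P}_{m+1}(L(T)):T(X)=T'\}\bigr|,
$$
so the whole problem reduces to evaluating $c_T$.

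The crux is this evaluation. Using the transitivity of restriction — $\bigl(T(X)\bigr)(Z)=T(Z)$ for all $Z\subseteq X\subseteq L(T)$, which generalizes the fact used in the text to derive~(\ref{eq:scphylo}) — the double sum in $c_T$ counts exactly the pairs $(X,x)$ with $X\in\mathcal{P}_{m+1}(L(T))$, $x\in X$, and $T(X\setminus\{x\})=T_0$; here one must observe that $\bigl|\{x\in L(T'):T'(-x)=T_0\}\bigr|$ depends only on the isomorphism class of $T'$, since an automorphism of $T'$ permutes the leaves whose removal yields $T_0$, which is what legitimizes passing between the sum over shapes $T'$ and the leaves of an actual restriction $T(X)$. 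Then $(X,x)\mapsto(X\setminus\{x\},x)$ is a bijection from this set of pairs onto $\{(Y,y):Y\in\mathcal{P}_m(L(T)),\ T(Y)=T_0,\ y\in L(T)\setminus Y\}$, whose cardinality is $(n-m)\cdot\bigl|\{Y\in\mathcal{P}_m(L(T)):T(Y)=T_0\}\bigr|$. Substituting this and using the elementary identity $(m+1)\binom{n}{m+1}=(n-m)\binom{n}{m}$, the coefficient collapses to $c_T=\bigl|\{Y\in\mathcal{P}_m(L(T)):T(Y)=T_0\}\bigr|/\binom{n}{m}$, which is exactly the asserted formula for the pair $(n,m)$ and closes the induction. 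I expect the main obstacle to be purely a matter of careful bookkeeping: keeping track of the fact that the restrictions here are of unlabeled trees, and that all the relevant counts are automorphism-invariant, so that the reindexing between sums over shapes and sums over leaf subsets is valid.
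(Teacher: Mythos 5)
Your proposal is correct and follows essentially the same route as the paper's proof: an induction on $n-m$ in which one application of the one-leaf sampling-consistency identity is composed with the inductive hypothesis, the resulting double sum is evaluated by counting pairs $(X,x)$ via the transitivity of restriction, and the coefficient collapses through the identity $(m+1)\binom{n}{m+1}=(n-m)\binom{n}{m}$ (the paper phrases the count in terms of removed leaf sets of size $n-m$ rather than retained sets of size $m$, which is the same computation up to complementation). Your explicit remark that the count $\bigl|\{x\in L(T'):T'(-x)=T_0\}\bigr|$ is an isomorphism invariant is a point the paper leaves implicit, but it does not change the argument.
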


\begin{proof} 
The ``if'' implication is obvious. As far as the ``only if'' implication goes, we prove by  induction on $n-m$ that if $P^*_{n}$ is sampling consistent, then, for every $T_0\in \TT^*_m$,
$$
P^*_m(T_0) = \sum_{T_n\in \TT_n^*}\frac{\big|\{X\in \mathcal{P}_{n-m}(L(T_n)): T_n(-X)=T_0\}\big|}{\binom{n}{m}}\cdot P^*_n(T_n).
$$
The starting case $m=n-1$ is the sampling consistency property. Assume now that this equality holds for $m$ and let $T_0\in \TT^*_{m-1}$. 
Then
$$
\begin{array}{l}
P_{m-1}^*(T_0) \displaystyle =\sum_{T_{m}\in \TT_{m}^*}\frac{\big|\{x\in L(T_{m}): T_m(-x)=T_0\}\big|}{m}\cdot P^*_{m}(T_{m})\\
\qquad\mbox{(by the sampling consistency)}\\[1ex]
 \displaystyle =\sum_{T_{m}\in \TT_{m}^*}\Bigg(\frac{\big|\{x\in L(T_{m}): T_m(-x)=T_0\}\big|}{m}\\[1ex]
\hspace*{2cm}\displaystyle \cdot \sum_{T_n\in \TT_n^*} \frac{\big|\{X\in \mathcal{P}_{n-m}(L(T_{n})): T_{n}(-X)=T_{m}\}\big|}{\binom{n}{m}}\cdot P_n^*(T_n)\Bigg)\\[1ex]
\qquad\mbox{(by the induction hypothesis)}\\[1ex]
 \displaystyle =\sum_{T_{m}\in \TT_{m}^*}\sum_{T_n\in \TT_n^*} \Bigg(\frac{\big|\{x\in L(T_{m}): T_m(-x)=T_0\}\big|}{m}\\[1ex]
\hspace*{2.5cm}\displaystyle \cdot  \frac{\big|\{X\in \mathcal{P}_{n-m}(L(T_{n})): T_{n}(-X)=T_{m}\}\big|}{\binom{n}{m}}\Bigg)\cdot P_n^*(T_n)\\[1ex]
 \displaystyle =\hspace*{-1.5ex}\sum_{T_n\in \TT_n^*}\hspace*{-1.5ex}
\frac{\big|\{(X,x)\in \mathcal{P}_{n-m}(L(T_{n}))\times (L(T_n)\setminus X): (T_{n}(-X))(-x)=T_0\}\big|}{m\cdot\binom{n}{m}}\cdot P_n^*(T_n)\\[2ex]
 \displaystyle =\hspace*{-1.5ex}\sum_{T_n\in \TT_n^*}\hspace*{-1.5ex}
\frac{\big|\{(X,x)\in \mathcal{P}_{n-m}(L(T_{n}))\times (L(T_n)\setminus X): (T_{n}(-(X\cup\{x\}))=T_0\}\big|}{m\cdot\binom{n}{m}}\cdot P_n^*(T_n)\\[2ex]
 \displaystyle =\sum_{T_n\in \TT_n^*}
\frac{(n-m+1) \big|\{Y\in \mathcal{P}_{n-m+1}(L(T_{n})): T_{n}(-Y) =T_0\}\big|}{m\cdot\binom{n}{m}}\cdot P_n^*(T_n)\\[2ex]
 \displaystyle =\sum_{T_n\in \TT_n^*}
\frac{\big|\{Y\in \mathcal{P}_{n-m+1}(L(T_{n})): T_{n}(-Y) =T_0\}\big|}{\binom{n}{m-1}}\cdot P_n^*(T_n)
\end{array}
$$
which proves the inductive step. \qed
\end{proof}

\begin{lemma}\label{lem:prevT}
Let $P_{n}$ be a shape invariant probabilistic model  of phylogenetic trees. For every $T_{n-1},T_{n-1}'\in \TT_{n-1}$, if $T_{n-1}\equiv T_{n-1}'$, then 
$$
\sum_{T_n\in \TT_n\atop T_n(-n)=T_{n-1}}\hspace*{-3ex} P_n(T_n)=\hspace*{-3ex} \sum_{T'_n\in \TT_n\atop T'_n(-n)=T'_{n-1}}\hspace*{-3ex} P_n(T'_n).
$$
\end{lemma}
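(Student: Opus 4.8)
The plan is to exploit that $T_{n-1}$ and $T_{n-1}'$ are phylogenetic trees on the \emph{same} label set $[n-1]$ having the same shape, so they differ only by a relabeling of $[n-1]$, and that this relabeling lifts to a shape-preserving, probability-preserving bijection between the two families of trees in $\TT_n$ that index the two sums.

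First I would fix a permutation $\sigma$ of $[n-1]$ whose induced relabeling bijection $\TT_{n-1}\to\TT_{n-1}$ sends $T_{n-1}$ to $T_{n-1}'$; write $\sigma(T)$ for the image of a tree $T$ under this bijection. Such a $\sigma$ exists precisely because $T_{n-1}\equiv T_{n-1}'$: any isomorphism of the underlying unlabeled trees, read through the two labelings, is such a permutation. Extend $\sigma$ to the permutation $\widetilde\sigma$ of $[n]$ with $\widetilde\sigma(n)=n$, and let $\Phi\colon \TT_n\to\TT_n$ be the relabeling bijection it induces. The key step is the commutation identity
$$
\Phi(T_n)(-n)=\sigma\bigl(T_n(-n)\bigr)\qquad\text{for every }T_n\in\TT_n,
$$
which holds because removing leaf $n$ — taking the subgraph induced by the ancestors of $L(T_n)\setminus\{n\}$ and then suppressing out-degree-$1$ nodes — is a purely topological operation, unaffected by the labels, while the labels of the surviving leaves are transported by $\sigma$ since $\widetilde\sigma$ fixes $n$. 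Consequently, if $T_n(-n)=T_{n-1}$ then $\Phi(T_n)(-n)=\sigma(T_{n-1})=T_{n-1}'$, so $\Phi$ restricts to a bijection
$$
\{T_n\in\TT_n: T_n(-n)=T_{n-1}\}\ \longrightarrow\ \{T_n'\in\TT_n: T_n'(-n)=T_{n-1}'\}.
$$
Moreover $\Phi(T_n)\equiv T_n$, a relabeling never changing the shape, so shape invariance of $P_n$ gives $P_n(\Phi(T_n))=P_n(T_n)$.

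Combining these observations,
$$
\sum_{T_n\in\TT_n\atop T_n(-n)=T_{n-1}}\hspace*{-2ex}P_n(T_n)
=\hspace*{-2ex}\sum_{T_n\in\TT_n\atop T_n(-n)=T_{n-1}}\hspace*{-2ex}P_n\bigl(\Phi(T_n)\bigr)
=\hspace*{-2ex}\sum_{T_n'\in\TT_n\atop T_n'(-n)=T_{n-1}'}\hspace*{-2ex}P_n(T_n'),
$$
where the last equality is just the reindexing of the sum along the bijection $\Phi$. I expect the only genuinely delicate point to be the commutation identity $\Phi(T_n)(-n)=\sigma(T_n(-n))$; I would spell that out carefully (say, by comparing the ancestor sets and the suppressed nodes on both sides), and keep the rest of the argument terse since it is pure bookkeeping.
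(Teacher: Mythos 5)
Your proof is correct, and it follows the same overall strategy as the paper: exhibit a shape-preserving bijection between the two fibers $\{T_n: T_n(-n)=T_{n-1}\}$ and $\{T_n': T_n'(-n)=T_{n-1}'\}$ and then invoke shape invariance of $P_n$. The only difference is how the bijection is realized. The paper observes that every $T_n$ with $T_n(-n)=T_{n-1}$ arises by attaching leaf $n$ to $T_{n-1}$ (as a new child of an internal node, on a subdivided arc, or below a new root) and transports the attachment point through an isomorphism $f:T_{n-1}\to T_{n-1}'$ of unlabeled trees. You instead take a permutation $\sigma$ of $[n-1]$ carrying $T_{n-1}$ to $T_{n-1}'$, extend it to $[n]$ fixing $n$, and use the induced relabeling $\Phi$ of $\TT_n$ together with the commutation identity $\Phi(T_n)(-n)=\sigma(T_n(-n))$. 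Your version is arguably cleaner: it sidesteps the structural claim that every element of the fiber is obtained by one of the three attachment operations (which the paper asserts without proof), at the price of having to verify the commutation of relabeling with leaf removal — which, as you note, is immediate because restriction is a purely topological operation. Both arguments are complete modulo routine verifications.
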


\begin{proof}
Let $T_{n-1}^*=\pi(T_{n-1})=\pi(T'_{n-1})$ and let $f:T_{n-1}\to T'_{n-1}$ be an isomorphism of unlabeled trees, which exists because $T_{n-1}$ and $T_{n-1}'$ are both isomorphic as unlabeled trees to their shape $T_{n-1}^*$.
For every $T\in \TT_{n-1}$, let 
$$
E_n(T)=\{T_n\in \TT_n: T_n(-n)=T\}.
$$
 Each $T_n$ in $E_n(T_{n-1})$ is obtained by adding a leaf $n$ to $T_{n-1}$ as a new child either to an internal node, or to a new node obtained by splitting an arc into two consecutive arcs, or to a new bifurcating root (whose other child would be the old root). 
This entails the existence of a shape preserving bijection 
$$
\Phi: E_n(T_{n-1})\rightarrow E_n(T'_{n-1})
$$
that sends each $T_n\in E_n(T_{n-1})$ to the phylogenetic tree $\Phi(T_{n})$ obtained  by adding the leaf $n$ to $T_{n-1}'$ at the place corresponding through the isomorphism $f$ to the place where it has been added to $T_{n-1}$. Then, since $P_{n}$ is shape invariant,
$$
\sum_{T_n\in E(T_{n-1})}\hspace*{-3ex} P_n(T_n)
=\hspace*{-3ex} \sum_{T_n\in E(T_{n-1})}\hspace*{-3ex} P_n(\Phi(T_{n}))=\hspace*{-3ex}
\sum_{T'_n\in E(T'_{n-1})}\hspace*{-3ex} P_n(T'_n)
$$
%
%
%
%
%
%
as we claimed. \qed
\end{proof}

Next lemma generalizes Cor. 40 of \citep{Ford1}. For the sake of completeness, we provide a direct complete proof of it.

\begin{lemma}\label{thm:Tomas1}
Let $P_{n}$ be a  shape invariant probabilistic model  of phylogenetic trees and let $P^*_{n}$ be the corresponding probabilistic model of  trees.
Then,  $P_{n}$ is sampling consistent if, and only if,  $P^*_{n}$ is sampling consistent.
\end{lemma}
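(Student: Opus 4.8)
I would reduce the two sampling-consistency statements to a single intermediate condition relating $P_n$ and $P^*_n$ at once, namely
$$(\dagger)\qquad P^*_{n-1}(T_0^*)\ =\ \sum_{\substack{T\in\TT_n\\ \pi(T(-n))=T_0^*}} P_n(T)\qquad\text{for all } n\geq 2 \text{ and all } T_0^*\in\TT^*_{n-1},$$
and then show separately that ``$P_n$ sampling consistent'' $\iff(\dagger)$ and ``$P^*_n$ sampling consistent'' $\iff(\dagger)$. Observe that the same condition $(\dagger)$ governs both, which is what makes the equivalence in the lemma essentially automatic once $(\dagger)$ is isolated.

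For the equivalence with $P_n$, I would start from the sampling-consistency identity $P_{n-1}(T_0)=\sum_{T\in\TT_n,\,T(-n)=T_0}P_n(T)$ and sum it over all $T_0\in\TT_{n-1}$ of shape $T_0^*$: the right-hand side collapses to the sum in $(\dagger)$, while, by shape invariance and the definition of $P^*$, the left-hand side becomes $\big|\pi^{-1}(T_0^*)\big|\cdot P_{n-1}(T_0')=P^*_{n-1}(T_0^*)$ for any fixed representative $T_0'$; this proves $(\Rightarrow)$. Conversely, Lemma~\ref{lem:prevT} says that $\sum_{T(-n)=T_0}P_n(T)$ depends only on the shape $T_0^*$ of $T_0$, say it equals $Q(T_0^*)$; summing over the $\big|\pi^{-1}(T_0^*)\big|$ phylogenetic trees of shape $T_0^*$ turns $(\dagger)$ into $\big|\pi^{-1}(T_0^*)\big|\cdot Q(T_0^*)=P^*_{n-1}(T_0^*)$, and dividing out and invoking equation~(\ref{eq:pstar}) gives $Q(T_0^*)=P_{n-1}(T_0)$, which is exactly sampling consistency of $P_n$.

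The core of the argument is a ``bridge identity'' that holds for every shape invariant $P_n$:
$$(\star)\qquad \sum_{T^*\in\TT^*_n} \big|\{x\in L(T^*): T^*(-x)=T_0^*\}\big|\cdot P^*_n(T^*)\ =\ n\cdot\sum_{\substack{T\in\TT_n\\ \pi(T(-n))=T_0^*}} P_n(T).$$
To prove it I would first note that relabeling the leaves by the transposition $(i\ n)$ is a $P_n$-preserving bijection of $\TT_n$ taking $\{T:\pi(T(-n))=T_0^*\}$ onto $\{T:\pi(T(-i))=T_0^*\}$, so $\sum_{\pi(T(-i))=T_0^*}P_n(T)$ does not depend on $i\in[n]$; hence the right-hand side of $(\star)$ equals $\sum_{i\in[n]}\sum_{\pi(T(-i))=T_0^*}P_n(T)=\sum_{T\in\TT_n}\big|\{i\in[n]:\pi(T(-i))=T_0^*\}\big|\cdot P_n(T)$. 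The combinatorial point is then that, for $T$ of shape $T^*$, an isomorphism from $T^*$ to the tree underlying $T$ induces a bijection $L(T^*)\leftrightarrow[n]$ under which $T^*(-x)=T_0^*$ matches $\pi(T(-i))=T_0^*$, so $\big|\{i:\pi(T(-i))=T_0^*\}\big|=\big|\{x\in L(T^*):T^*(-x)=T_0^*\}\big|$; grouping the sum over $T$ by shape and using $\sum_{T\in\pi^{-1}(T^*)}P_n(T)=P^*_n(T^*)$ recovers the left-hand side of $(\star)$. Dividing $(\star)$ by $n$, the resulting statement is literally the sampling-consistency condition for $P^*_n$ written as $(\dagger)$, so ``$P^*_n$ sampling consistent'' $\iff(\dagger)$. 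Chaining the two equivalences proves the lemma.

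I expect the main obstacle to be making $(\star)$ rigorous: carefully passing between ``deleting the label $n$ from a phylogenetic tree'' and ``deleting a leaf node from its shape'', in particular handling the multiplicity $\big|\{x\in L(T^*):T^*(-x)=T_0^*\}\big|$ correctly when $T^*$ has automorphic leaves, and cleanly deducing the relabeling-invariance of $\sum_{\pi(T(-i))=T_0^*}P_n(T)$ from shape invariance alone. The rest is bookkeeping with the definition of $P^*$, equation~(\ref{eq:pstar}), and Lemma~\ref{lem:prevT}.
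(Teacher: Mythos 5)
Your proof is correct and follows essentially the same route as the paper's: the intermediate condition $(\dagger)$ you isolate is exactly the equality the paper's chain of identities passes through, and your bridge identity $(\star)$ packages the same two steps the paper uses (the relabeling argument showing $\sum_{\pi(T(-i))=T_0^*}P_n(T)$ is independent of $i$, followed by regrouping the sum over $\TT_n$ by shape). The only difference is organizational --- you factor the argument into two explicit equivalences rather than running one chain of equalities forwards and then backwards --- which is arguably a cleaner presentation of the same proof.
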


\begin{proof}
Let us prove first the ``only if'' implication.
Let $P_{n}$ be sampling consistent. Then, for every $T_{n-1}^*\in \TT_{n-1}^*$ and for every $\widehat{T}_{n-1}\in\pi^{-1}(T_{n-1}^*)$,
$$
\begin{array}{l}
\displaystyle P_{n-1}^*(T_{n-1}^*)=\big|\{T_{n-1}\in \TT_{n-1}: \pi(T_{n-1})=T_{n-1}^*\}\big|\cdot P_{n-1}(\widehat{T}_{n-1})\\
\qquad \mbox{(by the shape invariance of $P_n$)}\\ 
\quad \displaystyle= \big|\{T_{n-1}\in \TT_{n-1}: \pi(T_{n-1})=T_{n-1}^*\}\big|\cdot\sum_{T_n\in \TT_n\atop T_n(-n)=\widehat{T}_{n-1}}\hspace*{-3ex} P_n(T_n)\\
\qquad \mbox{(by the sampling consistency of $P_n$)}\\
\quad \displaystyle=\sum_{T_{n-1}\in \pi^{-1}(T_{n-1}^*)}\sum_{T_n\in \TT_n\atop T_n(-n)={T}_{n-1}}\hspace*{-3ex} P_n(T_n)\\
\qquad \mbox{(by Lemma \ref{lem:prevT})}\\
\quad \displaystyle=\sum_{T_n\in \TT_n\atop \pi(T_n(-n))=T_{n-1}^*}\hspace*{-3ex} P_n(T_n)=\sum_{T_n\in \TT_n\atop \pi(T_n(-i))=T_{n-1}^*}\hspace*{-3ex} P_n(T_n)\quad\mbox{for every $i=1,\ldots,n$}\\
\qquad \mbox{(by the shape invariance of $P_n$).}\\
\end{array}
$$
Therefore
$$
\begin{array}{l}
\displaystyle n\cdot P_{n-1}^*(T_{n-1}^*)=\sum_{i=1}^n\sum_{T_n\in \TT_n\atop \pi(T_n(-i))=T_{n-1}^*}\hspace*{-3ex} P_n(T_n)\\
\quad \displaystyle =\sum_{T_n\in \TT_n} \big|\{i\in [n]: \pi(T_n(-i))=T_{n-1}^*\}\big|\cdot P_n(T_n)\\
\quad \displaystyle =\sum_{T^*_n\in \TT^*_n} \sum_{T_n\in \pi^{-1}(T_n^*)} \big|\{i\in [n]: \pi(T_n(-i))=T_{n-1}^*\}\big|\cdot P_n(T_n)\\
\quad \displaystyle =\sum_{T^*_n\in \TT^*_n}\Bigg(\big|\{x\in L(T_n^*): T_n^*(-x)=T_{n-1}^*\}\big| \cdot \sum_{T_n\in \pi^{-1}(T_n^*)}\hspace*{-2ex}P_n(T_n)\Bigg)\\
\quad \displaystyle =\sum_{T^*_n\in \TT^*_n} \big|\{x\in L(T_n^*): T_n^*(-x)=T_{n-1}^*\}\big|\cdot P_n^*(T_n^*)
\end{array}
$$
and hence
$$
P_{n-1}^*(T_{n-1}^*)=\sum_{T^*_n\in \TT^*_n} \frac{\big|\{x\in L(T_n^*): T_n^*(-x)=T_{n-1}^*\}\big|}{n}\cdot P_n^*(T_n^*)
$$
as we wanted to prove. 

The proof on the ``if'' implication consists in carefully running backwards the sequence of equalities in the proof of the ``only if'' implication. Indeed, assume that $P^*_{n}$ is sampling consistent and let $T_{n-1}\in \TT_{n-1}$ and $T^*_{n-1}=\pi(T_{n-1})\in \TT^*_{n-1}$. Then
$$
\begin{array}{l}
\displaystyle P_{n-1}^*(T_{n-1}^*)=\sum_{T^*_n\in \TT^*_n} \frac{\big|\{x\in L(T_n^*): T_n^*(-x)=T_{n-1}^*\}\big|}{n}\cdot P_n^*(T_n^*)\\
\qquad \mbox{(by the sampling consistency of $P^*_n$)}\\
\quad \displaystyle =\frac{1}{n}\sum_{T^*_n\in \TT^*_n}\Bigg(\big|\{x\in L(T_n^*): T_n^*(-x)=T_{n-1}^*\}\big| \cdot \sum_{T_n\in \pi^{-1}(T_n^*)}\hspace*{-2ex}P_n(T_n)\Bigg)\\
\quad \displaystyle =\frac{1}{n}\sum_{T^*_n\in \TT^*_n} \sum_{T_n\in \pi^{-1}(T_n^*)} \big|\{i\in [n]: \pi(T_n(-i))=T_{n-1}^*\}\big|\cdot P_n(T_n)\\
\quad \displaystyle =\frac{1}{n}\sum_{T_n\in \TT_n} \big|\{i\in [n]: \pi(T_n(-i))=T_{n-1}^*\}\big|\cdot P_n(T_n)\\
\quad \displaystyle = \frac{1}{n}\sum_{i=1}^n\sum_{T_n\in \TT_n\atop \pi(T_n(-i))=T_{n-1}^*}\hspace*{-3ex} P_n(T_n) =\sum_{T_n\in \TT_n\atop \pi(T_n(-n))=T_{n-1}^*}\hspace*{-3ex} P_n(T_n)\\
\qquad \mbox{(by the shape invariance of $P_n$)}\\
\quad \displaystyle=\sum_{T'_{n-1}\in \pi^{-1}(T_{n-1}^*)}\sum_{T_n\in \TT_n\atop T_n(-n)=T'_{n-1}}\hspace*{-3ex} P_n(T_n)\\
\quad \displaystyle= \big|\{T'_{n-1}\in \TT_{n-1}: \pi(T'_{n-1})=T_{n-1}^*\}\big|\cdot\sum_{T_n\in \TT_n\atop T_n(-n)=T_{n-1}}\hspace*{-3ex} P_n(T_n)\\
\qquad \mbox{(by Lemma \ref{lem:prevT})}
\end{array}
$$
and thus, dividing both sides of this equality by $\big|\{T'_{n-1}\in \TT_{n-1}: \pi(T'_{n-1})=T_{n-1}^*\}\big|$ and using the shape invariance of $P_n$, we obtain
$$
\sum_{T_n\in \TT_n\atop T_n(-n)=T_{n-1}}\hspace*{-3ex} P_n(T_n)=\frac{P_{n-1}^*(T_{n-1}^*)}{\big|\{T'_{n-1}\in \TT_{n-1}: \pi(T'_{n-1})=T_{n-1}^*\}\big|}=P_{n-1}(T_{n-1})
$$
as we wanted to prove.\qed
\end{proof}

{%
In Section \ref{sec:QI} we shall be concerned with three specific parametric probabilistic models  of phylogenetic trees: the $\beta$-model, the $\alpha$-model, and the $\alpha$-$\gamma$-model. To close this section, we provide detailed descriptions of these models and the explicit computation of the probabilities of all  trees with 4 leaves under them.

\subsubsection{Aldous' $\beta$-model.}
The $\beta$-splitting model  $P^A_{\beta,n}$ \citep{Ald1,Ald2}  is a probabilistic model of bifurcating phylogenetic trees that depends on one parameter $\beta\in (-2,\infty)$. Let us recall its definition.  For every $m\geq 2$ and $a=1,\ldots,m-1$, let 
$$
q_{m,\beta}(a)=\frac{1}{a_m(\beta)}\cdot \frac{\Gamma(\beta+a+1)\Gamma(\beta+m-a+1)}{\Gamma(a+1)\Gamma(m-a+1)},
$$
where $\Gamma$ stands for the usual Gamma function defined on $\RR^+$,
$$
\Gamma(x)=\int_0^\infty t^{x-1}e^{-t}\,dt,
$$
and $a_m(\beta)$ is a suitable normalizing constant so that $\sum\limits_{a=1}^{m-1} q_{m,\beta}(a)=1$. Recall (see, for instance, Chapter 6 in \citep{Gamma}) that $\Gamma$ satisfies that $\Gamma(x+1)=x\Gamma(x)$ and that, for every $n\in \NN$, 
$\Gamma(n+1)=n!$.

For every $m\geq 2$ and $a=1,\ldots,\lfloor m/2\rfloor$, let
$$
\widehat{q}_{m,\beta}(a)=\left\{\begin{array}{ll}
q_{m,\beta}(a)+q_{m,\beta}(m-a)=2q_{m,\beta}(a) & \mbox{ if $a\neq m/2$}\\
q_{m,\beta}(a) & \mbox{ if $a= m/2$}
\end{array}\right.
$$
With these notations, the probabilities under this model are computed as follows. Let $n\geq 1$  be a given desired number of leaves:
\begin{enumerate}
\item Start with a tree $T'_1$ consisting of a single node labeled $n$. Set $P'_{\beta,1}(T'_1)=1$.

\item At each step $j=1,\ldots,n-1$, the current tree $T_j'$ contains leaves with labels greater than 1. Then, choose equiprobably a leaf in $T_j'$ with a label  $m$  greater than $1$, choose a number $a=1,\ldots,\lfloor m/2\rfloor$ with probability distribution $\widehat{q}_{m,\beta}(a)$, and split this leaf into a cherry with a child labeled $a$ and a child labeled $m-a$. The resulting tree $T_{j+1}'$ has then probability
$$
P'_{\beta,j+1}(T'_{j+1})=\frac{\widehat{q}_{m,\beta}(a)}{|\{\mbox{\footnotesize leaves in $T_j'$ labeled $>1$}\}|}\cdot P'_{\beta,j}(T'_{j}).
$$

\item When the desired number $n$ of leaves is reached, all leaves are labeled 1 and $T_n'$ can be understood as a tree. Then, the probability of a given tree is defined as the sum of the probabilities of all ways of obtaining it by means of the previous procedure; that is, for every $T_n^*\in \BT_n^*$, its probability under the $\beta$-model is 
$$
P^{A,*}_{\beta,n}(T_n^*)=\sum_{T'_n = T_n^*} P'_{\beta,n}(T'_n).
$$

\item Finally, the probability $P^A_{\beta,n}(T)$ of any phylogenetic tree $T\in \BT_n$ is obtained from the probability under $P^{A,*}_{\beta,n}$ of its shape by means of equation (\ref{eq:pstar}):
$$
P^A_{\beta,n}(T)=\frac{P^{A,*}_{\beta,n}(\pi(T))}{\big|\{T'\in \BT_n: T'\equiv T\}\big|}.
$$
\end{enumerate}

The last step in the definition of  $P^A_{\beta,n}$ makes it shape invariant by construction, and \citet{Ald1} proves that it is sampling consistent. Hence, by Lemma \ref{thm:Tomas1}, the $\beta$-model of trees $P^{A,*}_{\beta,n}$ is also sampling consistent. This $\beta$-model  includes as specific cases the Yule model  \citep{Harding71,Yule} (when $\beta=0$) and the uniform model \citep{CS,Pinelis} (when $\beta=-3/2$).

In Section \ref{sec:QI} we shall need to know the probability $P^{A,*}_{\beta,4}$  of the maximally balanced tree with 4 leaves $((*,*),(*,*))$, which we denote in this paper by $Q_3^*$ (see Figure \ref{fig:5shapes} below). We compute this probability in the following lemma, taking the opportunity to provide a detailed example of how this model associates probabilities to trees through their construction.

\begin{lemma}\label{lem:AldB4}
For every $\beta\in (-2,\infty)$,
$$
P^{A,*}_{\beta,4}(Q_3^*)=\frac{3\beta+6}{7\beta+18}.
$$
\end{lemma}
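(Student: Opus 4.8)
The plan is to exploit the fact that, when $n=4$, the very first splitting step of the $\beta$-model already determines the shape of the tree produced. Running the construction: $T_1'$ is the single leaf labelled $4$, and at step $1$ it is split into a cherry whose two children carry labels $a$ and $4-a$, with $a\in\{1,2\}$ drawn according to $\widehat{q}_{4,\beta}$. If $a=2$, the cherry has both children labelled $2$; at the next two steps each of these $2$'s is forced to split into two leaves labelled $1$, so every construction history starting with $a=2$ ends at the shape $Q_3^*=((*,*),(*,*))$ — and conversely the child labelled $3$ produced when $a=1$ is forced to split into $1$ and $2$ and then the $2$ into $1$ and $1$, yielding the comb $K_4\not\equiv Q_3^*$. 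Note that the only freedom left inside the $a=2$ branch (which of the two equally labelled leaves to split first at step $2$) merely interchanges symmetric positions, so the transition probabilities $\widehat{q}_{2,\beta}(1)/2=1/2$ there simply add up to $1$. Consequently the only $T_4'$ of shape $Q_3^*$ arises from the histories with $a=2$ at step $1$, and summing their probabilities gives
$$
P^{A,*}_{\beta,4}(Q_3^*)=\widehat{q}_{4,\beta}(2)=q_{4,\beta}(2).
$$
I would display the full chain $T_1'\to T_2'\to T_3'\to T_4'$ with the probability attached to each tree, since the lemma is also intended as a worked illustration of the model.

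It then remains to compute $q_{4,\beta}(2)$. By definition, with $\Gamma(3)=2$,
$$
q_{4,\beta}(2)=\frac{1}{a_4(\beta)}\cdot\frac{\Gamma(\beta+3)^2}{\Gamma(3)^2}=\frac{\Gamma(\beta+3)^2/4}{a_4(\beta)},\qquad a_4(\beta)=\sum_{a=1}^{3}\frac{\Gamma(\beta+a+1)\Gamma(\beta+5-a)}{\Gamma(a+1)\Gamma(5-a)}=\frac{\Gamma(\beta+2)\Gamma(\beta+4)}{3}+\frac{\Gamma(\beta+3)^2}{4}.
$$
Applying $\Gamma(x+1)=x\Gamma(x)$ to write $\Gamma(\beta+3)=(\beta+2)\Gamma(\beta+2)$ and $\Gamma(\beta+4)=(\beta+3)(\beta+2)\Gamma(\beta+2)$, the common factor $\Gamma(\beta+2)^2$ cancels and
$$
a_4(\beta)=\Gamma(\beta+2)^2(\beta+2)\Big(\tfrac{\beta+3}{3}+\tfrac{\beta+2}{4}\Big)=\Gamma(\beta+2)^2(\beta+2)\cdot\frac{7\beta+18}{12},
$$
whence
$$
q_{4,\beta}(2)=\frac{(\beta+2)^2/4}{(\beta+2)(7\beta+18)/12}=\frac{3(\beta+2)}{7\beta+18}=\frac{3\beta+6}{7\beta+18},
$$
as claimed. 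As a sanity check one computes in the same way $P^{A,*}_{\beta,4}(K_4)=2q_{4,\beta}(1)=(4\beta+12)/(7\beta+18)$, and the two probabilities sum to $1$.

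There is no genuinely hard step in this argument. The two points that call for care are: confirming that the ``sum over all ways of obtaining the tree'' in the definition of $P^{A,*}_{\beta,n}$ does not introduce extra contributions — it does not, because every choice left undetermined along the $a=2$ branch is between symmetric positions and the associated probabilities partition unity — and carrying out the Gamma-function bookkeeping that cancels $\Gamma(\beta+2)^2$ and turns $q_{4,\beta}(2)$ into a rational function of $\beta$.
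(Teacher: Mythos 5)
Your proposal is correct and follows essentially the same route as the paper's proof: you trace the $\beta$-model construction from the single node labeled $4$, observe that only the split $a=2$ leads to $Q_3^*$ and that the two symmetric orderings of the subsequent forced splits contribute probabilities summing to $\widehat{q}_{4,\beta}(2)=q_{4,\beta}(2)$, and then compute the normalizing constant $a_4(\beta)$ and simplify via $\Gamma(x+1)=x\Gamma(x)$ exactly as the paper does. The only addition is your sanity check that $P^{A,*}_{\beta,4}(Q_3^*)+P^{A,*}_{\beta,4}(K_4)=1$, which is a harmless (and correct) bonus.
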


\begin{proof}
We start with a single node labeled 4. In order to obtain a maximally balanced tree $((1,1),(1,1))$ using the previous procedure, in the first step we must split this node into a cherry with both leaves labeled 2. The probability of choosing this split is
$$
\widehat{q}_{4,\beta}(2)=q_{4,\beta}(2)=\frac{1}{a_4(\beta)}\cdot \frac{\Gamma(\beta+3)\Gamma(\beta+3)}{\Gamma(3)\Gamma(3)}.
$$
Let us compute the normalizing constant $a_4(\beta)$: since
$$
\begin{array}{l}
\displaystyle q_{4,\beta}(1)=q_{4,\beta}(3)=\frac{1}{a_4(\beta)}\cdot \frac{\Gamma(\beta+2)\Gamma(\beta+4)}{\Gamma(2)\Gamma(4)}\\[2ex]
\displaystyle q_{4,\beta}(2)=\frac{1}{a_4(\beta)}\cdot \frac{\Gamma(\beta+3)\Gamma(\beta+3)}{\Gamma(3)\Gamma(3)}
\end{array}
$$
imposing that $q_{4,\beta}(1)+q_{4,\beta}(2)+q_{4,\beta}(3)=1$ we obtain
$$
a_4(\beta)=\frac{2\Gamma(\beta+2)\Gamma(\beta+4)}{6}+\frac{\Gamma(\beta+3)^2}{4}=\frac{4\Gamma(\beta+2)\Gamma(\beta+4)+3\Gamma(\beta+3)^2}{12}.
$$
Therefore,
$$
q_{4,\beta}(2)=\frac{3\Gamma(\beta+3)^2}{4\Gamma(\beta+2)\Gamma(\beta+4)+3\Gamma(\beta+3)^2}.
$$

In the second step, we choose one of the leaves with probability $1/2$ and we split it into a cherry $(1,1)$. Since there is only one way of splitting a leaf labeled 2, $q_{2,\beta}(1)=1$. So, the probability of the tree obtained in this step is
$$
\frac{1}{2}q_{4,\beta}(2).
$$
Then, in the third step, we are forced to choose the other leaf labeled 2 and to split it  into a cherry $(1,1)$. We obtain a maximally balanced tree with all its leaves labeled 1 and its probability is still $q_{4,\beta}(2)/2$.

Now, there are two ways of obtaining the tree $((1,1),(1,1))$ with this construction, depending on which leaf of the cherry $(2,2)$ we choose to split first. So, the probability of the tree $Q_3^*$ is
$$
P^{A,*}_{\beta,4}(Q_3^*)=2\cdot \frac{1}{2}q_{4,\beta}(2)=\frac{3\Gamma(\beta+3)^2}{4\Gamma(\beta+2)\Gamma(\beta+4)+3\Gamma(\beta+3)^2}.
$$
Finally, using that $\Gamma(x+1)=x\Gamma(x)$, we have that
$$
\begin{array}{l}
\displaystyle \frac{3\Gamma(\beta+3)^2}{4\Gamma(\beta+2)\Gamma(\beta+4)+3\Gamma(\beta+3)^2}\\[2ex]
\qquad \displaystyle=
\frac{3(\beta+2)^2\Gamma(\beta+2)^2}{4(\beta+3)(\beta+2)\Gamma(\beta+2)^2+3(\beta+2)^2\Gamma(\beta+2)^2}=
\frac{3\beta+6}{7\beta+18}
\end{array}
$$
as we claimed.  
\qed
\end{proof}

\subsubsection{Ford's $\alpha$-model.}\label{sec:Ford} The $\alpha$-model $P^F_{\alpha,n}$ introduced by \citet{Ford1} is another probabilistic model of bifurcating phylogenetic trees that depends on one parameter $\alpha\in [0,1]$. It  is defined as follows. Let $n\geq 1$ be any desired number  of leaves:

\begin{enumerate}
\item Start with the tree $T_1$ consisting of a single node labeled 1. Set $P'_{\alpha,1}(T_1)=1$.

\item For every $m=1,\ldots,n-1$, let $T_{m+1}\in \BT_{m+1}$ be obtained by adding a new leaf labeled $m+1$ to $T_m$. Then:
\begin{itemize}
\item If  the new leaf is added to an arc  ending in  a leaf, 
$$
P'_{\alpha,m+1}(T_{m+1})=\frac{1 - \alpha}{m-\alpha}\cdot P'_{\alpha,m}(T_{m}).
$$

\item If  the new leaf is added to an arc   ending in  an internal node, 
$$
P'_{\alpha,m+1}(T_{m+1})=\frac{\alpha}{m-\alpha}\cdot P'_{\alpha,m}(T_{m}).
$$

\item If  the new leaf is added to a new root,
$$
P'_{\alpha,m+1}(T_{m+1})=\frac{\alpha}{m-\alpha}\cdot P'_{\alpha,m}(T_{m}).
$$
\end{itemize}

\item When the desired number $n$ of leaves is reached, the probability of a given tree is defined as the sum of the probabilities of all phylogenetic trees with that shape; that is, for every $T_n^*\in \BT_n^*$, its probability under the $\alpha$-model is 
$$
P^{F,*}_{\alpha,n}(T_n^*)=\sum_{\pi(T'_n) = T_n^*} P'_{\alpha,n}(T'_n).
$$

\item Finally, the probability $P^F_{\alpha,n}(T)$ of any phylogenetic tree $T\in \BT_n$ is obtained from the probability under $P^{F,*}_{\alpha,n}$ of its shape by means of equation (\ref{eq:pstar}):
$$
P^F_{\alpha,n}(T)=\frac{P^{F,*}_{\alpha,n}(\pi(T))}{\big|\{T'\in \BT_n: T'\equiv T\}\big|}.
$$
\end{enumerate} 
The $\alpha$-model is again shape invariant by construction and sampling consistent  by Prop. 42 of \citep{Ford1}, and it also includes as specific cases the Yule model  (when $\alpha=0$) and the uniform model (when $\alpha=1/2$).

In Section \ref{sec:QI}, we shall also need to know $P^{F,*}_{\alpha,4}(Q_3^*)$, where we recall that $Q_3^*$ stands for  the fully symmetric tree with 4 leaves.  This value was provided by \citet{Ford1} in Section 7, Fig. 20, as well as by \citet{MMR}. In the following lemma we compute it directly from the model's  definition to illustrate also in this case how the probability of a tree is obtained through its construction.

\begin{lemma}\label{lem:FordA4}
For every $\alpha\in [0,1]$,
$$
P_{\alpha,4}^{F,*}(Q_3^*)=\frac{1-\alpha}{3-\alpha}.
$$
\end{lemma}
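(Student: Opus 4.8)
The plan is to compute $P^{F,*}_{\alpha,4}(Q_3^*)$ directly from the recursive definition of the $\alpha$-model, in the same spirit as the computation of $P^{A,*}_{\beta,4}(Q_3^*)$ in Lemma~\ref{lem:AldB4}. Recall that $Q_3^*=((*,*),(*,*))$ and that every phylogenetic tree of this shape is produced by the model through a unique sequence $T_1\to T_2\to T_3\to T_4$, where $T_j\in\BT_j$ is obtained from $T_{j-1}$ by adding leaf $j$. Since there is only one phylogenetic tree on two leaves, $T_2=(1,2)$ always (with probability $1$), and since $\BT_3^*$ has only one element, every $T_3$ is a (labelled) comb $K_3$; in particular $P^{F,*}_{\alpha,3}(K_3)=1$.

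The crux is to analyse the single non-trivial growth step $T_3\to T_4$. Fix a $3$-leaf comb $T_3$, written in Newick form as $(a,(b,c))$ with $a$ the leaf adjacent to the root. There are $2\cdot 3-1=5$ places where leaf $4$ can be attached: the three pendant arcs, the one arc ending in a (non-root) internal node, and a new root. A direct case check shows that exactly one of these produces a tree of shape $Q_3^*$, namely attaching $4$ to the pendant arc ending in $a$, which yields $((a,4),(b,c))\equiv Q_3^*$; the other four attachments each produce a comb $K_4$. Crucially, this winning attachment is of pendant-arc type, so by the model's definition the transition $T_3\to T_4$ to a $Q_3^*$-shaped tree has probability $\frac{1-\alpha}{3-\alpha}$, independently of which labelled comb $T_3$ we started from.

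It remains to add up. Writing $P^{F,*}_{\alpha,4}(Q_3^*)=\sum_{\pi(T_4')=Q_3^*}P'_{\alpha,4}(T_4')$ and using $P'_{\alpha,4}(T_4')=P'_{\alpha,3}(T_4'(-4))\cdot\frac{1-\alpha}{3-\alpha}$ together with the fact that $T_4'\mapsto T_4'(-4)$ is a bijection from the three labelled trees of shape $Q_3^*$ onto $\BT_3$ (each element of $\BT_3$ admitting exactly one extension to a $Q_3^*$-shaped tree), we obtain
$$
P^{F,*}_{\alpha,4}(Q_3^*)=\frac{1-\alpha}{3-\alpha}\sum_{T_3\in\BT_3}P'_{\alpha,3}(T_3)=\frac{1-\alpha}{3-\alpha}\cdot P^{F,*}_{\alpha,3}(K_3)=\frac{1-\alpha}{3-\alpha}.
$$
Alternatively, and perhaps more in the style of Lemma~\ref{lem:AldB4}, one can bypass the bijection and simply enumerate the three labelled $Q_3^*$-trees $((1,2),(3,4))$, $((1,3),(2,4))$, $((1,4),(2,3))$, multiply the transition weights along each one's construction (the first contributes $\frac{\alpha}{2-\alpha}\cdot\frac{1-\alpha}{3-\alpha}$, since its $T_3$ is built by a ``new root'' step, and the other two contribute $\frac{1-\alpha}{2-\alpha}\cdot\frac{1-\alpha}{3-\alpha}$ each), and check that the sum $\frac{1-\alpha}{3-\alpha}\big(\frac{\alpha}{2-\alpha}+\frac{2(1-\alpha)}{2-\alpha}\big)$ collapses to $\frac{1-\alpha}{3-\alpha}$.

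I expect the only genuine difficulty to be the combinatorial bookkeeping in the middle step: enumerating the five ways of growing a $3$-leaf comb and verifying that precisely the one root-adjacent pendant extension gives $Q_3^*$; one should also keep in mind that the first two steps are forced or constrained (a $2$-leaf tree is just a cherry, and a $3$-leaf comb has no arc ending in a non-root internal node), so the only place where a non-comb can first appear is at $m=3$. Everything else is a routine product of the prescribed weights $\frac{1-\alpha}{m-\alpha}$ and $\frac{\alpha}{m-\alpha}$.
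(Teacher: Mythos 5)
Your proposal is correct and follows essentially the same route as the paper's proof: both enumerate the three labelled combs in $\BT_3$ (with probabilities $\frac{1-\alpha}{2-\alpha},\frac{1-\alpha}{2-\alpha},\frac{\alpha}{2-\alpha}$), observe that each extends to a tree of shape $Q_3^*$ only by attaching leaf $4$ to the pendant arc of the root's leaf child, a transition of weight $\frac{1-\alpha}{3-\alpha}$, and conclude via $\sum_{i}P'_{\alpha,3}(K^{(i)})=1$; your second, fully enumerated variant is exactly the paper's computation. The only blemish is the closing parenthetical claiming a $3$-leaf comb has no arc ending in a non-root internal node --- it has exactly one, which you correctly counted among the five attachment sites earlier, so this slip is harmless.
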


\begin{proof}
To compute this probability, we shall already start with the cherry $T_2=(1,2)$ in $\BT_2$, which has probability $P'_{\alpha,2}(T_2)= 1$. Every tree in $\BT_3$ is obtained by adding a leaf labeled 3 to $T_2$. These trees are described in Figure \ref{fig:3btrees}.
Their probabilities are:
\begin{itemize}
\item $K^{(1)}$ and $K^{(2)}$ are obtained by adding the leaf 3 to an arc  in $T_2$ ending in a leaf. Their probability is then
$$
P'_{\alpha,3}(K^{(1)})=P'_{\alpha,3}(K^{(2)})=\frac{1 - \alpha}{2-\alpha}.
$$

\item $K^{(3)}$ is obtained by adding the leaf 3 to a new root. Its probability is then
$$
P'_{\alpha,3}(K^{(3)})=\frac{\alpha}{2-\alpha}.
$$
\end{itemize}

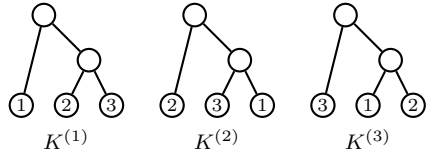
\begin{figure}[htb]
\begin{center}
\begin{tikzpicture}[thick,>=stealth,scale=0.3]
\draw(0,0) node [tre] (1)  {\scriptsize $1$};  
\draw(2,0) node [tre] (2)  {\scriptsize $2$};  
\draw(4,0) node [tre] (3)  {\scriptsize $3$};  
\draw(1,4) node [tre] (r) {};
\draw(3,2) node [tre] (a) {};  
\draw (r)--(a) ; 
\draw (r)--(1) ; 
\draw (a)--(2) ; 
\draw (a)--(3) ; 
\draw(2,-1.5) node  {\footnotesize $K^{(1)}$};  
\end{tikzpicture}
\quad
\begin{tikzpicture}[thick,>=stealth,scale=0.3]
\draw(0,0) node [tre] (1)  {\scriptsize $2$};  
\draw(2,0) node [tre] (2)  {\scriptsize $3$};  
\draw(4,0) node [tre] (3)  {\scriptsize $1$};  
\draw(1,4) node [tre] (r) {};
\draw(3,2) node [tre] (a) {};  
\draw (r)--(a) ; 
\draw (r)--(1) ; 
\draw (a)--(2) ; 
\draw (a)--(3) ; 
\draw(2,-1.5) node  {\footnotesize $K^{(2)}$};  
\end{tikzpicture}
\quad
\begin{tikzpicture}[thick,>=stealth,scale=0.3]
\draw(0,0) node [tre] (1)  {\scriptsize $3$};  
\draw(2,0) node [tre] (2)  {\scriptsize $1$};  
\draw(4,0) node [tre] (3)  {\scriptsize $2$};  
\draw(1,4) node [tre] (r) {};
\draw(3,2) node [tre] (a) {};  
\draw (r)--(a) ; 
\draw (r)--(1) ; 
\draw (a)--(2) ; 
\draw (a)--(3) ; 
\draw(2,-1.5) node  {\footnotesize $K^{(3)}$};  
\end{tikzpicture}
\end{center}
\caption{\label{fig:3btrees} The phylogenetic  trees in $\BT_3$.}
\end{figure}

Now, there are three phylogenetic trees in $\BT_4$ of shape  $Q_3^*$, depicted in Figure \ref{fig:4btrees}. Each one of them is obtained from the corresponding phylogenetic tree $K^{(i)}$ by adding the leaf 4 to the arc from the root to its only leaf child.  Their  probability is, then,
$$
P'_{\alpha,4}(Q_3^{(i)})=\frac{1 - \alpha}{3-\alpha}\cdot P'_{\alpha,3}(K^{(i)})
$$
and hence, since $\sum_{i=1}^3 P'_{\alpha,3}(K^{(i)})=1$,
$$
P^*_{\alpha,4}(Q_3^{*})=\sum_{i=1}^3  P'_{\alpha,4}(Q_3^{(i)})=\frac{1 - \alpha}{3-\alpha}
$$ 
as we claimed.   \qed
\end{proof}

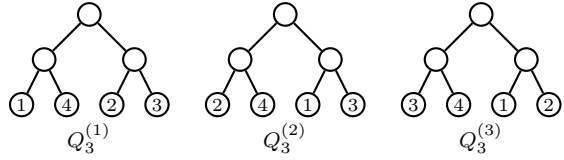
\begin{figure}[htb]
\begin{center}
\begin{tikzpicture}[thick,>=stealth,scale=0.3]
\draw(0,0) node [tre] (1)  {\scriptsize $1$};  
\draw(2,0) node [tre] (2)  {\scriptsize $4$};  
\draw(4,0) node [tre] (3)  {\scriptsize $2$};  
\draw(6,0) node [tre] (4)  {\scriptsize $3$};  
\draw(1,2) node [tre] (a) {};
\draw(5,2) node [tre] (b) {};  
\draw(3,4) node [tre] (r) {};  
\draw (r)--(a) ; 
\draw (r)--(b) ; 
\draw (a)--(1) ; 
\draw (a)--(2) ; 
\draw (b)--(3) ; 
\draw (b)--(4) ; 
\draw(3,-1.5) node  {\footnotesize $Q_3^{(1)}$};  
\end{tikzpicture}
\quad
\begin{tikzpicture}[thick,>=stealth,scale=0.3]
\draw(0,0) node [tre] (1)  {\scriptsize $2$};  
\draw(2,0) node [tre] (2)  {\scriptsize $4$};  
\draw(4,0) node [tre] (3)  {\scriptsize $1$};  
\draw(6,0) node [tre] (4)  {\scriptsize $3$};  
\draw(1,2) node [tre] (a) {};
\draw(5,2) node [tre] (b) {};  
\draw(3,4) node [tre] (r) {};  
\draw (r)--(a) ; 
\draw (r)--(b) ; 
\draw (a)--(1) ; 
\draw (a)--(2) ; 
\draw (b)--(3) ; 
\draw (b)--(4) ; 
\draw(3,-1.5) node  {\footnotesize $Q_3^{(2)}$};  
\end{tikzpicture}
\quad
\begin{tikzpicture}[thick,>=stealth,scale=0.3]
\draw(0,0) node [tre] (1)  {\scriptsize $3$};  
\draw(2,0) node [tre] (2)  {\scriptsize $4$};  
\draw(4,0) node [tre] (3)  {\scriptsize $1$};  
\draw(6,0) node [tre] (4)  {\scriptsize $2$};  
\draw(1,2) node [tre] (a) {};
\draw(5,2) node [tre] (b) {};  
\draw(3,4) node [tre] (r) {};  
\draw (r)--(a) ; 
\draw (r)--(b) ; 
\draw (a)--(1) ; 
\draw (a)--(2) ; 
\draw (b)--(3) ; 
\draw (b)--(4) ; 
\draw(3,-1.5) node  {\footnotesize $Q_3^{(3)}$};  
\end{tikzpicture}
\end{center}
\caption{\label{fig:4btrees} The fully symmetric phylogenetic  trees in $\BT_4$.}
\end{figure}

\subsubsection{Chen-Ford-Winkel's $\alpha$-$\gamma$-model.}
The $\alpha$-$\gamma$-model $P_{\alpha,\gamma,n}$, defined by \citet{Ford2}, is a probabilistic model of multifurcating phylogenetic trees that depends on two parameters $\alpha,\gamma$ with $0 \leq \gamma \leq \alpha \leq 1$.  It generalizes Ford's $\alpha$-model by allowing in the recursive construction of trees to add new leaves not only to arcs or to a new root, but also to internal nodes. More specifically,  the probability $P^{*}_{\alpha,\gamma,n}(T^*)$ of a tree $T^*\in \TT_n^*$ under this model is defined as follows. Let $n\geq 1$ be any desired number  of leaves:

\begin{enumerate}
\item Start with the tree $T_1\in \TT_1$ consisting of a single node labeled 1. Set $P_{\alpha,\gamma,1}(T_1)=1$.

\item For every $m=1,\ldots,n-1$, let $T_{m+1}\in \TT_{m+1}$ be obtained by adding a new leaf labeled $m+1$ to $T_m$. Then:
\begin{itemize}
\item If  the new leaf is added to an arc $e$ ending in a leaf, 
$$
P_{\alpha,\gamma,m+1}(T_{m+1})=\frac{1 - \alpha}{m-\alpha}\cdot P_{\alpha,\gamma,m}(T_{m}).
$$

\item If  the new leaf is added to an arc $e$ ending in an internal node, 
$$
P_{\alpha,\gamma,m+1}(T_{m+1})=\frac{\gamma}{m-\alpha}\cdot P_{\alpha,\gamma,m}(T_{m}).
$$

\item If  the new leaf is added to a new root,  
$$
P_{\alpha,\gamma,m+1}(T_{m+1})=\frac{\gamma}{m-\alpha}\cdot P_{\alpha,\gamma,m}(T_{m}).
$$

\item If  the new leaf is added as a child of an internal node $u$,  
$$
P_{\alpha,\gamma,m+1}(T_{m+1})=\frac{(\deg_{out}(u)-1)\alpha-\gamma}{m-\alpha}\cdot P_{\alpha,\gamma,m}(T_{m}).
$$
\end{itemize}

\item When the desired number $n$ of leaves is reached, the probability $P_{\alpha,\gamma,n}(T_n)$ of the resulting tree $T_n$ is the one obtained in this way. Then, the probability $P^{*}_{\alpha,\gamma,n}(T^*)$ of a given tree $T^*\in \TT^*_n$ is defined as the sum of the probabilities of all phylogenetic trees with that shape:
$$
P^{*}_{\alpha,\gamma,n}(T^*)=\sum_{\pi(T_n) = T^*} P_{\alpha,\gamma,n}(T_n).
$$
\end{enumerate} 
Notice that if $\alpha=\gamma$, this process only produces bifurcating trees and then, for every $T_n\in \BT_n$, $P_{\alpha,\alpha,n}(T_{n})=P'_{\alpha,n}(T_n)$ ---the provisional probability of $T_n$ defined by the recursive application of step 2 in the definition of the $\alpha$-model--- and, for every $T^*_n\in \BT^*_n$, $P^*_{\alpha,\alpha,n}(T^*_{n})=P^{F,*}_{\alpha,n}(T^*_n)$.

It turns out  that  $P_{\alpha,\gamma,n}$ is not shape invariant in general (see Prop. 1.(b) of \citep{Ford2}), but   the corresponding model for trees $P^*_{\alpha,\gamma,n}$  is  sampling consistent by Thm. 2 of \textsl{loc.\ cit}. 

 \begin{figure}[htb]
\begin{center}
\begin{tikzpicture}[thick,>=stealth,scale=0.25]
\draw(0,0) node [trep] (1) {};   
\draw(2,0) node [trep] (2) {};  
\draw(4,0) node [trep] (3) {};  
\draw(6,0) node [trep] (4) {};  
\draw(3,4) node[trep] (r) {};
\draw(4,2.5) node[trep] (a) {};
\draw(5,1) node[trep] (b) {};
\draw  (r)--(1);
\draw  (r)--(a);
\draw  (a)--(2);
\draw  (a)--(b);
\draw  (b)--(3);
\draw  (b)--(4);
\draw (3,-2) node {\footnotesize $Q_0^*$};
\end{tikzpicture}\quad
\begin{tikzpicture}[thick,>=stealth,scale=0.25]
\draw(0,0) node [trep] (1) {};   
\draw(2,0) node [trep] (2) {};  
\draw(4,0) node [trep] (3) {};  
\draw(6,0) node [trep] (4) {};  
\draw(1,2) node[trep] (a) {};
\draw(3,4) node[trep] (r) {};
\draw  (r)--(a);
\draw  (a)--(1);
\draw  (a)--(2);
\draw  (r)--(3);
\draw  (r)--(4);
\draw (3,-2) node {\footnotesize $Q_1^*$};
\end{tikzpicture}
\quad
\begin{tikzpicture}[thick,>=stealth,scale=0.25]
\draw(0,0) node [trep] (1) {};   
\draw(2,0) node [trep] (2) {};  
\draw(4,0) node [trep] (3) {};  
\draw(6,0) node [trep] (4) {};  
\draw(2,2) node[trep] (a) {};
\draw(3,4) node[trep] (r) {};
\draw  (r)--(a);
\draw  (a)--(1);
\draw  (a)--(2);
\draw  (a)--(3);
\draw  (r)--(4);
\draw (3,-2) node {\footnotesize $Q_2^*$};
\end{tikzpicture}
\quad
\begin{tikzpicture}[thick,>=stealth,scale=0.25]
\draw(0,0) node [trep] (1) {};   
\draw(2,0) node [trep] (2) {};  
\draw(4,0) node [trep] (3) {};  
\draw(6,0) node [trep] (4) {};  
\draw(1,2) node[trep] (a) {};
\draw(5,2) node[trep] (b) {};
\draw(3,4) node[trep] (r) {};
\draw  (r)--(a);
\draw  (a)--(1);
\draw  (a)--(2);
\draw  (r)--(b);
\draw  (b)--(3);
\draw  (b)--(4);
\draw (3,-2) node {\footnotesize $Q_3^*$};
\end{tikzpicture}
\quad
\begin{tikzpicture}[thick,>=stealth,scale=0.25]
\draw(0,0) node [trep] (1) {};   
\draw(2,0) node [trep] (2) {};  
\draw(4,0) node [trep] (3) {};  
\draw(6,0) node [trep] (4) {};  
\draw(3,4) node[trep] (r) {};
\draw  (r)--(1);
\draw  (r)--(2);
\draw  (r)--(3);
\draw  (r)--(4);
\draw (3,-2) node {\footnotesize $Q_4^*$};
\end{tikzpicture}
\end{center}
\caption{\label{fig:5shapes}The 5 trees in $\TT_4^*$.}
\end{figure}

Later in this paper we shall need to know the probabilities under $P^*_{\alpha,\gamma,4}$ of the five different trees in $\TT^*_4$, described in Figure \ref{fig:5shapes} together with the notations used in this paper to denote them (motivated by Table  \ref{tabla:indexq2} in the next section). We compute these probabilities in the following lemma, thus providing  an example of explicit computation of probabilities also for this model.

\begin{lemma}\label{lem:ag}
With the notations of Figure \ref{fig:5shapes}:
$$
\begin{array}{l}
\displaystyle 
P^*_{\alpha,\gamma,4}(Q_0^*)=\frac{2(1-\alpha+\gamma)(2(1 - \alpha)+\gamma)}{(3-\alpha)(2-\alpha)}
\\[2ex]
\displaystyle 
P^*_{\alpha,\gamma,4}(Q_1^*)=\frac{(5(1 - \alpha)+\gamma)(\alpha-\gamma)}{(3-\alpha)(2-\alpha)}\\[2ex]
\displaystyle 
P^*_{\alpha,\gamma,4}(Q_2^*)=\frac{2(1 - \alpha+\gamma)(\alpha-\gamma)}{(3-\alpha)(2-\alpha)}\\[2ex]
\displaystyle 
P^*_{\alpha,\gamma,4}(Q_3^*)=\frac{(1-\alpha)(2(1 - \alpha)+\gamma)}{(3-\alpha)(2-\alpha)}\\[2ex]
\displaystyle P^*_{\alpha,\gamma,4}(Q_4^*)=
\frac{(2\alpha-\gamma)(\alpha-\gamma)}{(3-\alpha)(2-\alpha)}\end{array}
$$
\end{lemma}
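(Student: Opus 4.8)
The plan is to compute each probability directly from the recursive construction defining the $\alpha$-$\gamma$-model, exactly as in the proofs of Lemmas~\ref{lem:AldB4} and \ref{lem:FordA4}: we follow the sequential addition of the leaves $1,2,3,4$, record at each stage every phylogenetic tree that can appear together with its probability, and finally, for each of the five shapes $Q_0^*,\dots,Q_4^*$ of Figure~\ref{fig:5shapes}, add up the probabilities of all $4$-leaf phylogenetic trees realizing it. As in the $\alpha$-model, I would start the computation at the cherry $T_2=(1,2)\in\TT_2$, which has probability $1$, and then only the two steps $m=2$ and $m=3$ remain.

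First I would handle the step $m=2$. Leaf $3$ can be grafted onto $(1,2)$ in four ways: onto either of the two pendant arcs (factor $(1-\alpha)/(2-\alpha)$ each), onto a new root (factor $\gamma/(2-\alpha)$), or as a third child of the root (factor $((2-1)\alpha-\gamma)/(2-\alpha)$). This produces the three combs $((1,3),2)$, $(1,(2,3))$, $((1,2),3)$ and the star $(1,2,3)$, with probabilities $\tfrac{1-\alpha}{2-\alpha},\tfrac{1-\alpha}{2-\alpha},\tfrac{\gamma}{2-\alpha},\tfrac{\alpha-\gamma}{2-\alpha}$; one checks these sum to $1$. Next I would handle $m=3$ on each of these four trees. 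For each of the three combs (all of shape $K_3$): grafting leaf $4$ onto the pendant arc of the lonely leaf gives $Q_3^*$ (factor $\tfrac{1-\alpha}{3-\alpha}$); onto either leaf of the cherry, or onto the arc above the cherry's parent, or onto a new root, gives the comb $Q_0^*$ (combined factor $\tfrac{2(1-\alpha)+2\gamma}{3-\alpha}$); as an extra child of the root gives $Q_1^*$ (factor $\tfrac{\alpha-\gamma}{3-\alpha}$); and as an extra child of the cherry's parent gives $Q_2^*$ (factor $\tfrac{\alpha-\gamma}{3-\alpha}$); these factors again sum to $1$. For the star $(1,2,3)=S_3$: grafting leaf $4$ onto any of the three pendant arcs gives $Q_1^*$ (combined factor $\tfrac{3(1-\alpha)}{3-\alpha}$), onto a new root gives $Q_2^*$ (factor $\tfrac{\gamma}{3-\alpha}$), and as a fourth child of the root gives $Q_4^*=S_4$ (factor $\tfrac{2\alpha-\gamma}{3-\alpha}$), once more summing to $1$.

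It then remains to assemble $P^*_{\alpha,\gamma,4}(Q_i^*)$ by multiplying each of the factors above by the probability of the $\TT_3$-tree it comes from and summing over all contributions. Since the three $K_3$-combs have total probability $\tfrac{2(1-\alpha)+\gamma}{2-\alpha}$, this immediately gives $P^*_{\alpha,\gamma,4}(Q_0^*)=\tfrac{2(1-\alpha+\gamma)}{3-\alpha}\cdot\tfrac{2(1-\alpha)+\gamma}{2-\alpha}$ and $P^*_{\alpha,\gamma,4}(Q_3^*)=\tfrac{1-\alpha}{3-\alpha}\cdot\tfrac{2(1-\alpha)+\gamma}{2-\alpha}$, while $Q_1^*$ and $Q_2^*$ collect contributions both from the combs and from $S_3$, and $Q_4^*$ only from $S_3$. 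A short algebraic simplification (using $2(1-\alpha)+\gamma+3(1-\alpha)=5(1-\alpha)+\gamma$ for $Q_1^*$, and $2(1-\alpha)+\gamma+\gamma=2(1-\alpha+\gamma)$ for $Q_2^*$) then yields the five displayed formulas.

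The computation is entirely elementary; the only genuine obstacle is the bookkeeping at the step $m=3$ — enumerating all attachment positions without omission or repetition and correctly identifying which of the five shapes of Figure~\ref{fig:5shapes} each one produces, in particular keeping the three superficially similar operations ``attach to an arc ending in an internal node'', ``attach to a new root'', and ``attach as a child of an internal node'' distinct. Two consistency checks guard against slips: the five probabilities must add up to $1$ (which already follows from the fact that the attachment factors sum to $1$ at each step, so that $\sum_{T_4\in\TT_4}P_{\alpha,\gamma,4}(T_4)=1$), and setting $\gamma=\alpha$ must recover Ford's $\alpha$-model, where indeed $Q_1^*,Q_2^*,Q_4^*$ acquire probability $0$ and $P^*_{\alpha,\alpha,4}(Q_3^*)$ collapses to $\tfrac{1-\alpha}{3-\alpha}$, in agreement with Lemma~\ref{lem:FordA4}.
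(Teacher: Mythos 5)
Your proposal is correct and follows essentially the same route as the paper's own proof: start from the cherry $(1,2)$, enumerate the four trees in $\TT_3$ with their probabilities, classify every attachment position for leaf $4$ by the shape it produces, and sum; all the attachment factors and shape identifications you list agree with the paper's. The only additions are your two consistency checks (factors summing to $1$ at each stage, and the $\gamma=\alpha$ specialization recovering Lemma~\ref{lem:FordA4}), which the paper also notes after the lemma.
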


\begin{proof}
To compute these probabilities, we shall already start with the cherry $T_2=(1,2)$ in $\TT_2$, which has probability $P_{\alpha,\gamma,2}(T_2)= 1$. Every phylogenetic tree in $\TT_3$ is obtained by adding a leaf labeled 3 to $T_2$. There are 4 trees in $\TT_3$: the bifurcating trees $K^{(i)}$, $i=1,2,3$, described in Figure \ref{fig:3btrees}, and the rooted star $S_3$.
\begin{itemize}
\item $S_3$ is obtained by adding the leaf 3 to the root of $T_2$. Its probability is then
$$
P_{\alpha,\gamma,3}(S_3)=\frac{\alpha-\gamma}{2-\alpha}.
$$

\item $K^{(1)}$ and $K^{(2)}$ are obtained by adding the leaf 3 to an arc  in $T_2$ ending in a leaf. Their probability is then
$$
P_{\alpha,\gamma,3}(K^{(1)})=P_{\alpha,\gamma,3}(K^{(2)})=\frac{1 - \alpha}{2-\alpha}.
$$

\item $K^{(3)}$ is obtained by adding the leaf 3 to a new root. Its probability is then
$$
P_{\alpha,\gamma,3}(K^{(3)})=\frac{\gamma}{2-\alpha}.
$$
\end{itemize}

Let us move finally to $\TT_4^*$:
\begin{itemize}
\item A tree of shape $Q_4^*$ can only be obtained by adding the leaf 4 to the root of the tree $S_3$. Its probability is, then,
$$
P^*_{\alpha,\gamma,4}(Q_4^*)=\frac{2\alpha-\gamma}{3-\alpha}\cdot P_{\alpha,\gamma,3}(S_3)=
\frac{(2\alpha-\gamma)(\alpha-\gamma)}{(3-\alpha)(2-\alpha)}.
$$

\item A tree of shape  $Q_0^*$ can be obtained by adding the leaf 4 in some tree $K^{(i)}_3$ either to a new root, to the arc from the root to the other internal node,
or to one of the arcs in its cherry.  Its probability is, then,
$$
\begin{array}{rl}
P^*_{\alpha,\gamma,4}(Q_0^*) & \displaystyle=\Big(2\cdot\frac{\gamma}{3-\alpha}+2\cdot\frac{1 - \alpha}{3-\alpha}
\Big)\sum_{i=1}^3 P_{\alpha,\gamma,3}(K^{(i)})\\[1ex] & \displaystyle
=\frac{2(1-\alpha+\gamma)(2(1 - \alpha)+\gamma)}{(3-\alpha)(2-\alpha)}.
\end{array}
$$

\item A tree of shape $Q_1^*$ can be obtained by adding the leaf 4 either to one of the three arcs in the tree $S_3$ or to the root of some tree $K^{(i)}_3$.  Its probability is, then,
$$
\begin{array}{rl}
P^*_{\alpha,\gamma,4}(Q_1^*) & \displaystyle=3\cdot \frac{1 - \alpha}{3-\alpha}\cdot P_{\alpha,\gamma,3}(S_3)+
\frac{\alpha-\gamma}{3-\alpha} \sum_{i=1}^3 P_{\alpha,\gamma,3}(K^{(i)})\\[1ex] & \displaystyle
=\frac{(5(1 - \alpha)+\gamma)(\alpha-\gamma)}{(3-\alpha)(2-\alpha)}.
\end{array}
$$

\item A tree of shape  $Q_2^*$ can be obtained by adding the leaf 4 either to a new root in the tree $S_3$ or to the non-root internal node in some tree $K^{(i)}_3$.  Its probability is, then,
$$
\begin{array}{rl}
P^*_{\alpha,\gamma,4}(Q_2^*) & \displaystyle=\frac{\gamma}{3-\alpha}\cdot P_{\alpha,\gamma,3}(S_3)+
\frac{\alpha-\gamma}{3-\alpha}\sum_{i=1}^3 P_{\alpha,\gamma,3}(K^{(i)})\\[1ex] & \displaystyle
=\frac{2(1 - \alpha+\gamma)(\alpha-\gamma)}{(3-\alpha)(2-\alpha)}.
\end{array}
$$

\item A tree of shape  $Q_3^*$ can only be obtained by adding the leaf 4 to the arc from the root to its only leaf child in some tree $K^{(i)}_3$.  Its probability is, then,
$$
P^*_{\alpha,\gamma,4}(Q_3^*)=\frac{1 - \alpha}{3-\alpha}\sum_{i=1}^3 P_{\alpha,\gamma,3}(K^{(i)})
=\frac{(1-\alpha)(2(1 - \alpha)+\gamma)}{(3-\alpha)(2-\alpha)}.
$$
\end{itemize}
\hspace*{\fill}  \qed
\end{proof}

Notice that, when $\alpha=\gamma$,
$$
P^*_{\alpha,\alpha,4}(Q_3^*)=\frac{1-\alpha}{3-\alpha}=P^{F,*}_{\alpha,4}(Q_3^*)
$$
as it should have been expected.

}

\section{Rooted quartet indices}

Let $T$ be a phylogenetic tree on  a set $\Sigma$. For every $Q\in \mathcal{P}_4(\Sigma)$,  the \emph{rooted quartet on $Q$ displayed} by  $T$ is the restriction $T(Q)$ of $T$ to $Q$. 
A phylogenetic tree $T\in \TT_n$ can contain rooted quartets of five different shapes, namely, those listed in Figure \ref{fig:5shapes}. Notice that a bifurcating phylogenetic tree $T\in \BT_n$ can only contain rooted quartets of two shapes: those denoted by $Q_0^*$ and $Q_3^*$ in the aforementioned figure.

We associate to each rooted quartet an \emph{$\mathit{rQI}$-value} $q_i$ that increases with the symmetry of the rooted quartet's shape, as measured by means of its number of  automorphisms, going from a value $q_0=0$ for  the least symmetric tree, the comb $Q_0^*$, to a largest value of $q_4$ for the most symmetric one, the rooted star $Q_4^*$; see Table \ref{tabla:indexq2}. The specific numerical values can be chosen in order to magnify the differences in symmetry between specific pairs of trees. For instance, one could take  $q_i=i$, or $q_i=2^i$.

\begin{table}[htb]
\begin{center}
\begin{tabular}{|l|c|c|c|c|c|}
\hline
Rooted quartet\vphantom{$\displaystyle \Big($} & $Q_0^*$ & $Q_1^*$&  $Q_2^*$ & $Q_3^*$ & $Q_4^*$      \\ \hline 
\# Automorphisms\vphantom{$\displaystyle \Big($} & 2 & 4 & 6 & 8 & 24  \\ \hline 
$\mathit{rQI}$\vphantom{$\displaystyle \Big($} & $0$ & $q_1$ & $q_2$ & $q_3$ & $q_4$
\\ \hline
\end{tabular}
\end{center}
\caption{\label{tabla:indexq2}The rooted quartets' $\mathit{rQI}$-values, with $0<q_1<q_2<q_3<q_4$.}
\end{table}

Now, for every $T\in \TT(\Sigma)$, we define its \emph{rooted quartet index} $\mathit{rQI}(T)$ as the sum of the $\mathit{rQI}$-values of its rooted quartets:
$$
\begin{array}{rl}
\mathit{rQI}(T) &\displaystyle =\sum_{Q\in \mathcal{P}_4(\Sigma)} \mathit{rQI}(T(Q))\\
& \displaystyle =\sum_{i=1}^4 \big |\{Q\in \mathcal{P}_4(\Sigma): \pi(T(Q))=Q_i^*\}\big|\cdot q_i
\end{array}
$$
In particular, if $|\Sigma|\leq 3$, then $\mathit{rQI}(T)=0$ for every $T\in \TT(\Sigma)$. So, \emph{we shall assume henceforth that $|\Sigma|\geq 4$}.

It is clear that $\mathit{rQI}$ is a \emph{shape index}, in the sense that two phylogenetic trees with the same shape have the same rooted quartet index. It makes sense then to define the rooted quartet index $\mathit{rQI}(T^*)$ of a tree $T^*\in \TT_n^*$ as the rooted quartet index of any phylogenetic tree of shape $T^*$.

\begin{example}
Consider the tree $T=((1,2,3),4,(5,(6,7)))$ depicted in Figure \ref{fig:exqi}. It has:  4 rooted quartets of shape $Q_0^*$;  18 rooted quartets of shape $Q_1^*$;  4 rooted quartets of shape $Q_2^*$;  9 rooted quartets of shape $Q_3^*$; and no rooted quartet of shape $Q_4^*$.
Therefore
$$
\mathit{rQI}(T)=18q_1+4q_2+9q_3.
$$
\end{example}

 \begin{figure}[htb]
\begin{center}
\begin{tikzpicture}[thick,>=stealth,scale=0.3]
\draw(0,0) node [tre] (1) {}; \etq 1  
\draw(2,0) node [tre] (2) {};  \etq 2
\draw(4,0) node [tre] (3) {};  \etq 3
\draw(6,0) node [tre] (4) {};  \etq 4
\draw(8,0) node [tre] (5) {};  \etq 5
\draw(10,0) node [tre] (6) {};  \etq 6
\draw(12,0) node [tre] (7) {};  \etq 7
\draw (2,3) node [tre] (a) {};
\draw (11,2) node [tre] (b) {};
\draw (9,4) node [tre] (c) {};
\draw (6,6) node [tre] (r) {};
\draw  (r)--(a);
\draw  (r)--(c);
\draw  (r)--(4);
\draw  (a)--(1);
\draw  (a)--(2);
\draw  (a)--(3);
\draw  (c)--(5);
\draw  (c)--(b);
\draw  (b)--(6);
\draw  (b)--(7);
\end{tikzpicture}
\end{center}
\caption{\label{fig:exqi}The tree $((1,2,3),4,(5,(6,7)))$.}
\end{figure}
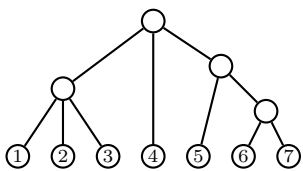

\begin{remark}
If we did not take $q_0=0$, then the resulting index would be
$$
\begin{array}{l}
\displaystyle\sum_{i=0}^4 q_i\cdot \big|\{Q\in \mathcal{P}_4(\Sigma): \pi(T(Q))=Q_i^*\}\big|\\
\qquad\qquad \displaystyle =q_0\binom{n}{4}+ \sum_{i=1}^4 (q_i-q_0) \big|\{Q\in \mathcal{P}_4(\Sigma): \pi(T(Q))=Q_i^*\}\big|\Big)
\end{array}
$$
which is equivalent (up to the constant addend $q_0\binom{n}{4}$) to $\mathit{rQI}$ taking as  $\mathit{rQI}$-values $q_i'=q_i-q_0$.
\end{remark}

\begin{remark}
One could also associate other values to the rooted quartet shapes; for instance their  Sackin index \citep{Sackin:72,Shao:90}  or their total cophenetic index \citep{MRR}, which measure the imbalance of the rooted quartet's shape, 
from a smallest value at $Q_4^*$ to a largest value at $Q_0^*$. All results obtained in this paper  are easily translated to any other sets of values.
\end{remark}

Since a bifurcating tree can only contain rooted quartets of shape $Q_0^*$ and  $Q_3^*$,  its $\mathit{rQI}$ index is simply $q_3$ times its number of rooted quartets of shape $Q_3^*$. Therefore, in order to avoid this spurious factor, when dealing only with bifurcating trees we shall use the following alternative \emph{rooted quartet index for bifurcating trees} $\mathit{rQIB}$: for every $T\in \BT(\Sigma)$,
$$
\mathit{rQIB}(T)=\frac{1}{q_3}\mathit{rQI}(T)=\Big|\big\{Q\in \mathcal{P}_4(\Sigma):  \pi(T(Q))= Q_3^*\big\}\Big|.
$$

The rooted quartet index for bifurcating trees satisfies the following recurrence.

\begin{lemma}\label{lem:rec}
Let $T=T_1\star T_2\in \BT_n$, where each $T_i$ has $n_i$ leaves. Then,
$$
\mathit{rQIB}(T)= \mathit{rQIB}(T_1)+ \mathit{rQIB}(T_2)+\binom{n_1}{2}\cdot \binom{n_2}{2}.
$$
\end{lemma}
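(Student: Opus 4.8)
The plan is to unwind the definition of $\mathit{rQIB}$ as the number of $4$-element leaf subsets whose displayed quartet has shape $Q_3^*$, and to classify these subsets by how they are distributed between the two maximal pendant subtrees $T_1,T_2$ of $T$. Write $L_i=L(T_i)$, so that $L(T)=L_1\sqcup L_2$ with $|L_i|=n_i$; for $Q\in\mathcal{P}_4(L(T))$ put $Q_i=Q\cap L_i$. Then $(|Q_1|,|Q_2|)$ is one of $(4,0)$, $(3,1)$, $(2,2)$, $(1,3)$, $(0,4)$, and I would split $\mathit{rQIB}(T)=\big|\{Q\in\mathcal{P}_4(L(T)):\pi(T(Q))=Q_3^*\}\big|$ into the five corresponding partial counts.

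The heart of the argument is to determine $T(Q)$ in each case, using that the root $r$ of $T$ has exactly the roots of $T_1$ and $T_2$ as its two children and that $T(Q)$ is obtained by taking the subgraph of $T$ induced by the ancestors of $Q$ and then suppressing out-degree-$1$ nodes. When $|Q_2|=0$, the node $r$ ends up with out-degree $1$ and is suppressed, so $T(Q)=T_1(Q_1)$ (by the transitivity of restriction already used in the preliminaries); symmetrically $T(Q)=T_2(Q_2)$ when $|Q_1|=0$. Hence the $(4,0)$-subsets contribute exactly $\big|\{Q_1\in\mathcal{P}_4(L_1):\pi(T_1(Q_1))=Q_3^*\}\big|=\mathit{rQIB}(T_1)$, and the $(0,4)$-subsets contribute $\mathit{rQIB}(T_2)$. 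When $\{|Q_1|,|Q_2|\}=\{3,1\}$, say $|Q_1|=3$, the tree $T(Q)$ is the root join of the single leaf $Q_2$ with the $3$-leaf bifurcating tree $T_1(Q_1)$; but every bifurcating tree on $3$ leaves is a caterpillar, so $T(Q)\equiv Q_0^*\not\equiv Q_3^*$, and this class, together with $(1,3)$, contributes $0$. Finally, when $|Q_1|=|Q_2|=2$, the tree $T(Q)$ is the root join of the two cherries $T_1(Q_1)$ and $T_2(Q_2)$, which is precisely $((*,*),(*,*))\equiv Q_3^*$; thus every such $Q$ is counted, and there are $\binom{n_1}{2}\binom{n_2}{2}$ of them.

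Adding the five partial counts then gives
$$
\mathit{rQIB}(T)=\mathit{rQIB}(T_1)+\mathit{rQIB}(T_2)+0+0+\binom{n_1}{2}\binom{n_2}{2},
$$
which is the claimed recurrence. There is no serious obstacle: the only points requiring a moment's care are the identification $T(Q)=T_i(Q_i)$ when all four leaves lie on one side (so that the two smaller indices reappear with coefficient $1$ and no extra additive term) and the observation that a $3$-leaf bifurcating tree is always a caterpillar (so the mixed splits $(3,1)$ and $(1,3)$ never produce a symmetric quartet); everything else is the enumeration of the five possible $4$-leaf shapes and the elementary count of $2$-subsets of $L_1$ and of $L_2$.
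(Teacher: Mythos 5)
Your proof is correct and follows essentially the same route as the paper's: both classify the $4$-subsets $Q$ by the sizes of $Q\cap L(T_1)$ and $Q\cap L(T_2)$, identify $T(Q)=T_i(Q\cap L(T_i))$ when all four leaves lie in one subtree, observe that the $(3,1)$ splits give combs of shape $Q_0^*$ and the $(2,2)$ splits give $Q_3^*$, and count the latter as $\binom{n_1}{2}\binom{n_2}{2}$. Your write-up is merely slightly more explicit about the suppression of the out-degree-$1$ root in the $(4,0)$ and $(0,4)$ cases; the substance is identical.
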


\begin{proof}
For every $Q\in \mathcal{P}_4([n])$,  there are the following possibilities:
\begin{enumerate}[(1)]
\item If $Q\subseteq L(T_i)$, for some $i=1,2$,  then $T(Q)=T_i(Q)$. Therefore, 
each $Q\subseteq L(T_i)$ contributes $\mathit{rQIB}(T_i)$ to  $\mathit{rQIB}(T)$.
\item If three leaves in $Q$ belong to one of the subtrees $T_i$ and the fourth to the other subtree $T_j$, then $T(Q)$ has shape $Q_0^*$ and thus it does not contribute anything to $\mathit{rQIB}(T)$.
\item If two leaves in $Q$ belong to $T_1$ and the other two to $T_2$, then $T(Q)$ has shape $Q_3^*$ and thus it contributes 1 to $\mathit{rQIB}(T)$. There are $\binom{n_1}{2}\cdot \binom{n_2}{2}$ such quartets of leaves $Q$.
\end{enumerate}
Adding up all these contributions, we obtain the formula in the statement.\qed
\end{proof}

Thus, $\mathit{rQIB}$   is a  \emph{recursive tree shape statistic} in the sense of \citet{Matsen}. The recurrence in the last lemma implies directly the following explicit formula for $\mathit{rQIB}$, which in particular entails that it can be easily computed in time $O(n)$, with $n$ the number of leaves of the tree, by traversing the tree in post-order (cf. the first paragraph in the proof of Proposition \ref{prop:costQI} below):

\begin{corollary}
If, for every $T\in \BT_n$ and for every $v\in V_{int}(T)$, we set $\mathrm{child}(v)=\{v_1,v_2\}$, then 
$$
\mathit{rQIB}(T)=\sum_{v\in V_{int}(T)} \binom{\kappa(v_1)}{2}\cdot\binom{\kappa(v_2)}{2}.
$$
\end{corollary}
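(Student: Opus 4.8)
The plan is to prove the formula by induction on the number $n$ of leaves of $T$, using the recurrence of Lemma~\ref{lem:rec} as the inductive step. For the base case, observe that a bifurcating tree with a single leaf has no internal nodes and displays no rooted quartet, so both sides of the claimed identity are $0$; the same holds trivially for $n\leq 3$. Thus it remains to handle $n\geq 2$ (equivalently, once $n\geq 4$, to reduce to smaller trees).

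For the inductive step, take $n\geq 2$ and write $T=T_1\star T_2$ with $T_i\in\BT_{n_i}$ and $n_1+n_2=n$; such a decomposition exists because the root $r$ of $T$ has exactly two children $r_1,r_2$, which are precisely the roots of $T_1$ and $T_2$, so $\kappa_T(r_1)=n_1$ and $\kappa_T(r_2)=n_2$. The set $V_{int}(T)$ is the disjoint union $\{r\}\sqcup V_{int}(T_1)\sqcup V_{int}(T_2)$, and for $v\in V_{int}(T_i)$ the children of $v$ in $T$ are the same as in $T_i$ and subtend the same sets of leaves, so the value $\binom{\kappa(v_1)}{2}\binom{\kappa(v_2)}{2}$ attached to $v$ is unaffected by whether we view $v$ inside $T_i$ or inside $T$. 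Hence
$$
\sum_{v\in V_{int}(T)}\binom{\kappa(v_1)}{2}\binom{\kappa(v_2)}{2}
=\binom{n_1}{2}\binom{n_2}{2}+\sum_{i=1}^{2}\ \sum_{v\in V_{int}(T_i)}\binom{\kappa(v_1)}{2}\binom{\kappa(v_2)}{2}.
$$
Applying the induction hypothesis to $T_1$ and $T_2$ turns the two inner sums into $\mathit{rQIB}(T_1)$ and $\mathit{rQIB}(T_2)$, and Lemma~\ref{lem:rec} then identifies the right-hand side with $\mathit{rQIB}(T_1)+\mathit{rQIB}(T_2)+\binom{n_1}{2}\binom{n_2}{2}=\mathit{rQIB}(T)$, closing the induction.

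There is essentially no real obstacle: the only thing that needs a (routine) check is the bookkeeping that $V_{int}(T)=\{r\}\sqcup V_{int}(T_1)\sqcup V_{int}(T_2)$ and that the counts $\kappa(v)$ of descendant leaves of a node of $T_i$ do not change when the node is regarded inside $T$, both of which are immediate from the definition of the root join. Once the formula is established, the claimed $O(n)$ running time is just the observation that it is a sum of one $O(1)$ term per internal node, and all the quantities $\kappa(v)$ can be computed simultaneously in a single post-order traversal of $T$.
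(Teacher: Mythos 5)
Your proof is correct and is essentially the paper's own argument: the paper simply asserts that the corollary follows ``directly'' from the recurrence of Lemma~\ref{lem:rec}, and your induction on $n$ (with the decomposition $V_{int}(T)=\{r\}\sqcup V_{int}(T_1)\sqcup V_{int}(T_2)$) is exactly the intended unfolding of that recurrence.
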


Unfortunately, $\mathit{rQI}$ is not recursive in this sense:
there does not exist any family of mappings $q_m: \NN^m\to \RR$, $m\geq 2$, such that, for every $T\in \TT_n$, if $T=T_1\star \cdots \star T_m$, with each $T_i$ having $n_i$ leaves, then
$$
\mathit{rQI}(T)=\sum_{i=1}^m \mathit{rQI}(T_i)+q_m(n_1,\ldots,n_m).
$$
However, next lemma shows that there exists a slightly more involved linear  recurrence for $\mathit{rQI}$, with its independent term depending on more indices of the trees $T_i$ than only their numbers of leaves,  which still allows its computation in linear time.

For every $T\in \TT_n$, let $\Upsilon(T)$ be the number of \emph{non-bifurcating triples} in $T$  (that is, of restrictions of  $T$ to sets of 3 leaves  that have the shape of a rooted star $S_3$; cf. Fig.~\ref{fig:exs}). Notice that if $T=T_1\star\cdots\star T_m$ and $|L(T_i)|=n_i$, for each $i=1,\ldots,m$, then
$$
\Upsilon(T)=\sum_{i=1}^m \Upsilon(T_i)+\sum_{1\leq i_1<i_2<i_3\leq m}n_{i_1}n_{i_2}n_{i_3}
$$
and hence 
$$
\Upsilon(T)=\sum_{v\in V_{int}(T)}\sum_{\{v_1,v_2,v_3\}\subseteq \mathrm{child}(v)} \kappa(v_1)\kappa(v_2)\kappa(v_3).
$$

\begin{lemma}
Let $T=T_1\star \cdots \star T_m\in \TT_n$, where each $T_i$ has $n_i$ leaves. Then
$$
\begin{array}{l}
\mathit{rQI}(T)\displaystyle =\sum_{i=1}^m \mathit{rQI}(T_i) +q_4\cdot\sum_{1\leq i_1<i_2<i_3<i_4\leq m} n_{i_1}n_{i_2}n_{i_3}n_{i_4}  \\[1ex]
\qquad \displaystyle+q_3\cdot\sum_{1\leq i_1<i_2\leq m} \binom{n_{i_1}}{2}\binom{n_{i_2}}{2}
+q_2\cdot\sum_{1\leq i_1<i_2\leq m}\hspace*{-1ex} \big(n_{i_1}\Upsilon(T_{i_2})+n_{i_2}\Upsilon(T_{i_1})\big)\\[1ex]
\qquad \displaystyle +q_1\cdot\sum_{1\leq i_1<i_2<i_3\leq m}\Big(\binom{n_{i_1}}{2}n_{i_2}n_{i_3}+\binom{n_{i_2}}{2}n_{i_1}n_{i_3}+\binom{n_{i_3}}{2}n_{i_1}n_{i_2}\Big).
\end{array}
$$
\end{lemma}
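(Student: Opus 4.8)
The plan is to classify every 4-element subset $Q$ of $L(T)$ according to how its four leaves are distributed among the children subtrees $T_1,\dots,T_m$, and to identify the shape of $T(Q)$ in each case. Writing the root join $T=T_1\star\cdots\star T_m$, a leaf set $Q\in\mathcal{P}_4(L(T))$ falls into exactly one of the following patterns (listing the multiset of block sizes): $(4)$, all four leaves in a single $T_i$; $(3,1)$, three in one subtree and one in another; $(2,2)$, two in each of two subtrees; $(2,1,1)$, two in one subtree and one each in two others; and $(1,1,1,1)$, one in each of four distinct subtrees.

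First I would handle the case $(4)$: here $T(Q)=T_i(Q)$, so summing over all such $Q$ contributes $\sum_{i=1}^m\mathit{rQI}(T_i)$. Next, for pattern $(1,1,1,1)$, the restriction $T(Q)$ is the rooted star $Q_4^*$, contributing $q_4$ each; the number of such $Q$ is $\sum_{i_1<i_2<i_3<i_4}n_{i_1}n_{i_2}n_{i_3}n_{i_4}$. For pattern $(2,2)$, the two leaves in each block have their LCA strictly below the root of $T$, and these two LCAs are distinct children of the root, so $T(Q)\equiv Q_3^*$ (the balanced quartet); there are $\binom{n_{i_1}}{2}\binom{n_{i_2}}{2}$ such $Q$ for each pair $i_1<i_2$, contributing the $q_3$ term. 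For pattern $(2,1,1)$, the two leaves sharing a subtree $T_{i_1}$ form a cherry whose sibling branch at the root splits into the two singleton leaves; this is the shape $Q_1^*$. Counting: choosing the doubleton in $T_{i_1}$ and singletons in $T_{i_2}, T_{i_3}$ gives $\binom{n_{i_1}}{2}n_{i_2}n_{i_3}$, and symmetrizing over which of the three subtrees carries the doubleton yields exactly the $q_1$ summand. The subtle case is $(3,1)$: the single leaf is a child of the root, while the three leaves inside $T_i$ induce $T_i$ restricted to those three leaves, which is either a bifurcating triple or a non-bifurcating triple $S_3$. If it is a bifurcating triple, $T(Q)$ is the comb $Q_0^*$ (value $q_0=0$, no contribution); if it is $S_3$, then $T(Q)$ is the shape $Q_2^*$ (value $q_2$). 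For a fixed ordered pair $(i_1,i_2)$ with the triple inside $T_{i_1}$ and the singleton in $T_{i_2}$, the number of $Q$ giving $Q_2^*$ is exactly $\Upsilon(T_{i_1})\cdot n_{i_2}$ by definition of $\Upsilon$; summing over unordered pairs $\{i_1,i_2\}$ and both assignments gives $\sum_{i_1<i_2}\bigl(n_{i_1}\Upsilon(T_{i_2})+n_{i_2}\Upsilon(T_{i_1})\bigr)$, the $q_2$ term.

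Adding the contributions of all five patterns gives precisely the claimed formula, so the proof reduces to verifying the shape identification in each case and the elementary counting. The main obstacle — really the only non-mechanical point — is case $(3,1)$: one must argue carefully that restricting $T$ to three leaves of $T_i$ together with one leaf outside produces $Q_0^*$ or $Q_2^*$ according as $T_i$'s induced triple is bifurcating or a star, and that the count of the $Q_2^*$ occurrences is governed exactly by $\Upsilon(T_i)$ rather than by some coarser invariant. The remaining cases are immediate once one observes that in a root join every leaf of a singleton block is attached directly to the new root.
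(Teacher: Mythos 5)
Your proposal is correct and follows essentially the same route as the paper's proof: both classify the $4$-subsets $Q$ by the partition pattern $(4)$, $(3,1)$, $(2,2)$, $(2,1,1)$, $(1,1,1,1)$ of their leaves among the $T_i$, identify the shape of $T(Q)$ in each case (with the $(3,1)$ case splitting into $Q_0^*$ or $Q_2^*$ according as the induced triple in $T_i$ is a comb or a star, counted via $\Upsilon$), and add up the contributions. The shape identifications and counts all match the paper's.
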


\begin{proof}
For every $Q\in \mathcal{P}_4([n])$,  there are the following possibilities:
\begin{enumerate}[(1)]
\item If $Q\subseteq L(T_i)$, for some $i$,  then $T(Q)=T_i(Q)$. Therefore, 
each $Q\subseteq L(T_i)$ contributes $\mathit{rQI}(T_i)$ to  $\mathit{rQI}(T)$.

\item If 3 leaves, say $a,b,c$, in $Q$ belong to a subtree $T_i$ and the fourth to another subtree  $T_j$, then $T(Q)$:
\begin{itemize}
\item Has shape $Q_2^*$ if $T_i(\{a,b,c\})$ has shape $S_3$. For every pair of subtrees $T_i,T_j$, there are $n_j\Upsilon(T_i)+n_i\Upsilon(T_j)$ quartets of leaves $Q$ of this type, and each one of them contributes $q_2$ to $\mathit{rQI}(T)$
\item Has shape $Q_0^*$ if $T_i(\{a,b,c\})$ is a comb $K_3$. These rooted quartets do not contribute anything to $\mathit{rQI}(T)$.
\end{itemize}

\item If 2 leaves in $Q$ belong to a subtree $T_i$ and the other 2 to another subtree  $T_j$, then $T(Q)$ has shape $Q_3^*$. For every pair of subtrees $T_i,T_j$, there are $\binom{n_i}{2}\binom{n_j}{2}$  quartets of leaves $Q$ of this type, and each one of them contributes $q_3$ to $\mathit{rQI}(T)$.

\item If 2 leaves in $Q$ belong  to a subtree $T_i$, a third leaf to another subtree  $T_j$ and the fourth   to a third subtree  $T_k$,  then $T(Q)$ has shape $Q_1^*$. For every triple of subtrees $T_i,T_j,T_k$, there are $\binom{n_i}{2}n_jn_k+\binom{n_j}{2}n_kn_i+\binom{n_k}{2}n_in_j$ quartets of leaves $Q$ of this type, and each one of them contributes $q_1$ to $\mathit{rQI}(T)$.

\item If each leaf in $Q$ belongs to a different subtree $T_i$, then $T(Q)$ has shape $Q_4^*$. 
For every four subtrees $T_i,T_j,T_k,T_l$, there are $n_in_jn_kn_l$ such quartets of leaves $Q$, and each one of them contributes $q_4$ to $\mathit{rQI}(T)$.
\end{enumerate}
Adding up all these contributions, we obtain the formula in the statement.\qed
\end{proof}

\begin{proposition}\label{prop:costQI}
If $T\in \TT_n$, $\mathit{rQI}(T)$ can be computed in time $O(n)$.
\end{proposition}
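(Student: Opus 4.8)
The plan is to compute $\mathit{rQI}(T)$ by a single post-order traversal of $T$, during which at each internal node $v$ we compute and store the constant-size tuple $\big(\kappa(v),\Upsilon(T_v),\mathit{rQI}(T_v)\big)$. Writing the children of $v$ as $v_1,\dots,v_d$ and putting $n_i=\kappa(v_i)$, the value $\kappa(v)=\sum_i n_i$ is obtained by summation, while $\Upsilon(T_v)$ and $\mathit{rQI}(T_v)$ are obtained from the recurrences for $\Upsilon$ and for $\mathit{rQI}$ displayed above (applied with each $T_i=T_{v_i}$), using the already-computed children's data $\kappa(v_i),\Upsilon(T_{v_i}),\mathit{rQI}(T_{v_i})$; for a child that is a single leaf these are $(1,0,0)$, which gives the base cases of the traversal. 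Since $\mathit{rQI}(T)=\mathit{rQI}(T_r)$ for $r$ the root of $T$, this produces the desired quantity, and it remains only to bound the running time.

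The point that needs care is that a node $v$ may have large out-degree $d$, so one must argue that the work performed at $v$ is $O(d)$ rather than, say, $O(d^2)$. This is possible because every symmetric sum occurring in the two recurrences can be evaluated in $O(d)$ arithmetic operations from the children's data. Concretely: the elementary symmetric polynomials $e_1,e_2,e_3,e_4$ of $n_1,\dots,n_d$ (which supply $\sum_{i_1<i_2<i_3}n_{i_1}n_{i_2}n_{i_3}=e_3$ needed for $\Upsilon(T_v)$, and $\sum_{i_1<i_2<i_3<i_4}n_{i_1}n_{i_2}n_{i_3}n_{i_4}=e_4$) are read off from the coefficients of $\prod_{i=1}^d(1+n_ix)$ computed modulo $x^5$, which costs $O(d)$; the sum $\sum_{i_1<i_2}\binom{n_{i_1}}{2}\binom{n_{i_2}}{2}$ equals $\tfrac12\big[(\sum_i\binom{n_i}{2})^2-\sum_i\binom{n_i}{2}^2\big]$; the sum $\sum_{i_1<i_2}\big(n_{i_1}\Upsilon(T_{v_{i_2}})+n_{i_2}\Upsilon(T_{v_{i_1}})\big)$ equals $\big(\sum_i n_i\big)\big(\sum_j\Upsilon(T_{v_j})\big)-\sum_i n_i\Upsilon(T_{v_i})$; and, once $e_1,e_2$ are available, the $q_1$-term rewrites as $\sum_j\binom{n_j}{2}\big(e_2-n_je_1+n_j^2\big)$. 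Substituting these expressions into the recurrences yields $\Upsilon(T_v)$ and $\mathit{rQI}(T_v)$ in $O(d)$ time.

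Finally, summing over all internal nodes, the total cost is $O\big(\sum_{v\in V_{int}(T)}\deg_{out}(v)\big)$. Because $T$ has no out-degree $1$ node, it has at most $n-1$ internal nodes, hence at most $2n-1$ nodes and at most $2n-2$ arcs, so $\sum_{v\in V_{int}(T)}\deg_{out}(v)$, being the number of arcs of $T$, is at most $2n-2$; the whole computation therefore runs in $O(n)$ arithmetic operations, as claimed. The main obstacle in the argument is precisely the linear-per-node bound just discussed: a naive evaluation of the symmetric sums at a high-degree node would be quadratic in its out-degree, which for a star-like node would degrade the overall bound to $O(n^2)$.
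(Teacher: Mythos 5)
Your proof is correct and follows essentially the same strategy as the paper's: a single post-order traversal with $O(\deg_{out}(v))$ work per internal node, justified by reducing every symmetric sum in the recurrences for $\Upsilon$ and $\mathit{rQI}$ to quantities computable linearly from the children's data, and then bounding $\sum_v \deg_{out}(v)$ by the number of arcs, at most $2n-2$. The only material difference is that you obtain the elementary symmetric polynomials $e_1,\dots,e_4$ as the coefficients of $\prod_i(1+n_ix)$ truncated modulo $x^5$ instead of via the Newton--Girard determinant used in the paper, and your closed forms for the $q_1$- and $q_3$-terms are algebraically rearranged versions of the paper's $F_1$ and $F_3$; all of these are correct and equally efficient.
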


\begin{proof}
Let $T$ be a phylogenetic tree in $\TT_n$. Recall that if a certain mapping $\phi: V(T)\to \RR$ can be computed in constant time at each leaf of $T$ and
 in $O(\deg(v))$ time at each internal node $v$ from its value at the children of $v$, then the whole vector $(\phi(v))_{v\in V(T)}$, and hence also its sum $\sum\limits_{v\in V(T)} \phi(v)$, can be computed in $O(n)$ time by traversing $T$ in post-order. Indeed, if we denote by $m_k$ the number of internal nodes of $T$ with out-degree $k$, then the cost of computing $(\phi(v))_{v\in V(T)}$ through a post-order traversal of $T$ is  $O\big(n+\sum_k m_k\cdot k\big)$, and $\sum_k m_k\cdot k$ is the number of arcs in $T$, which is at most $2n-2$. We shall use this remark several times in this proof, and, to begin with, we refer to it to recall that the vector $\big(\kappa(v)\big)_{v\in V(T)}$ can be computed in $O(n)$ time.

Now, in order to simplify the notations,  let, for every $v\in V_{int}(T)$:
$$
\begin{array}{rl}
E_l(v)& \displaystyle =\hspace*{-0.4cm} \sum_{\{v_1,\ldots,v_l\}\subseteq \mathrm{child}(v)} \kappa(v_1)\cdots \kappa(v_l),\qquad l=2,\ldots,\deg(v)\\
\displaystyle F_1(v)& \displaystyle =\hspace*{-0.7cm} \sum_{\{v_1,v_2,v_3\}\subseteq \mathrm{child}(v)} \Big(\binom{\kappa(v_1)}{2}\kappa(v_2)\kappa(v_3)+
\binom{\kappa(v_2)}{2}\kappa(v_1)\kappa(v_3)\\
& \displaystyle\hspace*{2.5cm}+\binom{\kappa(v_3)}{2}\kappa(v_1)\kappa(v_2)\Big)\\
\displaystyle F_2(v)& \displaystyle =\hspace*{-0.4cm}\sum_{\{v_1,v_2\}\subseteq \mathrm{child}(v)} \big(\kappa(v_1)\Upsilon(T_{v_2})+\kappa(v_2)\Upsilon(T_{v_1})\big)\\
\displaystyle F_3(v)& \displaystyle =\hspace*{-0.4cm} \sum_{\{v_1,v_2\}\subseteq \mathrm{child}(v)} \binom{\kappa(v_1)}{2}\binom{\kappa(v_2)}{2}\\
\end{array}
$$
so that
$$
\begin{array}{l}
\displaystyle \Upsilon(T)=\sum_{v\in V_{int}(T)} E_3(v)\\
\displaystyle \mathit{rQI}(T)=\sum_{v\in V_{int}(T)}\big(q_1F_1(v)+q_2 F_2(v)+q_3F_3(v)+q_4E_4(v))
\end{array}
$$
We want to prove now that each one of the vectors
$$
\big(F_1(v)\big)_{v\in V_{int}(T)},\
\big(F_2(v)\big)_{v\in V_{int}(T)},\
\big(F_3(v)\big)_{v\in V_{int}(T)},
\big(E_4(v)\big)_{v\in V_{int}(T)}
$$
can be computed in  $O(n)$ time, which will clearly entail that $\mathit{rQI}(T)$ can be computed in  $O(n)$ time.

One of the key ingredients in the proof are the \emph{Newton-Girard formulas} (see, for instance, Section I.2 in \citep{McDonald79}): given a (multi)set of numbers $X=\{x_1,\ldots,x_k\}$, if we let, for every $l\geq 1$,
$$
P_l(X)=\sum_{i=1}^k x_i^l,\quad E_l(X)=\sum_{1\leq i_1<\cdots< i_l\leq k} x_{i_1}\cdots x_{i_l}
$$
then
$$
E_l(X)=\frac{1}{l!}\left|
\begin{array}{cccccc}
P_1(X) & 1 & 0 & \ldots & 0 & 0\\
P_2(X) & P_1(X) & 2 &\ldots & 0 & 0\\
P_3(X) & P_2(X) & P_1(X) & \ldots & 0 & 0\\
\vdots & \vdots & \vdots & \ddots & \vdots& \vdots\\
P_{l-1}(X) & P_{l-2}(X) & P_{l-3}(X) & \ldots & P_1(X) & l-1\\
P_l(X) & P_{l-1}(X) & P_{l-2}(X) & \ldots & P_2(X) & P_1(X)
\end{array}
\right|
$$
If we consider $l$ as a fixed parameter, every $P_l(X)$ can be computed in time $O(k)$ and then this expression for $E_l(X)$ as an $l\times l$ determinant allows us also to compute it  in time  $O(k)$.

In particular, if, for every $v\in V_{int}(V)$, we consider the multiset $X_v=\{\kappa(u): u\in \mathrm{child}(v)\}$, then every $E_l(v)=E_l(X_v)$ can be computed in time $O(\deg(v))$ and hence the whole vector $\big(E_l(v)\big)_{v\in V_{int}(T)}$ can be computed in time $O(n)$. In particular,  $\big(E_3(v)\big)_{v\in V_{int}(T)}$ and $\big(E_4(v)\big)_{v\in V_{int}(T)}$ can be computed in linear time.

Then, using the recursion 
$$
\Upsilon(T_v)=\sum_{v_i\in\mathrm{child}(v)} \Upsilon(T_{v_i})+E_3(v)
$$
we deduce that the whole vector $\big(\Upsilon(T_v)\big)_{v\in V_{int}(T)}$ can also be computed in time $O(n)$. Now,
$$
\begin{array}{rl}
F_2(v)& \displaystyle=\hspace*{-0.4cm}\sum_{\{v_1,v_2\}\subseteq \mathrm{child}(v)} \big(\kappa(v_1)\Upsilon(T_{v_2})+\kappa(v_2)\Upsilon(T_{v_1})\big)\\
& \displaystyle=
\Big(\sum_{v_i\in \mathrm{child}(v)} \kappa(v_i)\Big)\Big(\sum_{v_j\in \mathrm{child}(v)} \Upsilon(T_{v_j})\Big)-
\sum_{v_i\in \mathrm{child}(v)} \kappa(v_i)\Upsilon(T_{v_i})\\
& \displaystyle= \kappa(v)\big(\Upsilon(T_v)-E_3(v))-
\sum_{v_i\in \mathrm{child}(v)} \kappa(v_i)\Upsilon(T_{v_i}),
\end{array}
$$
This implies that each $F_2(v)$ can be computed in time $O(\deg(v))$ and hence that the whole vector 
$\big(F_2(v)\big)_{v\in V_{int}(T)}$ can be computed in time $O(n)$.

Let us focus now on
$$
\begin{array}{rl}
F_3(v) & \displaystyle = \sum_{\{v_1,v_2\}\subseteq \mathrm{child}(v)} \binom{\kappa(v_1)}{2}\binom{\kappa(v_2)}{2}\\
& \displaystyle= \frac{1}{4}\sum_{\{v_1,v_2\}\subseteq \mathrm{child}(v)} \kappa(v_1)^2\kappa(v_2)^2+\frac{1}{4}\sum_{\{v_1,v_2\}\subseteq \mathrm{child}(v)}\kappa(v_1)\kappa(v_2)\\
& \displaystyle \qquad\qquad -\frac{1}{4}\sum_{\{v_1,v_2\}\subseteq \mathrm{child}(v)}\big(\kappa(v_1)^2\kappa(v_2)+\kappa(v_2)^2\kappa(v_1)\big)
\end{array}
$$
In this expression, 
$$
\sum_{\{v_1,v_2\}\subseteq \mathrm{child}(v)}\kappa(v_1)\kappa(v_2)=E_2(v),\ \sum_{\{v_1,v_2\}\subseteq \mathrm{child}(v)}\kappa(v_1)^2\kappa(v_2)^2=E_2(X_v^2),
$$
where $X^2_v=\{\kappa(u)^2: u\in \mathrm{child}(v)\}$,
and hence they are computed in time $O(\deg(v))$. 
As far as the subtrahend goes,
$$
\begin{array}{l}
\displaystyle\sum_{\{v_1,v_2\}\subseteq \mathrm{child}(v)}\big(\kappa(v_1)^2\kappa(v_2)+\kappa(v_2)^2\kappa(v_1)\big)\\
\displaystyle\qquad\qquad =
\Big(\sum_{v_i\in \mathrm{child}(v)} \kappa(v_i)^2\Big)\Big(\sum_{v_j\in \mathrm{child}(v)} \kappa(v_j)\Big)-
\Big(\sum_{v_i\in \mathrm{child}(v)} \kappa(v_i)^3\Big)
\end{array}
$$
and hence it can also be computed in time $O(\deg(v))$. Therefore, the whole vector 
$\big(F_3(v)\big)_{v\in V_{int}(T)}$ can be computed in time $O(n)$.

Let us consider finally $F_1(v)$. We have that
$$
\begin{array}{rl}
\displaystyle F_1(v)& \displaystyle =\hspace*{-0.7cm} \sum_{\{v_1,v_2,v_3\}\subseteq \mathrm{child}(v)} \Big(\binom{\kappa(v_1)}{2}\kappa(v_2)\kappa(v_3)\\
& \displaystyle \hspace*{2cm}+
\binom{\kappa(v_2)}{2}\kappa(v_1)\kappa(v_3)+\binom{\kappa(v_3)}{2}\kappa(v_1)\kappa(v_2)\Big)\\
& \displaystyle =\frac{1}{2}\sum_{\{v_1,v_2,v_3\}\subseteq \mathrm{child}(v)} \kappa(v_1)\kappa(v_2)\kappa(v_3)(\kappa(v_1)+\kappa(v_2)+\kappa(v_3)-3)\\
& \displaystyle =\frac{1}{2}\sum_{\{v_1,v_2,v_3\}\subseteq \mathrm{child}(v)} \big(\kappa(v_1)^2\kappa(v_2)\kappa(v_3)\\
& \displaystyle\hspace*{2cm}+\kappa(v_2)^2\kappa(v_1)\kappa(v_3)+\kappa(v_3)^2\kappa(v_1)\kappa(v_2)\big)
-\frac{3}{2}E_3(v)\\
& \displaystyle =\frac{1}{2}\Big(\sum_{\{v_1,v_2,v_3\}\subseteq \mathrm{child}(v)} \kappa(v_1)\kappa(v_2)\kappa(v_3)\Big)\Big(\sum_{v_i\in \mathrm{child}(v)} \kappa(v_i)\Big)\\
\\
& \displaystyle\qquad\qquad-2\Big(\sum_{\{v_1,v_2,v_3.v_4\}\subseteq \mathrm{child}(v)} \kappa(v_1)\kappa(v_2)\kappa(v_3)\kappa(v_4)\Big)-\frac{3}{2}E_3(v)\\
& \displaystyle=\frac{1}{2}E_3(v)E_1(v)-2E_4(v)-\frac{3}{2}E_3(v)
\end{array}
$$
This expression shows that $F_1(v)$ can  be computed in time $O(\deg(v))$ and therefore the whole vector 
$\big(F_1(v)\big)_{v\in V_{int}(T)}$ can be computed in time $O(n)$. \qed
\end{proof}

\section{Trees with maximum and minimum $\mathit{rQI}$}\label{sec:maximin}

Let $n\geq 4$. In this section we determine which  trees in $\TT_n$ and $\BT_n$ have the largest and smallest corresponding rooted quartet indices. 
The multifurcating case is easy:

\begin{theorem}
The minimum value of $\mathit{rQI}$ in $\TT_n$ is reached exactly at the combs $K_n$, and it is 0. The maximum value
value of $\mathit{rQI}$ in $\TT_n$ is reached exactly at the rooted star $S_n$, and it is $q_4\binom{n}{4}$.
\end{theorem}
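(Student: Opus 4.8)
The plan is to use that $\mathit{rQI}(T)=\sum_{i=1}^4 c_i q_i$, where $c_i=\big|\{Q\in\mathcal{P}_4(\Sigma):\pi(T(Q))=Q_i^*\}\big|$, together with the obvious identity $\sum_{i=0}^4 c_i=\binom{n}{4}$, and to reduce the problem to deciding when all the rooted quartets displayed by $T$ have one and the same shape.

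For the minimum, since $q_0=0<q_1<q_2<q_3<q_4$ we immediately get $\mathit{rQI}(T)\ge 0$, with equality if and only if $c_1=c_2=c_3=c_4=0$, that is, if and only if every rooted quartet displayed by $T$ has shape $Q_0^*$. As the restriction of a comb to any subset of its leaves is again a comb, $\mathit{rQI}(K_n)=0$. For the converse I would argue as follows: if every $4$-leaf restriction of $T$ has shape $Q_0^*$, then every $3$-leaf restriction $T(X)$ equals $(T(Q))(X)$ for any $Q\supseteq X$ with $|Q|=4$, hence equals $Q_0^*(X)=K_3$; this forces $T$ to be bifurcating, since a node of out-degree $\ge 3$, with one leaf chosen below each of three distinct children, would produce a $3$-leaf restriction of shape $S_3$. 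Now if a bifurcating $T$ were not a comb, some internal node would have two children each with at least two descendant leaves, and choosing two leaves below each of them would give a quartet of shape $Q_3^*$, a contradiction. Hence $T=K_n$, and $0$ is reached exactly at $K_n$.

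For the maximum, from $\sum_{i=0}^4 c_i=\binom{n}{4}$ and $q_0<q_1<q_2<q_3<q_4$ we obtain
\[
\mathit{rQI}(T)=\sum_{i=1}^4 c_i q_i\le q_4\sum_{i=1}^4 c_i\le q_4\sum_{i=0}^4 c_i=q_4\binom{n}{4},
\]
with equality if and only if $c_0=c_1=c_2=c_3=0$, i.e. if and only if every quartet of $T$ has shape $Q_4^*$. Since every $4$-leaf restriction of $S_n$ is $S_4=Q_4^*$, we have $\mathit{rQI}(S_n)=q_4\binom{n}{4}$. Conversely, if every $4$-leaf restriction of $T$ has shape $Q_4^*$, then (as above) every $3$-leaf restriction has shape $S_4(X)=S_3$; but if $T\ne S_n$ then the root has a child $v$ that is an internal node, and taking two leaves $a,b$ below distinct children of $v$ and a third leaf $c$ below a child of the root other than $v$ yields $T(\{a,b,c\})$ in which $a$ and $b$ form a cherry, hence of shape $K_3\ne S_3$ — a contradiction. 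Therefore $T=S_n$, and $q_4\binom{n}{4}$ is reached exactly at $S_n$.

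The whole argument is elementary (this is the ``easy'' case announced in the text); the only points demanding a little care are the two case distinctions showing that ``all quartets of shape $Q_0^*$'' characterizes combs and ``all quartets of shape $Q_4^*$'' characterizes the rooted star, and both reduce cleanly to the corresponding statements about $3$-leaf restrictions, so no genuinely hard step is involved.
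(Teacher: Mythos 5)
Your proof is correct and follows essentially the same route as the paper: establish the bounds $0\leq \mathit{rQI}(T)\leq q_4\binom{n}{4}$, note that $K_n$ displays only $Q_0^*$ quartets and $S_n$ only $Q_4^*$ quartets, and then show for uniqueness that a non-comb exhibits a quartet of shape other than $Q_0^*$ (via the same dichotomy of an out-degree $\geq 3$ node versus two cherries) and a non-star a quartet other than $Q_4^*$ (via a non-leaf child of the root). Your detour through $3$-leaf restrictions is a minor cosmetic variation on the paper's direct exhibition of the offending quartet, not a different argument.
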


\begin{proof}
Since the $\mathit{rQI}$-value of a rooted quartet goes from 0 to $q_4$, we have that $0\leq \mathit{rQI}(T)\leq q_4 \binom{n}{4}$,
for every $T\in \TT_n$. Now, all rooted quartets displayed by a comb $K_n$ have shape $Q_0^*$, and therefore $\mathit{rQI}(K_n)=0$, while all rooted quartets displayed by $S_n$ have shape $Q_4^*$, and therefore $\mathit{rQI}(S_n)=q_4 \binom{n}{4}$.

As far as the uniqueness of the trees yielding the maximum and minimum values of $\mathit{rQI}$ goes, notice that, on the one hand, if $T$ is not a comb, then it displays some rooted quartet of shape other than   $Q_0^*$, because it  contains either some internal node of out-degree greater than 2, which becomes the root of some multifurcating rooted quartet, or  two cherries that determine a rooted quartet of shape $Q_3^*$. This implies that if $T\neq K_n$, then $\mathit{rQI}(T)>0$.
On the other hand, if $T\neq S_n$, then its root has some child that is not a leaf and therefore
$T$ displays some rooted quartet of shape other than   $Q_4^*$, which implies that $\mathit{rQI}(T)<q_4 \binom{n}{4}$. 
\qed
\end{proof}

Therefore, the range of $\mathit{rQI}$ on $\TT_n$ goes from 0 to $q_4\binom{n}{4}$. This is one order of magnitude wider than the range of the total cophenetic index  \citep{MRR}, which, going from 0 to $\binom{n}{3}$, was so far the balance index in the literature with the widest range.

We shall now characterize  those \emph{bifurcating} phylogenetic trees with largest $\mathit{rQI}$, or, equivalently, with largest $\mathit{rQIB}$.  
They turn out to be exactly the maximally balanced trees, as defined at the end of Subsection 2.1. The proof is similar to that of the characterization of the bifurcating phylogenetic trees with minimum total cophenetic index provided in Section 4 of \citep{MRR}.

\begin{lemma}\label{lem:min}
Let $T\in \BT_n$ be the bifurcating phylogenetic tree depicted in Fig \ref{fig:min1}.(a).
For every  $i=1,2,3,4$, let 
$n_i=|L(T_i)|$, and assume that $n_1> n_3$ and $n_2>n_4$. Then, $\mathit{rQIB}(T)$ is not maximum in $\BT_n$.
\end{lemma}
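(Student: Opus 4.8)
The plan is to exhibit, out of $T$, another tree $T'\in\BT_n$ on the same leaf set with $\mathit{rQIB}(T')>\mathit{rQIB}(T)$; this at once shows that $\mathit{rQIB}(T)$ is not the maximum value of $\mathit{rQIB}$ on $\BT_n$. Writing $T=(T_1\star T_2)\star(T_3\star T_4)$ for the tree of Fig.~\ref{fig:min1}.(a), I take
$$
T'=(T_1\star T_4)\star(T_2\star T_3),
$$
i.e.\ the tree obtained from $T$ by interchanging the subtrees hanging at the positions of $T_2$ and $T_4$ (equivalently, of $T_1$ and $T_3$). Since $T'$ has the same set of leaves as $T$, it indeed lies in $\BT_n$.

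The second step is to compute $\mathit{rQIB}(T)$ and $\mathit{rQIB}(T')$ by applying the recurrence of Lemma~\ref{lem:rec} three times to each of them: once at the root and once at each of its two children. This gives
$$
\mathit{rQIB}(T)=\sum_{i=1}^{4}\mathit{rQIB}(T_i)+\binom{n_1}{2}\binom{n_2}{2}+\binom{n_3}{2}\binom{n_4}{2}+\binom{n_1+n_2}{2}\binom{n_3+n_4}{2},
$$
and the analogous expression for $\mathit{rQIB}(T')$ with $n_2$ and $n_4$ interchanged; in particular the terms $\sum_i\mathit{rQIB}(T_i)$ cancel when we form $\mathit{rQIB}(T')-\mathit{rQIB}(T)$, which therefore reduces to a purely numerical quantity in $n_1,n_2,n_3,n_4$.

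The third step is to simplify that difference. I would break up the two ``top'' products $\binom{n_1+n_2}{2}\binom{n_3+n_4}{2}$ and $\binom{n_1+n_4}{2}\binom{n_2+n_3}{2}$ by means of $\binom{x+y}{2}=\binom{x}{2}+\binom{y}{2}+xy$; after expanding, the symmetric block $\sum_{i<j}\binom{n_i}{2}\binom{n_j}{2}$ and the term $n_1n_2n_3n_4$ cancel between the two trees, and regrouping the remaining terms by the common factors $(n_1-n_3)$ and $(n_2-n_4)$ together with the identity $\binom{a}{2}b-\binom{b}{2}a=\frac{1}{2}ab(a-b)$ yields
$$
\mathit{rQIB}(T')-\mathit{rQIB}(T)=\frac{(n_1-n_3)(n_2-n_4)(n_1n_3+n_2n_4)}{2}.
$$
Since $n_1>n_3$ and $n_2>n_4$ and all four numbers are positive, the right-hand side is a strictly positive integer, so $\mathit{rQIB}(T')>\mathit{rQIB}(T)$, as required.

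The only delicate point is the bookkeeping in the third step — one must keep track of all the cross terms produced by expanding the two top products and check that everything except the displayed factor cancels — but there is no conceptual difficulty; morally this is just the ``trade a heavier subtree against a lighter one to gain balance'' move, which Lemma~\ref{lem:rec} turns into a one-line inequality. The one thing to be careful about is that it is the $(T_1\star T_4)\star(T_2\star T_3)$ rearrangement, and not the $(T_1\star T_3)\star(T_2\star T_4)$ one, that is guaranteed to increase $\mathit{rQIB}$ under the hypotheses $n_1>n_3$ and $n_2>n_4$ (the latter swap can in fact decrease it).
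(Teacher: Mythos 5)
Your choice of the rearranged tree, the application of Lemma \ref{lem:rec}, and the algebra leading to the difference $\frac{1}{2}(n_1-n_3)(n_2-n_4)(n_1n_3+n_2n_4)>0$ all coincide with the paper's proof, and your computation (including the cancellation of the symmetric block and of $n_1n_2n_3n_4$, and the identity $\binom{a}{2}b-\binom{b}{2}a=\frac{1}{2}ab(a-b)$) is correct. There is, however, one genuine gap: by writing $T=(T_1\star T_2)\star(T_3\star T_4)$ you silently assume that the node $z$ of Fig.~\ref{fig:min1}.(a) is the root of $T$. In the figure, the four subtrees hang below an internal node $z$ that sits inside a larger ambient tree $T_0$, and in the only place the lemma is invoked (the proof of Theorem \ref{th:min}) $z$ is a lowest non-balanced node, which need not be the root. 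As written, your argument proves the lemma only in the special case $T=T_z$.

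To cover the general case you must first show that the swap affects only those quartets $Q$ with $Q\subseteq L(T_z)$, i.e.\ that $\mathit{rQIB}(T')-\mathit{rQIB}(T)=\mathit{rQIB}(T'_z)-\mathit{rQIB}(T_z)$; this localization is where the paper spends the first half of its proof. The cases $|Q\cap L(T_z)|\le 1$ are immediate since then $T(Q)=T'(Q)$ is determined by $T_0$ alone; for $|Q\cap L(T_z)|=2$ the two leaves inside $T_z$ coalesce below $z$ in both trees and hence form the same cherry in both restrictions; and for $|Q\cap L(T_z)|=3$ the restrictions $T(Q)$ and $T'(Q)$ may differ as phylogenetic trees but are both combs, hence both of shape $Q_0^*$ and both contribute $0$ to $\mathit{rQIB}$. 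Adding this short case analysis makes your proof complete; everything from that point on is exactly the paper's argument, including your (correct) closing remark that it is the $(T_1\star T_4)\star(T_2\star T_3)$ swap, not the $(T_1\star T_3)\star(T_2\star T_4)$ one, that is guaranteed to increase the index.
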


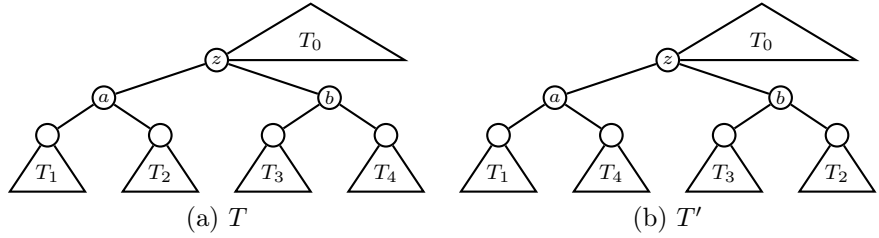
\begin{figure}[htb]
\begin{center}
\begin{tikzpicture}[thick,>=stealth,scale=0.25]
\draw(0,2) node[tre] (v1) {};   
\draw(6,2) node[tre] (v2) {}; 
\draw(12,2) node[tre] (v3) {}; 
\draw(18,2) node[tre] (v4) {}; 
\draw(3,4) node[tre] (a) {}; \etq a
\draw(15,4) node[tre] (b) {}; \etq b
\draw(9,6) node[tre] (z) {}; \etq z
\draw (z)--(19,6)--(14,9)--(z);
\draw(14,7) node  {\footnotesize $T_0$};
\draw (z)--(a);
\draw (z)--(b);
\draw (a)--(v1);
\draw (a)--(v2);
\draw (b)--(v3);
\draw (b)--(v4);
\draw (v1)--(-2,-1)--(2,-1)--(v1);
\draw(0,0) node  {\footnotesize $T_1$};
\draw (v2)--(4,-1)--(8,-1)--(v2);
\draw(6,0) node  {\footnotesize $T_2$};
\draw (v3)--(10,-1)--(14,-1)--(v3);
\draw(12,0) node  {\footnotesize $T_3$};
\draw (v4)--(16,-1)--(20,-1)--(v4);
\draw(18,0) node  {\footnotesize $T_4$};
\draw(9,-2.5) node {(a) $T$};
\end{tikzpicture}
\quad
\begin{tikzpicture}[thick,>=stealth,scale=0.25]
\draw(0,2) node[tre] (v1) {};   
\draw(6,2) node[tre] (v2) {}; 
\draw(12,2) node[tre] (v3) {}; 
\draw(18,2) node[tre] (v4) {}; 
\draw(3,4) node[tre] (a) {}; \etq a
\draw(15,4) node[tre] (b) {}; \etq b
\draw(9,6) node[tre] (z) {}; \etq z
\draw (z)--(19,6)--(14,9)--(z);
\draw(14,7) node  {\footnotesize $T_0$};
\draw (z)--(a);
\draw (z)--(b);
\draw (a)--(v1);
\draw (a)--(v2);
\draw (b)--(v3);
\draw (b)--(v4);
\draw (v1)--(-2,-1)--(2,-1)--(v1);
\draw(0,0) node  {\footnotesize $T_1$};
\draw (v2)--(4,-1)--(8,-1)--(v2);
\draw(6,0) node  {\footnotesize $T_4$};
\draw (v3)--(10,-1)--(14,-1)--(v3);
\draw(12,0) node  {\footnotesize $T_3$};
\draw (v4)--(16,-1)--(20,-1)--(v4);
\draw(18,0) node  {\footnotesize $T_2$};
\draw(9,-2.5) node {(b) $T'$};
\end{tikzpicture}
\end{center}
\caption{\label{fig:min1} 
(a) The tree $T$ in the statement of Lemma \ref{lem:min}.
(b) The tree $T'$ in the proof of Lemma \ref{lem:min}.}
\end{figure}

\begin{proof}
Let $T'$  be the tree obtained from $T$ by interchanging $T_2$ and $T_4$; see Fig \ref{fig:min1}.(b). We shall prove that $\mathit{rQIB}(T')>\mathit{rQIB}(T)$.

Let $\Sigma_z$ be the set of labels of $T_z$, which is also the set of labels of $T_z'$.
 To simplify the language, we shall understand the common subtree $T_0$ of $T$ and $T'$ as a phylogenetic tree on $([n]\setminus \Sigma_z)\cup\{z\}$.
Then, for every $Q=\{a,b,c,d\}\in \mathcal{P}_4([n])$:
\begin{itemize}
\item If $Q\cap \Sigma_z=\emptyset$, then $T(Q)=T'(Q)=T_0(Q)$.
\item If $Q\cap  \Sigma_z$ is a single label, say $d$, then $T(Q)=T'(Q)=T_0(\{a,b,c,z\})$.
\item If $Q\cap  \Sigma_z$ consists of two labels, say $c,d$, then $T(Q)=T'(Q)$. More specifically: $T(Q)=T'(Q)=((a,b),(c,d))$ when $T_0(\{a,b,z\})=((a,b),z)$; $T(Q)=T'(Q)=(a, (b,(c,d)))$ when $T_0(\{a,b,z\})=(a,(b,z))$; and $T(Q)=T'(Q)=(b, (a,(c,d)))$ when $T_0(\{a,b,z\})=(b,(a,z))$.
\item If $Q\cap  \Sigma_z$ consists of three labels,  then $T(Q)$ and $T'(Q)$ are both combs.
\end{itemize}
Therefore, $T(Q)$ and $T'(Q)$ can only be different when $Q\subseteq  \Sigma_z$, in which case
 $T(Q)=T_z(Q)$ and $T'(Q)=T'_z(Q)$. This implies that
$$
\mathit{rQIB}(T')-\mathit{rQIB}(T)=\mathit{rQIB}(T_z')-\mathit{rQIB}(T_z).
$$
Now, to compute the difference in the right hand side of this equality,
we apply Lemma \ref{lem:rec}:
$$
\begin{array}{rl}
\mathit{rQIB}(T_z) & \displaystyle =
\mathit{rQIB}(T_1)+\mathit{rQIB}(T_2) +\mathit{rQIB}(T_3)+\mathit{rQIB}(T_4)\\
& \displaystyle\qquad+\binom{n_1}{2}\binom{n_2}{2}+\binom{n_3}{2}\binom{n_4}{2}+ \binom{n_1+n_2}{2}\binom{n_3+n_4}{2}\\[2ex]
 \mathit{rQIB}(T'_z) & \displaystyle=
\mathit{rQIB}(T_1)+\mathit{rQIB}(T_4)+\mathit{rQIB}(T_2) +\mathit{rQIB}(T_3)\\
& \displaystyle\qquad+\binom{n_1}{2}\binom{n_4}{2}+\binom{n_2}{2}\binom{n_3}{2}+ \binom{n_1+n_4}{2}\binom{n_2+n_3}{2}
\end{array}
$$
and hence
$$
\mathit{rQIB}(T'_z)-\mathit{rQIB}(T_z)=\frac{1}{2}(n_1-n_3)(n_2-n_4)(n_1n_3+n_2n_4)>0
$$
because $n_1> n_3$ and $n_2>n_4$ by assumption.  \qed
\end{proof}

\begin{lemma}\label{lem:min2}
Let $T\in \BT_n$ be a bifurcating phylogenetic tree containing a leaf whose sibling has at least 3 descendant leaves.
Then, $\mathit{rQIB}(T)$ is not maximum in $\BT_n$.
\end{lemma}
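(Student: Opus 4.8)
The plan is to exhibit an explicit rearrangement of $T$ that strictly increases $\mathit{rQIB}$, in the spirit of Lemma \ref{lem:min}. Let $x$ be a leaf whose sibling $v$ satisfies $\kappa(v)\geq 3$, let $p$ be their common parent, and let $v_1,v_2$ be the two children of $v$. Since $\kappa(v_1)+\kappa(v_2)=\kappa(v)\geq 3$ and both are at least $1$, we may label the children of $v$ so that $a:=\kappa(v_1)\geq 2$ and $b:=\kappa(v_2)\geq 1$. Writing $T_p$ for the subtree of $T$ rooted at $p$ (identified, as in the paper, with the phylogenetic tree it carries), we have $T_p=\{x\}\star(T_{v_1}\star T_{v_2})$. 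Let $T'$ be obtained from $T$ by replacing $T_p$ with $T'_p:=T_{v_1}\star(\{x\}\star T_{v_2})$. Then $T'_p$ is again bifurcating and $L(T'_p)=L(T_p)=:\Sigma_p$, so $T'\in\BT_n$.

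The first step is to show that $\mathit{rQIB}(T')-\mathit{rQIB}(T)=\mathit{rQIB}(T'_p)-\mathit{rQIB}(T_p)$, by the same local analysis used in the proof of Lemma \ref{lem:min}. Let $T_0$ denote the common tree obtained from $T$ (equivalently, from $T'$) by contracting $\Sigma_p$ to a single leaf. For $Q\in\mathcal P_4([n])$: if $|Q\cap\Sigma_p|\leq 1$, then $T(Q)$ and $T'(Q)$ are both determined by $T_0$ and hence equal; if $|Q\cap\Sigma_p|=2$, then $T(Q)=T'(Q)$ as well, because the two leaves of $Q$ lying in $\Sigma_p$ form a cherry in the restriction irrespective of their relative position inside $T_p$, and this cherry is joined to the remaining two leaves exactly as prescribed by $T_0$; if $|Q\cap\Sigma_p|=3$, then $T(Q)$ and $T'(Q)$ are both combs (the restriction of a bifurcating tree to three leaves is a $K_3$, and the fourth leaf, lying outside $\Sigma_p$, hangs above it), hence both of shape $Q_0^*$, so they contribute equally, namely $0$, to $\mathit{rQIB}$; and if $Q\subseteq\Sigma_p$, then $T(Q)=T_p(Q)$ and $T'(Q)=T'_p(Q)$. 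Summing the counts of quartets of shape $Q_3^*$ yields the claimed identity.

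The second step is the computation of $\mathit{rQIB}(T'_p)-\mathit{rQIB}(T_p)$ via two applications of Lemma \ref{lem:rec} on each side, using that $\mathit{rQIB}$ vanishes on trees with fewer than $4$ leaves and that $\binom{1}{2}=0$. This gives $\mathit{rQIB}(T_p)=\mathit{rQIB}(T_{v_1})+\mathit{rQIB}(T_{v_2})+\binom{a}{2}\binom{b}{2}$ and $\mathit{rQIB}(T'_p)=\mathit{rQIB}(T_{v_1})+\mathit{rQIB}(T_{v_2})+\binom{a}{2}\binom{b+1}{2}$, so that $\mathit{rQIB}(T'_p)-\mathit{rQIB}(T_p)=\binom{a}{2}\bigl(\binom{b+1}{2}-\binom{b}{2}\bigr)=b\binom{a}{2}$. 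Since $a\geq 2$ and $b\geq 1$, this is at least $1$, hence positive; combined with the first step, $\mathit{rQIB}(T')>\mathit{rQIB}(T)$, so $\mathit{rQIB}(T)$ is not maximum in $\BT_n$.

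The only point requiring a little care is the case $|Q\cap\Sigma_p|=3$ in the first step: there $T(Q)$ and $T'(Q)$ need not be the same phylogenetic tree, and one must note that they are nonetheless both of shape $Q_0^*$, so they cancel in the difference of $\mathit{rQIB}$-values (which counts only quartets of shape $Q_3^*$). Everything else is routine bookkeeping over the cases for $|Q\cap\Sigma_p|$ together with direct applications of Lemma \ref{lem:rec}.
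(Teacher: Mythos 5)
Your proposal is correct and follows essentially the same route as the paper: the same local rearrangement $\{x\}\star(T_{v_1}\star T_{v_2})\rightsquigarrow T_{v_1}\star(\{x\}\star T_{v_2})$, the same case analysis on $|Q\cap\Sigma_p|$ to localize the change, and the same application of Lemma \ref{lem:rec} yielding the gain $b\binom{a}{2}>0$ (the paper's $n_2\binom{n_1}{2}$). Your explicit treatment of the $|Q\cap\Sigma_p|=3$ case, where the two restrictions may differ as phylogenetic trees but are both of shape $Q_0^*$, is a point the paper leaves implicit by referring back to the proof of Lemma \ref{lem:min}.
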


\begin{figure}[htb]
\begin{center}
\begin{tikzpicture}[thick,>=stealth,scale=0.25]
\draw(0,2) node[tre] (v1) {};   
\draw(6,2) node[tre] (v2) {}; 
\draw(3,4) node[tre] (a) {}; \etq a
\draw(15,-0.5) node[tre] (m) {\scriptsize $\ell$};
\draw(9,6) node[tre] (z) {}; \etq z
\draw (z)--(16,6)--(12,9)--(z);
\draw(12,7) node  {\footnotesize $T_0$};
\draw (z)--(a);
\draw (z)--(m);
\draw (a)--(v1);
\draw (a)--(v2);
\draw (v1)--(-2,-1)--(2,-1)--(v1);
\draw(0,0) node  {\footnotesize $T_1$};
\draw (v2)--(4,-1)--(8,-1)--(v2);
\draw(6,0) node  {\footnotesize $T_2$};
\draw(7.5,-2.5) node {(a) $T$};
\end{tikzpicture}
\quad
\begin{tikzpicture}[thick,>=stealth,scale=0.25]
\draw(0,2) node[tre] (v1) {};   
\draw (v1)--(-2,-1)--(2,-1)--(v1);
\draw(0,0) node  {\footnotesize $T_1$};
\draw(8,6) node[tre] (z) {}; \etq z
\draw (z)--(16,6)--(12,9)--(z);
\draw(12,7) node  {\footnotesize $T_0$};
\draw (z)--(v1);
\draw(6,2) node[tre] (v2) {}; 
\draw (v2)--(4,-1)--(8,-1)--(v2);
\draw(6,0) node  {\footnotesize $T_2$};
\draw(11,4) node[tre] (b) {}; \etq b
\draw(16,-0.5) node[tre] (m)  {\scriptsize $\ell$};
\draw (z)--(b);
\draw (b)--(m);
\draw (b)--(v2);
\draw(7.5,-2.5) node {(b) $T'$};
\end{tikzpicture}
\end{center}
\caption{\label{fig:min3} 
(a) The tree $T$ in the statement of Lemma \ref{lem:min2}. (b) The tree $T'$ in the proof of Lemma \ref{lem:min2}.}
\end{figure}
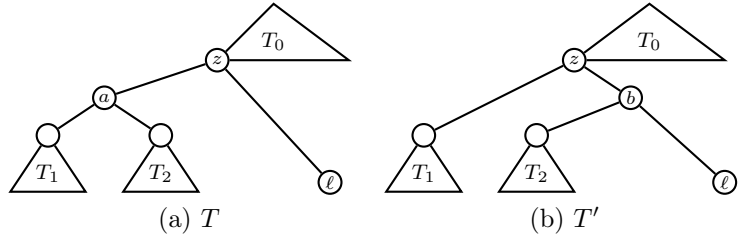

\begin{proof}
Assume that the tree $T\in \BT_n$ in the statement is the one depicted in Fig \ref{fig:min3}.(a), with $\ell$ a leaf such that the subtree $T_a$ rooted at its sibling $a$ has $|L(T_a)|\geq 3$. Let 
$n_1=|L(T_1)|$ and $n_2=|L(T_2)|$ and assume $n_1\geq n_2$: then, since $n_1+n_2\geq 3$,  $n_1\geq 2$. Let then $T'$ be the tree depicted in Fig \ref{fig:min3}.(b): we shall prove that $\mathit{rQIB}(T')>\mathit{rQIB}(T)$.
Reasoning as in the proof of the last lemma, we deduce that
$$
\mathit{rQIB}(T')-\mathit{rQIB}(T)=\mathit{rQIB}(T_z')-\mathit{rQIB}(T_z).
$$
Now, using Lemma \ref{lem:rec}, we have that
$$
\begin{array}{rl}
\mathit{rQIB}(T_z) & \displaystyle = \mathit{rQIB}(T_a)=\mathit{rQIB}(T_1)+\mathit{rQIB}(T_2)+\binom{n_1}{2}\cdot \binom{n_2}{2}\\
\mathit{rQIB}(T'_z) & \displaystyle = \mathit{rQIB}(T_1)+\mathit{rQIB}(T_b)+\binom{n_1}{2}\cdot \binom{n_2+1}{2}\\
& \displaystyle=
\mathit{rQIB}(T_1)+\mathit{rQIB}(T_2)+\binom{n_1}{2}\cdot \binom{n_2+1}{2}
\end{array}
$$
and therefore 
$$
\mathit{rQIB}(T'_z)-\mathit{rQIB}(T_z)=n_2\binom{n_1}{2}>0
$$
as we wanted to prove.  \qed
\end{proof}

\begin{theorem}\label{th:min}
For every $T\in \BT_n$, $\mathit{rQIB}(T)$ is maximum in $\BT_n$ if, and only if, $T$ is maximally balanced.
\end{theorem}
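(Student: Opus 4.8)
The plan is to combine the two exchange lemmas, Lemma \ref{lem:min} and Lemma \ref{lem:min2}, with the uniqueness of the maximally balanced tree in $\BT_n^*$ recalled in Subsection 2.1 (see \citep{MRR}). Since $\BT_n^*$ is finite, $\mathit{rQIB}$ attains its maximum on $\BT_n$ at some shape, so it suffices to prove that \emph{every} maximizer is maximally balanced; then the uniqueness of the maximally balanced tree forces the maximizer to be unique and to coincide with it, which yields both implications of the statement at once.

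So suppose $T\in\BT_n$ has maximum $\mathit{rQIB}$ but is not maximally balanced. Among all internal nodes $v$ with $bal_T(v)\geq 2$, I would pick one with $\kappa(v)$ minimum, and write its children as $v_1,v_2$ with $\kappa(v_1)\geq\kappa(v_2)+2$. Every internal node has strictly fewer descendant leaves than $v$ only if it lies strictly below $v$; by minimality of $\kappa(v)$ all such nodes are balanced, in particular $v_1$ and $v_2$ whenever they are internal. Now split into cases. If $v_2$ is a leaf, then $\kappa(v_1)=\kappa(v)-1\geq 3$, so $v_2$ is a leaf whose sibling has at least $3$ descendant leaves, and Lemma \ref{lem:min2} says $\mathit{rQIB}(T)$ is not maximum, a contradiction. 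Otherwise $\kappa(v_1)\geq 4$ (and $\kappa(v_2)\geq 2$), so both $v_1$ and $v_2$ are internal, hence balanced; writing the subtree sizes below the children of $v_1$ as $n_1=\lceil\kappa(v_1)/2\rceil\geq n_2=\lfloor\kappa(v_1)/2\rfloor$ and below those of $v_2$ as $n_3=\lceil\kappa(v_2)/2\rceil\geq n_4=\lfloor\kappa(v_2)/2\rfloor$, the inequality $\kappa(v_1)\geq\kappa(v_2)+2$ gives $n_1>n_3$ and $n_2>n_4$. Applying Lemma \ref{lem:min} with $z=v$, $a=v_1$, $b=v_2$ (the part of $T$ above $v$ playing the role of $T_0$, or $T_0$ reducing to the single node $z$ when $v$ is the root of $T$) again contradicts maximality. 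This proves that a maximizer is maximally balanced, and the uniqueness argument above finishes the proof.

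The step I expect to need the most care is the \emph{choice} of $v$: Lemma \ref{lem:min} only produces an improving move when the four subtree sizes $n_1,n_2,n_3,n_4$ are comparable in the required pairwise way ($n_1>n_3$ and $n_2>n_4$), and this comparability is exactly what balance of $v_1$ and $v_2$ supplies. Picking $v$ to be an arbitrary unbalanced node (say the root) need not give such a configuration, so one genuinely must take $v$ to be a lowest unbalanced node, i.e.\ one of minimum $\kappa$. A secondary point to verify is the boundary behaviour of Lemma \ref{lem:min} when $v$ is the root, where there is no genuine ``$T_0$'' above $z$; but its proof only ever uses the decomposition of $T_z$ through Lemma \ref{lem:rec}, and the exchange changes $\mathit{rQIB}$ only inside $T_z$, so it applies verbatim with $T_0$ taken to be the one-node tree on $\{z\}$.

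I would avoid the alternative, purely recursive route (induction on $n$ using Lemma \ref{lem:rec}, showing that the balanced split $n=\lfloor n/2\rfloor+\lceil n/2\rceil$ maximizes $M(n_1)+M(n-n_1)+\binom{n_1}{2}\binom{n-n_1}{2}$, where $M(k)=\max_{T\in\BT_k}\mathit{rQIB}(T)$): while feasible, it requires a separate monotonicity/convexity analysis of $M(k)$, whereas the exchange argument uses precisely the two lemmas the paper has already set up for this purpose.
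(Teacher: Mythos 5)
Your proof is correct and follows essentially the same route as the paper's: choose a lowest unbalanced node $z$ (all proper descendant internal nodes balanced), dispose of the leaf-child case via Lemma~\ref{lem:min2}, and otherwise use the balance of the two children to apply Lemma~\ref{lem:min}, finishing with the uniqueness of the maximally balanced shape. The only (harmless) difference is at the end of the second case: you verify both hypotheses $n_1>n_3$ and $n_2>n_4$ directly from the floor/ceiling formulas and contradict maximality at once, whereas the paper only derives $n_1>n_3$, invokes Lemma~\ref{lem:min} contrapositively to force $n_1>n_3\geq n_4\geq n_2$, and then reaches a numerical contradiction --- your version is if anything slightly more streamlined.
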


\begin{proof}
Assume that   $\mathit{rQIB}(T)$ is maximum in $\BT_n$ but that $T\in \BT_n$ is not maximally balanced, and let us reach a contradiction. Let $z$ be a non-balanced internal node in $T$ such that all its proper descendant internal nodes are balanced, and let  $a$ and $b$ be its children, with $\kappa(a)\geq \kappa(b)+2$.

If $b$ is a leaf, then, by Lemma \ref{lem:min2}, $\mathit{rQIB}(T)$ cannot be maximum in $\BT_n$. Therefore, $a$ and $b$ are internal, and hence balanced. Let $v_1,v_2$ be the children of $a$, $v_3,v_4$ the children of $b$, and $n_i=\kappa(v_i)$, for $i=1,2,3,4$. Without any loss of generality, we shall assume that $n_1\geq n_2$ and $n_3\geq n_4$. Then, since $a$ and $b$ are balanced, $n_1=n_2$ or $n_2+1$ and $n_3=n_4$ or $n_4+1$. Then, $n_1+n_2=\kappa(a)\geq \kappa(b)+2=n_3+n_4+2$ implies that  $n_1>n_3$.

Now, by Lemma \ref{lem:min}, since by assumption $\mathit{rQIB}(T)$ is maximum on $\BT_n$, it must happen that 
$n_1>n_3\geq n_4\geq n_2$. This forbids the equality $n_1=n_2$, and hence $n_1-1=n_2=n_3=n_4$. But this contradicts 
that $n_1+n_2\geq n_3+n_4+2$.

This implies that a non maximally balanced tree in $\BT_n$ cannot have maximum $\mathit{rQIB}$, and therefore
the maximum $\mathit{rQIB}$ in $\BT_n$ is reached at the maximally balanced trees, which have all the same shape and hence the same $\mathit{rQIB}$ index.
  \qed
\end{proof}

So, the only bifurcating trees with maximum $\mathit{rQIB}$ (and hence with maximum $\mathit{rQI}$) are the maximally balanced.
This maximum value of $\mathit{rQIB}$ on $\BT_n$ is given by the following recurrence. 

\begin{lemma}\label{lem:recmin}
For every $n$, let $b_n$ be the maximum of $\mathit{rQIB}$ on $\BT_n$. Then, $b_1=b_2=b_3=0$ and
$$
b_n=b_{\lceil n/2\rceil}+b_{\lfloor n/2\rfloor}+\binom{\lceil n/2\rceil}{2}\cdot\binom{\lfloor n/2\rfloor}{2},\quad \mbox{ for }n\geq 4.
$$
\end{lemma}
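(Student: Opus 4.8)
The plan is to combine Theorem~\ref{th:min}, which identifies the maximally balanced tree as the (unique up to shape) maximizer of $\mathit{rQIB}$ in $\BT_n$, with the root-join recursion for $\mathit{rQIB}$ established in Lemma~\ref{lem:rec}.

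First I would dispose of the base cases. Any bifurcating phylogenetic tree with $n\leq 3$ leaves has fewer than $4$ leaves, hence displays no rooted quartet at all, so $\mathit{rQIB}$ is identically $0$ on $\BT_1$, $\BT_2$ and $\BT_3$; this gives $b_1=b_2=b_3=0$.

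Next, fix $n\geq 4$ and let $T\in\BT_n$ be the maximally balanced tree, so that $b_n=\mathit{rQIB}(T)$ by Theorem~\ref{th:min}. By the recursive characterization of maximal balance recalled in Subsection~2.1, the root of $T$ is balanced --- its two children have $\lceil n/2\rceil$ and $\lfloor n/2\rfloor$ descendant leaves --- and the two subtrees hanging from the root are themselves maximally balanced. Hence $T=T_1\star T_2$, where $T_1$ is the maximally balanced tree on $\lceil n/2\rceil$ leaves and $T_2$ the one on $\lfloor n/2\rfloor$ leaves; in particular $\mathit{rQIB}(T_1)=b_{\lceil n/2\rceil}$ and $\mathit{rQIB}(T_2)=b_{\lfloor n/2\rfloor}$, again by Theorem~\ref{th:min}. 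Applying Lemma~\ref{lem:rec} to $T=T_1\star T_2$ then yields
$$
b_n=\mathit{rQIB}(T)=\mathit{rQIB}(T_1)+\mathit{rQIB}(T_2)+\binom{\lceil n/2\rceil}{2}\binom{\lfloor n/2\rfloor}{2}=b_{\lceil n/2\rceil}+b_{\lfloor n/2\rfloor}+\binom{\lceil n/2\rceil}{2}\binom{\lfloor n/2\rfloor}{2},
$$
which is the asserted recurrence.

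The only point that needs a little care --- and the closest thing to an obstacle --- is the identification of the two root subtrees of the maximally balanced $n$-leaf tree with the maximally balanced $\lceil n/2\rceil$- and $\lfloor n/2\rfloor$-leaf trees; but this is precisely the recursive definition of ``maximally balanced'' quoted in Subsection~2.1, together with the uniqueness of such a tree for a fixed number of leaves, so no real difficulty arises. Everything else is a direct substitution into Lemma~\ref{lem:rec}.
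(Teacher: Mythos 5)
Your proof is correct and follows essentially the same route as the paper's: both invoke Theorem~\ref{th:min} to identify the maximizer as the maximally balanced tree, use the recursive characterization of maximal balance to split it at the root into maximally balanced subtrees with $\lceil n/2\rceil$ and $\lfloor n/2\rfloor$ leaves, and then apply the root-join recurrence of Lemma~\ref{lem:rec}. Your write-up simply spells out the base cases and the identification of the root subtrees more explicitly than the paper's one-sentence argument.
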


\begin{proof}
This recurrence for $b_n$ is a direct consequence of Lemma \ref{lem:rec} and the fact that the root of a maximally balanced tree in $\BT_n$ is balanced and the subtrees rooted at their children are maximally balanced.   \qed
\end{proof}

The sequence $b_n$ seems to be new, in the sense that it has no relation with any sequence previously contained  in Sloane's \textsl{On-Line Encyclopedia of Integer Sequences}  \citep{Sloane}.  Its values for $n=4,\ldots,20$ are
$$
1,3,9,19,   38,   64,  106 , 162,  243,  343,  479,  645,  860, 1110, 1424,
1790, 2237.
$$
It is easy to prove, using the Master theorem for solving recurrences \citep[Thm. 4.1]{Cormen}, that $b_n$ grows asymptotically in $O(n^4)$. Moreover, it is easy to compute $b_{2^n}$ from this recurrence, yielding
$$
b_{2^n} = \Big(\frac{4}{7 (2^n - 3)} + \frac{3}{7}\Big)\binom{2^{n}}{4}.
$$
In particular, $b_{2^n}/\binom{2^{n}}{4}\stackrel{n\to\infty}{\longrightarrow} 3/7$, which is in agreement with the probability of the fully symmetric rooted quartet $Q_3^*$ under the $\beta$-model when $\beta\to\infty$; cf. Section 4.1 in \citep{Ald1}.

%

{\begin{remark}
When the range of values of a shape index $I$ grows with the number of leaves $n$  of the phylogenetic trees, as it is the case with $\mathit{rQI}$ and $\mathit{rQIB}$, it makes no sense to compare directly its value on two trees with different numbers of leaves. To overcome this drawback, one usually \emph{normalizes} the index, so that its range becomes independent on $n$. A suitable way to do that is to use the generic affine transformation
$$
\widetilde{I}(T)=\frac{I(T)-\min I(\TT_n)}{\max I(\TT_n)-\min I(\TT_n)}
$$
where $n$ stands for the number of leaves of the tree $T$ and $I(\TT_n)$ denotes the set of values of $I$ on $\TT_n$.  In this way, for every number of leaves, the minimum value of the normalized index is always 0 and the maximum value is always 1.

 As to our $\mathit{rQI}$, its minimum value is always 0, but its maximum depends on whether we are considering multifurcating or bifurcating trees. Therefore, we propose two normalized versions of this index:
\begin{itemize}
\item On $\TT_n$, $\widetilde{\mathit{rQI}}(T)=\mathit{rQI}(T)/(q_4\binom{n}{4})$.

\item On $\BT_n$, $\widetilde{\mathit{rQIB}}(T)=\mathit{rQIB}(T)/b_n$, with $b_n$ computed by means of the recurrence given in Lemma \ref{lem:recmin}.
\end{itemize}
\end{remark}}

\section{The expected value and the variance of $\mathit{rQI}$}\label{sec:QI}

Let $P_n$ be a probabilistic model of phylogenetic trees
 and  $\mathit{rQI}_n$  the random variable that chooses a  phylogenetic tree $T\in \TT_n$ with probability distribution $P_n$ and computes $\mathit{rQI}(T)$. 
 In this section we are interested in obtaining expressions for the expected value $E_{P}(\mathit{rQI}_n)$ and the variance $Var_{P}(\mathit{rQI}_n)$
 of $\mathit{rQI}_n$ under suitable models $P_n$.
 
Next lemma shows that, to compute these values, we can restrict ourselves to work with unlabeled trees. 
Let $P^*_n$ the probabilistic model of  trees induced by $P_n$ and $\mathit{rQI}^*_n$  the random variable that chooses a tree $T^*\in \TT^*_n$ with probability distribution $P^*_n$ and computes $\mathit{rQI}(T^*)$, defined as $\mathit{rQI}(T)$ for some phylogenetic tree $T$ of shape $T^*$.

\begin{lemma}\label{lem:igualesp}
For every $n\geq 1$, the distributions of $\mathit{rQI}_n$ and $\mathit{rQI}_n^*$ are the same. In particular,
their expected values and their variances  are the same. 
\end{lemma}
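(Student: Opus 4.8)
The plan is to exploit the fact, already observed right after the definition of $\mathit{rQI}$, that it is a \emph{shape index}: any two phylogenetic trees in $\TT_n$ with the same shape have the same rooted quartet index, so $\mathit{rQI}(T)$ depends on $T$ only through $\pi(T)$. Consequently the random variable $\mathit{rQI}_n$ factors through the shape map $\pi\colon\TT_n\to\TT_n^*$, and comparing it with $\mathit{rQI}_n^*$ reduces to comparing the push-forward of $P_n$ along $\pi$ with $P_n^*$, which is exactly how $P_n^*$ is defined.

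Concretely, I would fix an arbitrary value $x\in\RR$ and compute
$$
\mathrm{Prob}(\mathit{rQI}_n=x)=\sum_{\substack{T\in\TT_n\\ \mathit{rQI}(T)=x}} P_n(T).
$$
Since $\mathit{rQI}(T)$ only depends on $\pi(T)$, the index set $\{T\in\TT_n:\mathit{rQI}(T)=x\}$ is a union of the fibers $\pi^{-1}(T^*)$ over those shapes $T^*\in\TT_n^*$ with $\mathit{rQI}(T^*)=x$, and these fibers are pairwise disjoint and partition it. Grouping the sum accordingly and using $P_n^*(T^*)=\sum_{T\in\pi^{-1}(T^*)}P_n(T)$ gives
$$
\mathrm{Prob}(\mathit{rQI}_n=x)=\sum_{\substack{T^*\in\TT_n^*\\ \mathit{rQI}(T^*)=x}}\ \sum_{T\in\pi^{-1}(T^*)} P_n(T)=\sum_{\substack{T^*\in\TT_n^*\\ \mathit{rQI}(T^*)=x}} P_n^*(T^*)=\mathrm{Prob}(\mathit{rQI}_n^*=x).
$$
Since the two variables take values in the same (finite) set and assign it the same probabilities, they have the same distribution; the statements about equality of expected values and of variances are then immediate, as these are determined by the distribution.

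I do not expect a genuine obstacle here: the argument is a direct bookkeeping consequence of $\mathit{rQI}$ being constant on shape classes together with the very definition of the induced model $P_n^*$. The only point that warrants a word of care is making explicit that the fibers $\pi^{-1}(T^*)$ are disjoint and exhaust $\TT_n$, so that the regrouping of the sum is legitimate; this is automatic because $\pi$ is a well-defined map. One may also remark that the same computation works verbatim with $P_\Sigma$ in place of $P_n$ when $P_n$ is shape invariant, but that is not needed for the statement as phrased.
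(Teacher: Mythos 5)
Your proposal is correct and follows essentially the same route as the paper: fix a value $x_0$, write $\mathrm{Prob}(\mathit{rQI}_n=x_0)$ as a sum of $P_n(T)$ over trees with that index, regroup the sum over the fibers of $\pi$ using that $\mathit{rQI}$ is a shape index, and invoke the definition of $P_n^*$ to identify the result with $\mathrm{Prob}(\mathit{rQI}_n^*=x_0)$. Nothing further is needed.
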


\begin{proof}
Let $f_{\mathit{rQI}_n}$ and $f_{\mathit{rQI}_n^*}$ be the probability density functions of the discrete random variables $\mathit{rQI}_n$ and $\mathit{rQI}_n^*$, respectively. Then, for every $x_0\in \RR$,
$$
\begin{array}{rl}
f_{\mathit{rQI}_n}(x_0) & \displaystyle =\hspace*{-2ex}\sum_{T\in \TT_n\atop \mathit{rQI}(T)=x_0}\hspace*{-1ex} P_n(T)=\hspace*{-2ex}\sum_{T^*\in \TT^*_n\atop \mathit{rQI}(T^*)=x_0} \sum_{T\in \TT_n\atop \pi(T)=T^*}\hspace*{-1ex} P_n(T)\\
& \displaystyle =\hspace*{-2ex}
\sum_{T^*\in \TT^*_n\atop \mathit{rQI}(T^*)=x_0}\hspace*{-1ex} P^*_n(T^*)=f_{\mathit{rQI}_n^*}(x_0)
\end{array}
$$
\ \hspace*{\fill}\qed
\end{proof}

\begin{proposition}\label{thm:EQI}
If  $P_n^*$ is sampling consistent, then
$$
E_{P}(\mathit{rQI}_n)=\binom{n}{4}\sum_{i=1}^4 q_i P^*_4(Q_i^*).
$$
\end{proposition}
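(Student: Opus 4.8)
The plan is to combine linearity of expectation over the $\binom{n}{4}$ rooted quartets with the sampling-consistency characterization established in Lemma~\ref{lem:sampcons1}. First I would invoke Lemma~\ref{lem:igualesp} to pass from $\mathit{rQI}_n$ to $\mathit{rQI}_n^*$, so that $E_{P}(\mathit{rQI}_n)=E_{P^*}(\mathit{rQI}_n^*)=\sum_{T^*\in\TT_n^*}P_n^*(T^*)\cdot\mathit{rQI}(T^*)$. Next I would expand $\mathit{rQI}(T^*)$ by its very definition, $\mathit{rQI}(T^*)=\sum_{i=1}^4 q_i\cdot\big|\{Q\in\mathcal{P}_4(L(T^*)):\pi(T^*(Q))=Q_i^*\}\big|$, using that $\mathit{rQI}$ of a $4$-leaf tree depends only on its shape, which is one of $Q_0^*,\dots,Q_4^*$ with $q_0=0$. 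Since all sums involved are finite, I may freely interchange the order of summation to obtain
$$
E_{P}(\mathit{rQI}_n)=\sum_{i=1}^4 q_i\sum_{T^*\in\TT_n^*}P_n^*(T^*)\cdot\big|\{Q\in\mathcal{P}_4(L(T^*)):T^*(Q)=Q_i^*\}\big|.
$$

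The second step is to recognize the inner sum. Because $P_n^*$ is sampling consistent, Lemma~\ref{lem:sampcons1} applied with $m=4$ and $T_0=Q_i^*$ gives precisely
$$
P_4^*(Q_i^*)=\sum_{T^*\in\TT_n^*}\frac{\big|\{Q\in\mathcal{P}_4(L(T^*)):T^*(Q)=Q_i^*\}\big|}{\binom{n}{4}}\cdot P_n^*(T^*),
$$
so the inner sum equals $\binom{n}{4}\,P_4^*(Q_i^*)$. Substituting this back yields $E_{P}(\mathit{rQI}_n)=\sum_{i=1}^4 q_i\binom{n}{4}P_4^*(Q_i^*)=\binom{n}{4}\sum_{i=1}^4 q_i P_4^*(Q_i^*)$, as claimed. (Throughout I use $n\geq 4$, which is in force since $\mathit{rQI}$ is trivially $0$ otherwise; for $n=4$ the identity is immediate as $\binom{4}{4}=1$.)

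The only points needing care are bookkeeping: that $\mathit{rQI}(T^*(Q))$ is well defined because restricting an unlabeled tree to a $4$-element leaf subset produces a tree in $\TT_4^*$ up to isomorphism, so the counts $\big|\{Q\in\mathcal{P}_4(L(T^*)):T^*(Q)=Q_i^*\}\big|$ make sense; and that $Q_0^*,\dots,Q_4^*$ exhaust $\TT_4^*$, so no other shape contributes. Both are immediate from the conventions of Section~2. I do not anticipate any real obstacle here: all the substance of the proposition is carried by Lemma~\ref{lem:sampcons1}, and the argument above is just linearity of expectation followed by a reindexing of the sum.
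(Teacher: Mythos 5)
Your proof is correct and follows essentially the same route as the paper's: pass to unlabeled trees via Lemma~\ref{lem:igualesp}, expand the definition of $\mathit{rQI}$, interchange the sums, and identify the inner sum as $\binom{n}{4}P_4^*(Q_i^*)$ by sampling consistency. The only cosmetic difference is that you explicitly cite Lemma~\ref{lem:sampcons1} with $m=4$ and $T_0=Q_i^*$ for that last step, whereas the paper simply writes ``by the sampling consistency of $P_n^*$''; your citation is the precise justification of that phrase.
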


\begin{proof}
By Lemma \ref{lem:igualesp}, $E_{P}(\mathit{rQI}_n)$ is equal to the expected value $E_{P^*}(\mathit{rQI}^*_n)$ of  $\mathit{rQI}^*_n$ under $P_n^*$, which can be computed as follows:
$$
\begin{array}{l}
E_{P^*}(\mathit{rQI}^*_n)\displaystyle = \sum_{T^*\in \TT^*_n} \mathit{rQI}(T^*)P^*_n(T^*)
\\[2ex]
\qquad \displaystyle =\sum_{T^*\in \TT^*_n} \Big(\sum_{i=1}^4 q_i\big|\{Q\in\mathcal{P}_4(L(T^*)): T^*(Q)=Q_i^*\}\big|\Big)P^*_n(T^*)
\\[2ex]
\qquad \displaystyle =\binom{n}{4}\sum_{i=1}^4 q_i\hspace*{-1.5ex}\sum_{T^*\in \TT^*_n} \frac{\big|\{Q\in\mathcal{P}_4(L(T^*)): T^*(Q)=Q_i^*\}\big|}{\binom{n}{4}} P^*_n(T^*)\\[2ex]
\qquad \displaystyle =\binom{n}{4}\sum_{i=1}^4 q_i P^*_4(Q_i^*)
\end{array}
$$
because, for every $i=1,\ldots,4$,
$$
\sum_{T^*\in \TT^*_n} \frac{\big|\{Q\in\mathcal{P}_4(L(T^*)): T^*(Q)=Q_i^*\}\big|}{\binom{n}{4}} P^*_n(T^*)=P^*_4(Q_i^*)
$$
by the sampling consistency of $P_n^*$.
  \qed
\end{proof}

This expression for $E_{P}(\mathit{rQI}_n)$ should not be surprising: by the sampling consistency property,  for each $i=1,\ldots 4$, the expected number of rooted quartets of shape $Q_i^*$ in a tree of $n$ leaves is $\binom{n}{4}P^*_4(Q_i^*)$ and their weight in $\mathit{rQI}$ value is $q_i$.

The $\alpha$-$\gamma$-model  for unlabeled trees $P_{\alpha,\gamma,n}^*$ is sampling consistent \citep[Prop. 12]{Ford2}. Therefore, applying the last proposition using the distribution $P^*_{\alpha,\gamma,4}$ on $\TT^*_4$ given in  Lemma \ref{lem:ag}, we have the following result.

\begin{corollary}
Let  $P_{\alpha,\gamma,n}$  be the $\alpha$-$\gamma$-model of phylogenetic trees, with $0\leq\gamma\leq \alpha\leq 1$.
Then
$$
\begin{array}{rl}
E_{P_{\alpha,\gamma}}(\mathit{rQI}_n) & \displaystyle=\binom{n}{4}\Bigg(
\frac{(2\alpha - \gamma)(\alpha - \gamma)}{(3 - \alpha)(2 - \alpha)}\cdot q_4+
\frac{(1-\alpha)(2(1-\alpha)+\gamma)}{(3 - \alpha)(2 - \alpha)}\cdot q_3\\[1ex]
 & \displaystyle\qquad +
 \frac{2(1-\alpha+\gamma)(\alpha-\gamma)}{(3 - \alpha)(2 - \alpha)}\cdot q_2+
\frac{(5(1 - \alpha)+\gamma)(\alpha-\gamma)}{(3 - \alpha)(2 - \alpha)}\cdot q_1\Bigg).
\end{array}
$$
\end{corollary}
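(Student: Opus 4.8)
The plan is to obtain the formula as an immediate consequence of Proposition~\ref{thm:EQI} together with the explicit distribution computed in Lemma~\ref{lem:ag}. The first step is to invoke the fact, recalled just before the statement, that the $\alpha$-$\gamma$-model for unlabeled trees $P^*_{\alpha,\gamma,n}$ is sampling consistent (Prop.~12 of \citep{Ford2}). This is exactly the hypothesis required by Proposition~\ref{thm:EQI}, so we may apply it to $P_n = P_{\alpha,\gamma,n}$ and conclude
$$
E_{P_{\alpha,\gamma}}(\mathit{rQI}_n)=\binom{n}{4}\sum_{i=1}^4 q_i\, P^*_{\alpha,\gamma,4}(Q_i^*).
$$

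The second step is purely a substitution: plug in the four probabilities
$P^*_{\alpha,\gamma,4}(Q_1^*),\dots,P^*_{\alpha,\gamma,4}(Q_4^*)$ from Lemma~\ref{lem:ag}. Since every one of those probabilities already carries the common denominator $(3-\alpha)(2-\alpha)$, the sum groups neatly term by term, and one recovers precisely the displayed expression in the statement. No rearrangement or simplification beyond transcribing Lemma~\ref{lem:ag} is needed, because the corollary is stated with the individual summands left in their Lemma~\ref{lem:ag} form rather than combined over the common denominator.

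There is essentially no obstacle here: the work has all been front-loaded into Proposition~\ref{thm:EQI} (which in turn rests on Lemma~\ref{lem:sampcons1} and Lemma~\ref{lem:igualesp}) and into the case-by-case probability computation of Lemma~\ref{lem:ag}. The only point deserving a word of care is to make sure the sampling-consistency hypothesis of Proposition~\ref{thm:EQI} is genuinely satisfied for the \emph{unlabeled} model $P^*_{\alpha,\gamma,n}$ --- this matters because the phylogenetic $\alpha$-$\gamma$-model $P_{\alpha,\gamma,n}$ is \emph{not} shape invariant in general, so one cannot route the argument through Lemma~\ref{thm:Tomas1}; instead one uses directly that $P^*_{\alpha,\gamma,n}$ is sampling consistent as a model of trees, which is what Proposition~\ref{thm:EQI} actually requires. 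With that observation the proof is a one-line application followed by the substitution.
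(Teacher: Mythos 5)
Your proposal is correct and matches the paper's own argument exactly: the paper also deduces the corollary by citing the sampling consistency of $P^*_{\alpha,\gamma,n}$ (Prop.\ 12 of the $\alpha$-$\gamma$ paper), applying Proposition~\ref{thm:EQI}, and substituting the probabilities from Lemma~\ref{lem:ag}. Your added remark that one must use sampling consistency of the unlabeled model directly, rather than via Lemma~\ref{thm:Tomas1}, since $P_{\alpha,\gamma,n}$ is not shape invariant, is a correct and worthwhile clarification of a point the paper leaves implicit.
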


If $P_n$ is a probabilistic model of bifurcating phylogenetic trees, so that 
$P^*_4(Q_1^*)=P^*_4(Q_2^*)=P^*_4(Q_4^*)=0$, then the expression in Prop. \ref{thm:EQI} becomes
$$
E_{P}(\mathit{rQI}_n)=\binom{n}{4}q_3P^*_4(Q_3^*).
$$
Taking $q_3=1$, we obtain the following results.

\begin{corollary}\label{prop:expbin}
If $P_n$ is a probabilistic model of bifurcating phylogenetic trees such that  $P_n^*$ is sampling consistent, then
$$
E_{P}(\mathit{rQIB}_n)=\binom{n}{4}P^*_4(Q_3^*).
$$
\end{corollary}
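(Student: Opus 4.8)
The plan is to obtain this as an immediate specialization of Proposition~\ref{thm:EQI}. First I would note that, by definition, a probabilistic model of bifurcating phylogenetic trees satisfies $P_n(T)=0$ whenever $T$ is not bifurcating, and hence the induced model of shapes $P^*_n$ satisfies $P^*_n(T^*)=0$ for every non-bifurcating $T^*\in\TT^*_n$. Among the five shapes in $\TT^*_4$ displayed in Figure~\ref{fig:5shapes}, only $Q_0^*$ and $Q_3^*$ are bifurcating (equivalently, a bifurcating tree can only display rooted quartets of shapes $Q_0^*$ and $Q_3^*$, as observed at the beginning of Section~3). Therefore $P^*_4(Q_1^*)=P^*_4(Q_2^*)=P^*_4(Q_4^*)=0$.

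Next, since $P^*_n$ is sampling consistent by hypothesis, Proposition~\ref{thm:EQI} applies and gives $E_{P}(\mathit{rQI}_n)=\binom{n}{4}\sum_{i=1}^4 q_i P^*_4(Q_i^*)$; the vanishing of three of the four summands collapses this to
$$
E_{P}(\mathit{rQI}_n)=\binom{n}{4}\,q_3\,P^*_4(Q_3^*).
$$

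Finally, I would invoke the definition $\mathit{rQIB}(T)=\mathit{rQI}(T)/q_3$, valid for every bifurcating $T$ (recall that $q_3>0$), so that, as random variables, $\mathit{rQIB}_n=\mathit{rQI}_n/q_3$; by linearity of expectation, $E_{P}(\mathit{rQIB}_n)=E_{P}(\mathit{rQI}_n)/q_3=\binom{n}{4}P^*_4(Q_3^*)$, which is the claim. Equivalently, one simply sets $q_3=1$ in the displayed formula, exactly as the surrounding text suggests. There is essentially no obstacle in this argument: the only point deserving an explicit word is the vanishing of $P^*_4$ on the three non-bifurcating quartet shapes, and that is immediate from the definition of a bifurcating model together with the classification of $\TT^*_4$.
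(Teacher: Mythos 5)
Your proposal is correct and follows exactly the paper's route: the authors likewise note that for a bifurcating model $P^*_4(Q_1^*)=P^*_4(Q_2^*)=P^*_4(Q_4^*)=0$, specialize Proposition~\ref{thm:EQI} to get $E_{P}(\mathit{rQI}_n)=\binom{n}{4}q_3P^*_4(Q_3^*)$, and then set $q_3=1$ (equivalently, divide by $q_3$ as you do). The only difference is that you spell out the justification for the vanishing probabilities slightly more explicitly, which is harmless.
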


Since the $\beta$ and $\alpha$-models of bifurcating  (unlabeled) trees are sampling consistent, this corollary together with 
the probabilities of $Q_3^*\in \BT^*_4$ under these models given in Lemmas   \ref{lem:AldB4} and \ref{lem:FordA4}, respectively, entail the following result.

\begin{corollary}
Let  $P^A_{\beta,n}$  be Aldous' $\beta$-model for bifurcating phylogenetic trees, with $\beta\in (-2,\infty)$, and let $P^F_{\alpha,n}$  be Ford's $\alpha$-model for bifurcating phylogenetic trees, with $\alpha\in [0,1]$. 
Then:
$$
E_{P^A_{\beta}}(\mathit{rQIB}_n) \displaystyle=\frac{3\beta+6}{7\beta+18}\binom{n}{4},\quad E_{P^F_{\alpha}}(\mathit{rQIB}_n)=\frac{1-\alpha}{3-\alpha}\binom{n}{4}.
$$
\end{corollary}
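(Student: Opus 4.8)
The plan is to derive this corollary as an immediate specialization of Corollary~\ref{prop:expbin}, so essentially no new computation is required; the only thing to check is that both probabilistic models fall under the hypotheses of that corollary. First I would recall that both Aldous' $\beta$-model $P^A_{\beta,n}$ and Ford's $\alpha$-model $P^F_{\alpha,n}$ are, by construction, probabilistic models of \emph{bifurcating} phylogenetic trees, so that indeed $P^*_4(Q_1^*)=P^*_4(Q_2^*)=P^*_4(Q_4^*)=0$ and Corollary~\ref{prop:expbin} applies in the form $E_P(\mathit{rQIB}_n)=\binom{n}{4}P^*_4(Q_3^*)$.

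Second, I would verify that the \emph{tree} models $P^{A,*}_{\beta,n}$ and $P^{F,*}_{\alpha,n}$ are sampling consistent, which is exactly the remaining hypothesis of Corollary~\ref{prop:expbin}. This has already been argued in Section~2: the $\beta$-model is shape invariant by construction (step~4 of its definition) and sampling consistent for phylogenetic trees by \citet{Ald1}, hence by Lemma~\ref{thm:Tomas1} the induced model $P^{A,*}_{\beta,n}$ on unlabeled trees is sampling consistent; likewise, the $\alpha$-model is shape invariant by construction and sampling consistent for phylogenetic trees by Prop.~42 of \citep{Ford1}, so Lemma~\ref{thm:Tomas1} gives sampling consistency of $P^{F,*}_{\alpha,n}$. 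I would state these two facts explicitly, citing the relevant place in the preliminaries.

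Finally, I would plug in the values of $P^*_4(Q_3^*)$ already computed: Lemma~\ref{lem:AldB4} gives $P^{A,*}_{\beta,4}(Q_3^*)=\frac{3\beta+6}{7\beta+18}$ and Lemma~\ref{lem:FordA4} gives $P^{F,*}_{\alpha,4}(Q_3^*)=\frac{1-\alpha}{3-\alpha}$, yielding the two stated formulas at once. There is no genuine obstacle here: the entire content of the corollary has been prepared by the earlier lemmas, and the proof is a one-line combination of Corollary~\ref{prop:expbin} with Lemmas~\ref{lem:AldB4} and~\ref{lem:FordA4}; the only point deserving a word is the appeal to Lemma~\ref{thm:Tomas1} to transfer sampling consistency from the phylogenetic to the unlabeled tree models, and even that was spelled out when the two models were introduced. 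So the write-up would be just a couple of sentences invoking these three previously established results.
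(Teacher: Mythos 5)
Your proposal is correct and follows exactly the paper's route: the paper likewise obtains this corollary by combining Corollary~\ref{prop:expbin} with the sampling consistency of the unlabeled $\beta$- and $\alpha$-models (established in Section~2 via Lemma~\ref{thm:Tomas1}) and the values of $P^*_4(Q_3^*)$ from Lemmas~\ref{lem:AldB4} and~\ref{lem:FordA4}. Nothing is missing.
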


It is easy to check that $E_{P^F_{\alpha}}(\mathit{rQIB}_n)$ agrees with $E_{P_{\alpha,\gamma}}(\mathit{rQI}_n)$ (up to the factor $q_3$) when $\alpha=\gamma$.

In particular, under the Yule model, which corresponds to $\alpha=0$ or $\beta=0$, and the uniform model, which corresponds to $\alpha=1/2$ or $\beta=-3/2$, the expected values of $\mathit{rQIB}_n$ are, respectively,
$$
E_{Y}(\mathit{rQIB}_n)=\frac{1}{3}\binom{n}{4},\quad E_{U}(\mathit{rQIB}_n) =\frac{1}{5}\binom{n}{4}.
$$

Let us deal now with the variance of $\mathit{rQI}_n$. To simplify the notations, for every $k=5,6,7,8$, for every $T^*\in \TT_k^*$ and for every $i,j\in \{1,2,3,4\}$, let
$$
\begin{array}{rl}
\Theta_{i,j}(T^*) & =\big|\{(Q,Q')\in\mathcal{P}_4(L(T^*))^2: \\
& \hspace*{1cm} Q\cup Q'=L(T^*), T^*(Q)=Q_i^*, T^*(Q')=Q_j^*\}\big|\\
& =\big|\{(Q,Q')\in\mathcal{P}_4(L(T^*))^2: \\
& \hspace*{1cm} |Q\cap Q'|=8-k, T^*(Q)=Q_i^*, T^*(Q')=Q_j^*\}\big|.
\end{array}
$$
Notice that $\Theta_{i,j}(T^*)=\Theta_{j,i}(T^*)$.

\begin{proposition}\label{prop:varqi}
If  $P_n^*$ is sampling consistent, then
$$
\begin{array}{l}
\displaystyle Var_{P}(\mathit{rQI}_n) =
\binom{n}{4}\sum_{i=1}^4 q_i^2P_4^*(Q_i^*)- \binom{n}{4}^2\Bigg(\sum_{i=1}^4q_iP^*_4(Q_i^*)\Bigg)^2\\
\displaystyle \qquad\quad+\sum_{i=1}^4\sum_{j=1}^4q_iq_j \Bigg(\sum_{k=5}^8\binom{n}{k} \sum_{T^*\in \TT_{k}^*}\Theta_{i,j}(T^*)P^*_{k}(T^*)\Bigg).
\end{array}
$$
\end{proposition}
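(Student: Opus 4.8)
The plan is to reduce to unlabeled trees and then expand the square. By Lemma~\ref{lem:igualesp} the random variable $\mathit{rQI}_n$ has the same distribution as $\mathit{rQI}_n^*$, so $Var_{P}(\mathit{rQI}_n)=E_{P^*}\big((\mathit{rQI}_n^*)^2\big)-\big(E_{P}(\mathit{rQI}_n)\big)^2$, and the subtrahend is already known from Proposition~\ref{thm:EQI}. Hence the whole task is to compute $E_{P^*}\big((\mathit{rQI}_n^*)^2\big)$. Using, for each $T^*\in\TT_n^*$ and each $Q\in\mathcal{P}_4(L(T^*))$, the identity $\mathit{rQI}(T^*(Q))=\sum_{i=1}^4 q_i\mathbf{1}[T^*(Q)=Q_i^*]$ (legitimate since $\mathit{rQI}$ is a shape index), I would expand
$$
E_{P^*}\big((\mathit{rQI}_n^*)^2\big)=\sum_{i=1}^4\sum_{j=1}^4 q_iq_j\sum_{T^*\in\TT_n^*}\Big(\sum_{Q,Q'}\mathbf{1}\big[T^*(Q)=Q_i^*,\ T^*(Q')=Q_j^*\big]\Big)P^*_n(T^*),
$$
where $Q,Q'$ range over $\mathcal{P}_4(L(T^*))$, and then split the inner sum over ordered pairs $(Q,Q')$ according to the value of $k=|Q\cup Q'|\in\{4,5,6,7,8\}$ (equivalently, $|Q\cap Q'|=8-k$), treating the diagonal case $k=4$ separately.

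For $k=4$ one has $Q=Q'$, so that part of the inner sum vanishes when $i\neq j$ and equals $\big|\{Q\in\mathcal{P}_4(L(T^*)):T^*(Q)=Q_i^*\}\big|$ when $i=j$; summing against $P_n^*(T^*)$ over $T^*\in\TT_n^*$ and using sampling consistency of $P_n^*$ (Lemma~\ref{lem:sampcons1} with $m=4$), exactly as in the proof of Proposition~\ref{thm:EQI}, I get the contribution $\binom{n}{4}\sum_{i=1}^4 q_i^2 P_4^*(Q_i^*)$. For each $k\in\{5,6,7,8\}$, I would reorganise the corresponding part of the inner sum by first fixing the union $U=Q\cup Q'\in\mathcal{P}_k(L(T^*))$ and then summing over the ordered pairs $(Q,Q')$ of $4$-subsets of $U$ with $Q\cup Q'=U$. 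Since $T^*(Q)=(T^*(U))(Q)$ and $T^*(Q')=(T^*(U))(Q')$, the number of such pairs with $T^*(Q)=Q_i^*$ and $T^*(Q')=Q_j^*$ is precisely $\Theta_{i,j}(T^*(U))$; hence that part of the inner sum is $\sum_{U\in\mathcal{P}_k(L(T^*))}\Theta_{i,j}(T^*(U))$, which I would rewrite as $\sum_{R^*\in\TT_k^*}\Theta_{i,j}(R^*)\cdot\big|\{U\in\mathcal{P}_k(L(T^*)):T^*(U)=R^*\}\big|$. Summing this against $P_n^*$ and invoking Lemma~\ref{lem:sampcons1} with $m=k$ for each shape $R^*$ then yields the contribution $\binom{n}{k}\sum_{T^*\in\TT_k^*}\Theta_{i,j}(T^*)P^*_k(T^*)$. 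Adding the five contributions gives the closed form for $E_{P^*}\big((\mathit{rQI}_n^*)^2\big)$, and subtracting $\big(E_{P}(\mathit{rQI}_n)\big)^2=\binom{n}{4}^2\big(\sum_{i=1}^4 q_iP^*_4(Q_i^*)\big)^2$ (Proposition~\ref{thm:EQI}) produces the formula in the statement.

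The only step requiring a little care is the reorganisation for $k\in\{5,6,7,8\}$: I must check that $\Theta_{i,j}$ is constant on isomorphism classes of phylogenetic trees with $k$ leaves (an isomorphism is a leaf-label bijection that bijects the admissible pairs $(Q,Q')$), so that $\Theta_{i,j}(T^*(U))$ depends only on the shape of $T^*(U)$ and the re-indexing of the sum over shapes $R^*\in\TT_k^*$ is valid, and that Lemma~\ref{lem:sampcons1} is applied with the correct parameter to each fixed $R^*$. Everything else is a routine expansion of the square and interchange of finite sums.
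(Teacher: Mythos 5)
Your proposal is correct and follows essentially the same route as the paper's proof: reduce to unlabeled trees via Lemma \ref{lem:igualesp}, expand the square of the sum of indicators, separate the diagonal pairs $Q=Q'$ (your $k=4$ case, giving $\binom{n}{4}\sum_i q_i^2P_4^*(Q_i^*)$), classify the remaining ordered pairs by $|Q\cup Q'|=k\in\{5,\ldots,8\}$, pass to the restriction $T^*(Q\cup Q')$ to recognise $\Theta_{i,j}$, and invoke sampling consistency in the form of Lemma \ref{lem:sampcons1} to collapse the sum over $\TT_n^*$ to one over $\TT_k^*$. The only differences are notational (the paper indexes by $|Q\cap Q'|=0,\ldots,3$ rather than by the union size), so no further comment is needed.
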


\begin{proof}
Since, by Lemma \ref{lem:igualesp}, $Var_{P}(\mathit{rQI}_n)=Var_{P^*}(\mathit{rQI}^*_n)$, we shall compute the latter using the  formula $Var_{P^*}(\mathit{rQI}^*_n)=E_{P^*}({\mathit{rQI}^*_n}^2)-E_{P^*}({\mathit{rQI}^*_n})^2$, and therefore we need to compute  $E_{P^*}({\mathit{rQI}^*_n}^2)$. 

For every $T^*\in \TT^*_n$, for every $Q_i^*\in \TT^*_4$ and for every $Q\in \mathcal{P}_4(L(T^*))$, set
$$
\delta(Q;Q_i^*;T^*)=\left\{\begin{array}{ll}
1 & \mbox{ if $T^*(Q)=Q_i^*$}\\
0  & \mbox{ if $T^*(Q)\neq  Q_i^*$}
\end{array}\right.
$$
Then:
$$
\begin{array}{l}
E_{P^*}({\mathit{rQI}^*_n}^2)  \displaystyle = \sum_{T^*\in \TT^*_n} \mathit{rQI}^*(T^*)^2P^*_n(T^*)
\\[2ex]
\quad \displaystyle =\sum_{T^*\in \TT^*_n}\Big(\sum_{Q\in\mathcal{P}_4(L(T^*))} \sum_{i=1}^4 q_i\delta(Q;Q_i^*;T^*)\Big)^2P^*_n(T^*)\\ 
\quad \displaystyle =\sum_{T^*\in \TT^*_n}\Big(\sum_{Q\in\mathcal{P}_4(L(T^*))} \sum_{i=1}^4 q_i^2\delta(Q;Q_i^*;T^*)^2\Big)P^*_n(T^*)\\ 
\displaystyle\qquad
+\hspace*{-1ex}\sum_{T^*\in \TT^*_n} \Bigg[\sum_{(Q,Q')\in\mathcal{P}_4(L(T^*))^2\atop Q\neq Q'}
\sum_{(i,j)\in [4]^2} \hspace*{-1ex} q_iq_j \delta(Q;Q_i^*;T^*)\delta(Q';Q_j^*;T^*)\Bigg]P^*_n(T^*)
\end{array}
$$
Now, since $\delta(Q;Q_i^*;T^*)^2=\delta(Q;Q_i^*;T^*)$,
$$
\begin{array}{rl}
S_1& \displaystyle :=
\sum_{T^*\in \TT^*_n}\Big(\sum_{Q\in\mathcal{P}_4(L(T^*))} 
\sum_{i=1}^4 q_i^2\delta(Q;Q_i^*;T^*)^2\Big)P^*_n(T^*)\\ 
& \displaystyle =
\sum_{T^*\in \TT^*_n}\Big(\sum_{Q\in\mathcal{P}_4(L(T^*))} 
\sum_{i=1}^4 q_i^2\delta(Q;Q_i^*;T^*)\Big)P^*_n(T^*)\\ 
\\ 
& \displaystyle =\sum_{i=1}^4 \Big(q_i^2\sum_{T^*\in \TT^*_n}\big|\{Q\in\mathcal{P}_4(L(T^*)): T^*(Q)=Q_i^*\}\big |\cdot P^*_n(T^*) \Big)
\\ 
& \displaystyle =\binom{n}{4}\sum_{i=1}^4 \Big(q_i^2\sum_{T^*\in \TT^*_n}\frac{\big|\{Q\in\mathcal{P}_4(L(T^*)): T^*(Q)=Q_i^*\}\big |}{\binom{n}{4}}P^*_n(T^*)\Big) \\
\\ 
& \displaystyle =\binom{n}{4}\sum_{i=1}^4 q_i^2 P_4^*(Q_i^*)
\end{array} 
$$
by the sampling consistency of $P_n^*$.

As far as the second addend in the previous expression for $E_{P^*}({\mathit{rQI}^*_n}^2)$ goes, we have
$$
\begin{array}{l}
S_2  :=\displaystyle\sum_{T^*\in \TT^*_n}\sum_{(Q,Q')\in\mathcal{P}_4(L(T^*))^2\atop Q\neq Q'}
\Big(\sum_{(i,j)\in [4]^2} q_iq_j \delta(Q;Q_i^*;T^*)\delta(Q';Q_j^*;T^*)\Big)P^*_n(T^*)\\
\ =\displaystyle \sum_{(i,j)\in [4]^2}q_iq_j\Bigg[\sum_{T^*\in \TT^*_n}\Big(\sum_{k=0}^3 \sum_{(Q,Q')\in\mathcal{P}_4(L(T^*))^2\atop |Q\cap Q'|=k}
\hspace*{-0.7cm} \delta(Q;Q_i^*;T^*)\delta(Q';Q_j^*;T^*)\Big) P^*_n(T^*)\Bigg]\\
\ =\displaystyle \sum_{(i,j)\in [4]^2}q_iq_j\Bigg[\sum_{k=0}^3\sum_{T^*\in \TT^*_n} \big|\{(Q,Q')\in\mathcal{P}_4(L(T^*))^2: |Q\cap Q'|=k, \\[-2ex]
\ \displaystyle \hspace*{4.5cm} T^*(Q)=Q_i^*, T^*(Q')=Q_j^*\}\big|\cdot P^*_n(T^*)\Bigg]
\end{array}
$$
Now notice that, for every $k=0,\ldots,3$,
$$
\begin{array}{l}
\displaystyle \sum_{T^*\in \TT^*_n} \big|\{(Q,Q')\in\mathcal{P}_4(L(T^*))^2: |Q\cap Q'|=k, T^*(Q)=Q_i^*, T^*(Q')=Q_j^*\}\big| P^*_n(T^*)\\
\displaystyle \qquad=
\sum_{T^*\in \TT^*_n}\Big(\sum_{T^*_{8-k}\in \TT_{8-k}^*} \big|\{X \in\mathcal{P}_{8-k}(L(T^*)): T^*(X)=T^*_{8-k}\}\big|\\
\displaystyle \qquad\qquad
\cdot\big|\{(Q,Q')\in\mathcal{P}_4(L(T^*_{8-k}))^2: |Q\cap Q'|=k, T^*_{8-k}(Q)=Q_i^*, T^*_{8-k}(Q')=Q_j^*\}\big|\Big) P^*_n(T^*)\\
\displaystyle \qquad=
\sum_{T^*_{8-k}\in \TT_{8-k}^*}\big|\{(Q,Q')\in\mathcal{P}_4(L(T^*_{8-k}))^2: |Q\cap Q'|=k, T^*_{8-k}(Q)=Q_i^*, T^*_{8-k}(Q')=Q_j^*\}\big|\\
\displaystyle \qquad\qquad
\cdot\binom{n}{8-k}\sum_{T^*\in \TT^*_n}\frac{\Big( \big|\{X \in\mathcal{P}_{8-k}(L(T^*)): T^*(X)=T^*_{8-k}\}\big|}{\binom{n}{8-k}} P^*_n(T^*)\\
\displaystyle \qquad=
\binom{n}{8-k}\sum_{T^*_{8-k}\in \TT_{8-k}^*}\big|\{(Q,Q')\in\mathcal{P}_4(L(T^*_{8-k}))^2: |Q\cap Q'|=k, \\
\displaystyle \hspace*{4cm}  T^*_{8-k}(Q)=Q_i^*, T^*_{8-k}(Q')=Q_j^*\}\big|P^*_{8-k}(T_{8-k}^*)\\
\displaystyle \qquad=
\binom{n}{8-k}\sum_{T^*_{8-k}\in \TT_{8-k}^*}\big|\{(Q,Q')\in\mathcal{P}_4(L(T^*_{8-k}))^2: Q\cup Q'=L(T_{8-k}^*), \\
\displaystyle \hspace*{4cm}  T^*_{8-k}(Q)=Q_i^*, T^*_{8-k}(Q')=Q_j^*\}\big|P^*_{8-k}(T_{8-k}^*)\\
\displaystyle \qquad=
\binom{n}{8-k}\sum_{T^*_{8-k}\in \TT_{8-k}^*}\Theta_{i,j}(T^*_{8-k})P^*_{8-k}(T_{8-k}^*)
\end{array}
$$
again by the sampling consistency of $P_n^*$.
Therefore,
$$
\begin{array}{rl}
S_2  & =\displaystyle \sum_{(i,j)\in [4]^2}q_iq_j\Big(\sum_{k=0}^3\binom{n}{8-k} \sum_{T^*_{8-k}\in \TT_{8-k}^*}\Theta_{i,j}(T^*_{8-k})P^*_{8-k}(T_{8-k}^*)\Big)\\
& =\displaystyle \sum_{(i,j)\in [4]^2}q_iq_j\Big(\sum_{k=5}^8\binom{n}{k} \sum_{T^*\in \TT_{k}^*}\Theta_{i,j}(T^*)P^*_{k}(T^*)\Big)
\end{array}
$$
 
 The formula in the statement is then obtained by writing $Var_{P^*}(IQ^*_n)$ as $S_1+S_2-E_{P^*}(IQ^*_n)^2$ and using the expression for  $E_{P^*}(IQ^*_n)=E_{P}(IQ_n)$ given in Proposition \ref{thm:EQI}.
  \qed
\end{proof}

Again, if $P_n$ is a probabilistic model of bifurcating phylogenetic trees, so that 
$P^*_4(Q_1^*)=P^*_4(Q_2^*)=P^*_4(Q_4^*)=0$, then, taking $q_3=1$, this proposition implies that
$$
\begin{array}{l}
\displaystyle Var_{P}(\mathit{rQIB}_n) =
\binom{n}{4}P_4^*(Q_3^*)- \binom{n}{4}^2P^*_4(Q_3^*)^2\\
\displaystyle \qquad\quad+\sum_{k=5}^8\binom{n}{k}\Big( \sum_{T^*\in \BT_{k}^*}\Theta_{3,3}(T^*) P^*_{k}(T^*)\Big)
\end{array}
$$
In this bifurcating case, the figures $\Theta_{3,3}(T^*)$ appearing in this expression can be easily computed by hand: they are provided in  Table \ref{table:1varbin} in the Appendix A.2.  We obtain then the following result.

\begin{corollary}\label{cor:varQIB}
If $P_n$ is a probabilistic model of bifurcating phylogenetic trees such that  $P_n^*$ is sampling consistent, then, with the notations for trees given in Table \ref{table:1varbin} in the Appendix A.2,
$$
\begin{array}{l}
\displaystyle Var_{P}(\mathit{rQIB}_n) =\binom{n}{4}P_4^*(Q_3^*)- \binom{n}{4}^2P^*_4(Q_3^*)^2\\
\displaystyle \qquad\quad+
6\binom{n}{5}P_5^*(B_{5,3}^*)+\binom{n}{6}\big(18P_{6}^*(B_{6,4}^*)+6P_{6}^*(B_{6,5}^*)+36P_{6}^*(B_{6,6}^*)\big)\\
\displaystyle \qquad\quad+
\binom{n}{7}\big(8P_7^*(B_{7,8}^*)+24P_7^*(B_{7,9}^*)+36P_7^*(B_{7,10}^*)+36P_7^*(B_{7,11}^*)\big)\\
\displaystyle \qquad\quad+
\binom{n}{8}\big(2P_8^*(B_{8,13}^*)+6P_8^*(B_{8,14}^*)+12P_8^*(B_{8,15}^*)+14P_8^*(B_{8,16}^*)\\
\displaystyle \qquad\quad\qquad
+18P_8^*(B_{8,17}^*)+36P_8^*(B_{8,21}^*)+36P_8^*(B_{8,22}^*)+38P_8^*(B_{8,23}^*)\big)
\end{array}
$$
\end{corollary}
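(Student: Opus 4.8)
The statement is obtained by substituting a table of auxiliary counts into the bifurcating specialization of Proposition \ref{prop:varqi}, so the plan has two parts. First I would recall why the general variance formula collapses in the bifurcating case: if $P_n$ is a probabilistic model of bifurcating phylogenetic trees, then $P_n(T)=0$ for every $T\in\TT_n\setminus\BT_n$, hence $P_m^*(T^*)=0$ for every $T^*\notin\BT_m^*$, and in particular $P_4^*(Q_1^*)=P_4^*(Q_2^*)=P_4^*(Q_4^*)=0$. Plugging this into the formula of Proposition \ref{prop:varqi} and setting $q_3=1$ kills every summand of the double sum over $(i,j)\in[4]^2$ except the one with $i=j=3$, reduces the first two terms to $\binom{n}{4}P_4^*(Q_3^*)-\binom{n}{4}^2P_4^*(Q_3^*)^2$, and lets us replace each inner sum $\sum_{T^*\in\TT_k^*}$ by $\sum_{T^*\in\BT_k^*}$ since $P_k^*$ vanishes off $\BT_k^*$. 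This is exactly the identity displayed just before the corollary, so it remains only to evaluate $\sum_{T^*\in\BT_k^*}\Theta_{3,3}(T^*)P_k^*(T^*)$ for $k=5,6,7,8$, i.e. to produce Table \ref{table:1varbin}.

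For the second part I would carry out the finite case analysis underlying that table. The key observation is that, in a bifurcating tree, a $4$-leaf subset $Q$ satisfies $T^*(Q)\equiv Q_3^*$ precisely when the split of $Q$ induced by the two subtrees hanging from $\mathrm{lca}_{T^*}(Q)$ is a $2$--$2$ split (a $1$--$3$ split forces $T^*(Q)\equiv Q_0^*$, the only other quartet shape a bifurcating tree can display). Hence $\Theta_{3,3}(T^*)$ is the number of ordered pairs $(Q,Q')$ of $4$-leaf subsets of $L(T^*)$ with $Q\cup Q'=L(T^*)$ (equivalently $|Q\cap Q'|=8-k$, which already forces $Q\neq Q'$) such that each of $Q,Q'$ splits $2$--$2$ at its own lca. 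For each fixed $k$ this is a finite check over the $3$, $6$, $11$, and $23$ bifurcating shapes on $5,6,7,8$ leaves: one lists, for a given shape, the $4$-subsets displaying $Q_3^*$, and then counts the ordered pairs among them whose union is the whole leaf set. Most shapes --- in particular all the comb-like ones --- display no $Q_3^*$ quartet at all, or too few to cover all the leaves, and so contribute $0$; only $B_{5,3}^*$, $B_{6,4}^*,B_{6,5}^*,B_{6,6}^*$, $B_{7,8}^*,\dots,B_{7,11}^*$, and $B_{8,13}^*,\dots,B_{8,17}^*,B_{8,21}^*,B_{8,22}^*,B_{8,23}^*$ survive, with the values of $\Theta_{3,3}$ recorded in Table \ref{table:1varbin}. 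As a sample, for $B_{5,3}^*=((*,*),(*,(*,*)))$ the $4$-subsets displaying $Q_3^*$ are precisely the three obtained by deleting one leaf of the $3$-leaf side, any two of which cover all five leaves; this gives $\Theta_{3,3}(B_{5,3}^*)=6$, matching the coefficient $6\binom{n}{5}$.

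The only real obstacle is the sheer length of the $k=8$ case, where one must run through all $23$ bifurcating shapes on $8$ leaves and, for each, enumerate the $2$--$2$-splitting $4$-subsets together with the \emph{ordered} pairs of them that partition the leaf set. This bookkeeping is mechanical --- no new idea is required --- but it is easy to miscount, so I would organize it by the shape of the root split and use Lemma \ref{lem:rec} (applied to $\mathit{rQIB}$) as a consistency check on the number of $Q_3^*$-displaying $4$-subsets of each shape before counting pairs. Once Table \ref{table:1varbin} has been assembled in this way, substituting its entries into the displayed bifurcating specialization of Proposition \ref{prop:varqi} yields the stated formula verbatim, and nothing further is needed.
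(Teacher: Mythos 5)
Your proposal is correct and follows essentially the same route as the paper: the displayed identity just before the corollary is exactly your bifurcating specialization of Proposition \ref{prop:varqi} (only the $(i,j)=(3,3)$ term of the double sum survives, since a bifurcating tree displays only quartets of shapes $Q_0^*$ and $Q_3^*$ and $P_k^*$ is supported on $\BT_k^*$), and the paper then simply substitutes the hand-computed values of $\Theta_{3,3}$ from Table \ref{table:1varbin}, which you reproduce by the same $2$--$2$-split enumeration (your check $\Theta_{3,3}(B_{5,3}^*)=6$ matches).
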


Proposition \ref{prop:varqi} and Corollary \ref{cor:varQIB} reduce the computation of $Var_{P}(\mathit{rQI}_n)$  or $Var_{P}(\mathit{rQIB}_n)$ to the explicit knowledge of $P^*_{l}$ for $l=4,5,6,7,8$.  In particular, they allow to obtain explicit formulas for the variance of $\mathit{rQIB}_n$ under the $\alpha$ and the $\beta$-models, and  for the variance of $\mathit{rQI}_n$ under the $\alpha$-$\gamma$-model. 

As far as the bifurcating case goes, on the one hand, the probabilities under the $\alpha$-model of the trees appearing explicitly in the formula for the variance of $\mathit{rQIB}_n$ in Corollary \ref{cor:varQIB} are those  given in Table  \ref{table:2varbin} in the Appendix A.2 (they are explicitly computed in the Supplementary Material of \citep{MMR}).
Plugging them in the formula given in Corollary \ref{cor:varQIB} above, we obtain the following result.

\begin{corollary}\label{cor:varQIBalfa}
Under the $\alpha$-model,
$$
\begin{array}{l}
\displaystyle Var_{P^F_\alpha}(\mathit{rQIB}_n) = \binom{n}{4}\frac{1-\alpha}{3-\alpha}- \binom{n}{4}^2\frac{(1-\alpha)^2}{(3-\alpha)^2}+12\binom{n}{5}\frac{1-\alpha}{4-\alpha}\\
\displaystyle \qquad\quad+
\binom{n}{6}\frac{6(1-\alpha)(112-89\alpha+15\alpha^2)}{(5-\alpha)(4-\alpha)(3-\alpha)}\\
\displaystyle \qquad\quad+
\binom{n}{7}\frac{20(1-\alpha)(74 - 63 \alpha +7 \alpha^2 )}{(6-\alpha)(5-\alpha)(3-\alpha)}\\
\displaystyle \qquad\quad +
\binom{n}{8}\frac{10(1-\alpha)(506-539\alpha+112\alpha^2-7\alpha^3)}{(7-\alpha)(6-\alpha)(5-\alpha)(3-\alpha)}
\end{array}
$$
\end{corollary}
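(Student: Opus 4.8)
The plan is to obtain the formula by a direct substitution into Corollary~\ref{cor:varQIB}. That corollary expresses $Var_{P}(\mathit{rQIB}_n)$, for any probabilistic model of bifurcating phylogenetic trees whose induced model of trees is sampling consistent, as a fixed linear combination (with binomial coefficients $\binom{n}{4},\ldots,\binom{n}{8}$ and explicit integer weights) of $P_4^*(Q_3^*)$ and of the probabilities $P_k^*(B_{k,j}^*)$ of the relevant trees $B_{k,j}^*\in\BT_k^*$, $k=5,6,7,8$, listed in Table~\ref{table:1varbin}. Since the $\alpha$-model of bifurcating trees is shape invariant and sampling consistent (as recalled in Section~\ref{sec:Ford}), Corollary~\ref{cor:varQIB} applies. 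So the proof amounts to plugging in $P_4^{F,*}(Q_3^*)=\frac{1-\alpha}{3-\alpha}$ from Lemma~\ref{lem:FordA4} and the values of $P_k^{F,*}(B_{k,j}^*)$ under the $\alpha$-model collected in Table~\ref{table:2varbin} of Appendix~A.2, and then simplifying.

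Concretely, I would collect the terms attached to each binomial coefficient separately. The $\binom{n}{4}$ part is immediate and gives $\binom{n}{4}\frac{1-\alpha}{3-\alpha}-\binom{n}{4}^2\frac{(1-\alpha)^2}{(3-\alpha)^2}$. The $\binom{n}{5}$ part is a single term, $6\,P_5^{F,*}(B_{5,3}^*)$, which reduces to $12\binom{n}{5}\frac{1-\alpha}{4-\alpha}$. For $k=6,7,8$, the corresponding block of Corollary~\ref{cor:varQIB} is a prescribed integer combination of the $P_k^{F,*}(B_{k,j}^*)$; substituting the rational functions of $\alpha$ from Table~\ref{table:2varbin}, bringing them over the common denominator appearing in that table (namely $(5-\alpha)(4-\alpha)(3-\alpha)$ for $k=6$, $(6-\alpha)(5-\alpha)(4-\alpha)(3-\alpha)$ for $k=7$, and $(7-\alpha)(6-\alpha)(5-\alpha)(4-\alpha)(3-\alpha)$ for $k=8$), and expanding the numerator, one checks that a factor $1-\alpha$ factors out of the numerator, and in the $k=7$ case an extra factor $4-\alpha$ cancels against the denominator, leaving the claimed denominators $(5-\alpha)(4-\alpha)(3-\alpha)$, $(6-\alpha)(5-\alpha)(3-\alpha)$ and $(7-\alpha)(6-\alpha)(5-\alpha)(3-\alpha)$. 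The remaining numerator polynomials in $\alpha$ (quadratic for $k=6,7$, cubic for $k=8$) are then matched term by term against the statement. I would keep all denominators in factored form throughout rather than expanding them.

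The only genuine difficulty is bookkeeping: keeping several rational functions with these falling-product denominators aligned, making sure the integer weights $18,6,36$ (for $k=6$), $8,24,36,36$ (for $k=7$) and $2,6,12,14,18,36,36,38$ (for $k=8$) of Corollary~\ref{cor:varQIB} are paired with the correct trees $B_{k,j}^*$ of Table~\ref{table:1varbin}, and then verifying the final polynomial identities. These are finite, mechanical checks; they can be carried out symbolically (and are performed by the accompanying Python scripts). As a safeguard I would also verify the resulting formula at the two distinguished instances $\alpha=0$ (Yule) and $\alpha=1/2$ (uniform), where it must coincide with the variance of $\mathit{rQIB}_n$ under those models.
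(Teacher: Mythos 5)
Your proposal is correct and coincides with the paper's own argument: the paper likewise invokes the sampling consistency of the $\alpha$-model, substitutes the probabilities of Table \ref{table:2varbin} into the formula of Corollary \ref{cor:varQIB}, and simplifies. The added numerical sanity checks at $\alpha=0$ and $\alpha=1/2$ are a reasonable extra safeguard but do not change the route.
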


The leading term in $n$ of $Var_{P^F_\alpha}(\mathit{rQIB}_n)$ is then
$$
\frac{(1-\alpha)(2\alpha+1)}{84(7-\alpha)(6-\alpha)(5-\alpha)(3-\alpha)^2}\cdot n^8.
$$

On the other hand, the probabilities under the $\beta$-model of the same trees are  given in Table  \ref{table:3varbin} in the Appendix A.2, yielding the following result.

\begin{corollary}\label{cor:varQIBeta}
Under the $\beta$-model,
$$
\begin{array}{l}
\displaystyle Var_{P^A_\beta}(\mathit{rQIB}_n) = \binom{n}{4}\dfrac{3(\beta+2)}{7\beta+18}- \binom{n}{4}^2\dfrac{9(\beta+2)^2}{(7\beta+18)^2}
+12\binom{n}{5}\dfrac{\beta+2}{3\beta+8}
\\
\displaystyle \qquad\quad+
90\binom{n}{6}\frac{(\beta + 2) (41 \beta^2 + 238 \beta + 336)}{(31 \beta^2 + 194 \beta + 300)(7 \beta + 18)}
\\
\displaystyle \qquad\quad+
60\binom{n}{7}\frac{(\beta+2)(9 \beta^2 + 53 \beta+ 74)}{(\beta + 3)(3\beta+10) (7 \beta + 18)}
\\
\displaystyle \qquad\quad +
630\binom{n}{8}\frac{(\beta+2)(127 \beta^4 + 1637 \beta^3 + 7788 \beta^2 + 16084 \beta + 12144)}{(127\beta^3+1383\beta^2+4958\beta+5880)(7\beta + 18)^2}
\end{array}
$$
\end{corollary}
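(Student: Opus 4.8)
The plan is to deduce this from Corollary~\ref{cor:varQIB} by substituting the relevant $\beta$-model probabilities and then simplifying. Corollary~\ref{cor:varQIB} already gives $Var_{P}(\mathit{rQIB}_n)$, for any shape-invariant and sampling-consistent model of bifurcating trees, as an explicit linear combination with coefficients $\binom{n}{4},\ldots,\binom{n}{8}$ of the probabilities $P_4^*(Q_3^*)$ and $P_k^*(B_{k,j}^*)$ of the handful of bifurcating shapes $B_{k,j}^*$ ($k=5,6,7,8$) listed in Table~\ref{table:1varbin}. The $\beta$-model is shape invariant by construction and sampling consistent by \citep{Ald1}, so the induced model on tree shapes is sampling consistent by Lemma~\ref{thm:Tomas1}, and Corollary~\ref{cor:varQIB} applies; what remains is to evaluate each of those probabilities under $P^A_\beta$.

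First I would take $P^{A,*}_{\beta,4}(Q_3^*)=(3\beta+6)/(7\beta+18)$ from Lemma~\ref{lem:AldB4}. Then I would compute each $P^{A,*}_{\beta,k}(B_{k,j}^*)$, $k=5,\ldots,8$, by the method illustrated in the proof of Lemma~\ref{lem:AldB4}: for each shape one writes the sum, over all split sequences realizing it, of the corresponding products of the factors $\widehat q_{m,\beta}(a)$ divided by the relevant counts of still-unresolved leaves, expands every $q_{m,\beta}(a)$ and every normalizing constant $a_m(\beta)$ in Gamma functions, and collapses the result to a rational function of $\beta$ via $\Gamma(x+1)=x\Gamma(x)$. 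These values are the contents of Table~\ref{table:3varbin} in Appendix~A.2; their denominators are built out of the normalizing-constant polynomials for $m=3,4,5,6$ (hence the factors $3\beta+8$, $\beta+3$, $3\beta+10$, $31\beta^2+194\beta+300$, $127\beta^3+1383\beta^2+4958\beta+5880$, together with powers of $7\beta+18$).

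Finally I would substitute these into the formula of Corollary~\ref{cor:varQIB}, weighting the $B_{k,j}^*$-terms by the multiplicities $\Theta_{3,3}(B_{k,j}^*)$ read off Table~\ref{table:1varbin}, and group the result by power of $\binom{n}{k}$; within each group the several terms share a common denominator and collapse to the single rational function of $\beta$ displayed in the statement. The hard part is not conceptual but the sheer volume of algebra — many terms, genuinely distinct irreducible denominators in the different $\binom{n}{k}$-blocks (so there is no global cancellation to ease the bookkeeping), and the need to be careful with the split-sequence multiplicities when computing the entries of Table~\ref{table:3varbin}. I would carry out the simplifications with a computer algebra system and verify them against the Python scripts on the accompanying GitHub page, which is the same level of checking that the parallel $\alpha$-model result, Corollary~\ref{cor:varQIBalfa}, relies on via \citep{MMR}.
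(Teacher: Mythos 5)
Your proposal is correct and follows essentially the same route as the paper: the paper's entire proof consists of substituting the $\beta$-model probabilities of Table \ref{table:3varbin} (together with $P^{A,*}_{\beta,4}(Q_3^*)$ from Lemma \ref{lem:AldB4}) into the general formula of Corollary \ref{cor:varQIB} and simplifying, exactly as you describe. The only quibble is a harmless bookkeeping slip: the denominator factors you list arise from the normalizing constants $a_m(\beta)$ for $m=4,\ldots,8$ (the sizes of the nodes being split), not for $m=3,\ldots,6$.
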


So, the leading term in $n$ of $Var_{P^A_\beta}(\mathit{rQIB}_n)$ is
$$
\frac{(\beta + 2) (2 \beta^2 + 9 \beta + 12)}{2 (7\beta + 18)^2 (127 \beta^3 + 1383 \beta^2 + 4958 \beta + 5880)}\cdot n^8.
$$

When $\alpha=0$ or $\beta=0$, which correspond to the Yule model, both formulas for the variance of $\mathit{rQIB}_n$ reduce to
$$
Var_Y(\mathit{rQIB}_n)=\binom{n}{4}\frac{5 n^4 + 30 n^3 + 118 n^2 + 408 n + 630}{33075}.
$$
In the Appendix A.1 we give an independent derivation of this formula, which provides extra evidence of the correctness of all these computations.

As far as the uniform model goes, when $\alpha=1/2$ or  $\beta=0$, both formulas yield
$$
Var_U(\mathit{rQIB}_n)=\binom{n}{4}\frac{4(2 n - 1) (2 n + 1) (2 n + 3) (2 n + 5)}{225225}.
$$

Finally, as far as the $\alpha$-$\gamma$-model goes, we have written a set of Python scripts that compute all $\Theta_{i,j}(T^*)$, $i,j=1,2,3,4$, as well as $P_{\alpha,\gamma,k}^*(T^*)$ for every $T^*\in \TT_k^*$, $k=5,6,7,8$, and combine all these data into an explicit formula for  $Var_{P_{\alpha,\gamma}}(\mathit{rQI}_n)$. The Python scripts and the resulting formula (in text format and as a Python script that can be applied to any values of $n$, $\alpha$, and $\gamma$) can be found in the GitHub page \url{https://github.com/biocom-uib/Quartet_Index} companion to this paper. In particular, the plain text formula (which is too long and uninformative  to be reproduced here) is given in the document  \texttt{variance\_{}table.txt} therein. It can be easily checked using a symbolic computation program that when $\alpha=\gamma$ it agrees with  the variance under the $\alpha$-model given in Corollary \ref{cor:varQIBalfa}.

\section{Conclusions}

In this paper we have introduced a new balance index for phylogenetic trees, the rooted quartet index $\mathit{rQI}$. This index makes sense for multifurcating  trees, it can be computed in time linear in the number of leaves, and it has a larger range of values than any other shape index defined so far. We have computed its maximum and minimum values for bifurcating and arbitrary trees, and we have shown how to compute its expected value and variance under any probabilistic model of phylogenetic trees that is sampling consistent and invariant under relabelings. This includes the popular uniform, Yule, $\alpha$, $\beta$ and $\alpha$-$\gamma$-models. This paper is accompanied by the GitHub page~\url{https://github.com/biocom-uib/Quartet_Index} where the interested reader can find a set of Python scripts that perform several computations related to this index.

We want to call the reader's attention on a further property of the rooted quartet index: it can be used in a sensible way to measure the balance of \emph{taxonomic trees}, defined as those rooted trees of fixed depth (but with possibly out-degree 1 internal nodes) with their leaves bijectively labeled in a set of taxa.
The usual taxonomies with fixed ranks are the paradigm of such taxonomic trees.
It turns out that the classical balance indices cannot be used in a sound way to quantify the balance of such trees. For instance, Colless' index cannot be applied to multifurcating trees, and Sackin's index, being the sum of the depths of the leaves in the tree, is constant on all taxonomic trees of fixed depth and number of leaves. As far as the total cophenetic index goes, it is straightforward to check from its very definition that the taxonomic trees with maximum and minimum total cophenetic values among all taxonomic trees of a given depth and a given number of leaves are those depicted in Fig~\ref{fig:cophtax}. In our opinion, these two trees should be considered as equally balanced. We believe that  $\mathit{rQI}$  can be used to capture the symmetry of a taxonomic tree in a natural way, and we hope to report on it elsewhere. 

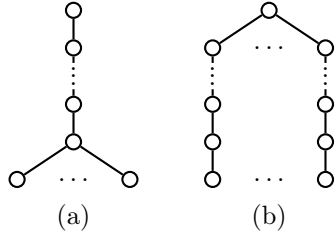
\begin{figure}[htb]
\begin{center}
\begin{tikzpicture}[thick,>=stealth,scale=0.25]
\draw(0,0) node [trep] (1) {}; 
\draw(3,0) node  {$\ldots$};  
\draw(6,0) node [trep] (n) {}; 
\draw(3,2) node[trep] (a) {};
\draw(3,4) node[trep] (b) {};
\draw(3,5.1) node  {.};
\draw(3,5.5) node  {.};
\draw(3,5.9) node  {.};
\draw(3,7) node[trep] (c) {};
\draw(3,9) node[trep] (r) {};
\draw (a)--(1);
\draw (a)--(n);
\draw (b)--(a);
\draw (b)-- (3,4.8);
\draw (c)-- (3,6.2);
\draw (r)-- (c);
\draw(3,-2) node {(a)};
\end{tikzpicture}
\qquad
\begin{tikzpicture}[thick,>=stealth,scale=0.25]
\draw(0,0) node [trep] (1) {}; 
\draw(3,0) node  {$\ldots$};  
\draw(6,0) node [trep] (n) {}; 
\draw(0,2) node[trep] (a1) {};
\draw(0,4) node[trep] (b1) {};
\draw(0,5.1) node  {.};
\draw(0,5.5) node  {.};
\draw(0,5.9) node  {.};
\draw(0,7) node[trep] (c1) {};
\draw(6,2) node[trep] (a2) {};
\draw(6,4) node[trep] (b2) {};
\draw(6,5.1) node  {.};
\draw(6,5.5) node  {.};
\draw(6,5.9) node  {.};
\draw(6,7) node[trep] (c2) {};
\draw(3,7) node  {$\ldots$};  
\draw(3,9) node[trep] (r) {};
\draw (a1)--(1);
\draw (a2)--(n);
\draw (b1)--(a1);
\draw (b2)--(a2);
\draw (b1)-- (0,4.8);
\draw (c1)-- (0,6.2);
\draw (b2)-- (6,4.8);
\draw (c2)-- (6,6.2);
\draw (r)-- (c1);
\draw (r)-- (c2);
\draw(3,-2) node {(b)};
\end{tikzpicture}
\end{center}
\caption{\label{fig:cophtax} 
The shapes of the taxonomic trees with maximum (a) and minimum (b) total cophenetic values among all taxonomic trees of given depth and number of leaves.}
\end{figure}

{In a future paper we also plan to study some further properties of $\mathit{rQI}$, like for instance its correlation with other balance indices under different probabilistic models.  To illustrate the relation between $\mathit{rQI}$ and other balance indices, in Fig. \ref{fig:15} we provide  scatterplots of the values of $\mathit{rQI}$  (taking $q_i=i$) \textsl{versus} the Sackin index $S$, the Colless index $C$, the total cophenetic index $\Phi$ and the number of cherries on $\BT_{20}$ (which contains more than $(2\cdot 20-3)!!\geq 8.2 \times 10^{21}$ members) and \textsl{versus} the Sackin index $S$ and the total cophenetic index $\Phi$ on $\TT_{15}$ (which contains more than $6.3\times 10^{15}$ members). The values of the Spearman correlations between these pairs of indices on these classes of trees are given in Table \ref{tab:Spearman}. We want to point out the small correlation between $\mathit{rQI}$ and the number of cherries: although at first sight it could seem that  counting the number of fully symmetric rooted quartets in a tree is equivalent to counting pairs of cherries, it is not the case, as the cherries in a rooted quartet may correspond to distant leaves in the tree. For instance, the bifurcating phylogenetic trees in Fig. \ref{fig:nocherries} have both 2 cherries, but their $\mathit{rQI}$ value is quite different. Notice also that the correlations between $\mathit{rQI}$ and $S$, $C$ and $\Phi$ are negative, because $\mathit{rQI}$ grows while $S$, $C$ and $\Phi$ decrease with the balance of the trees.}

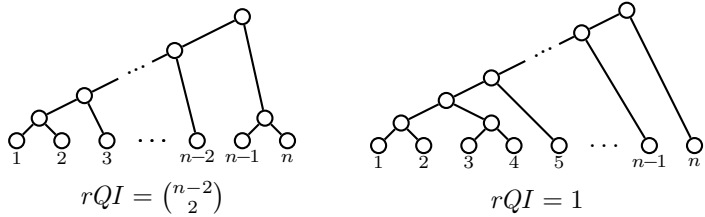
\begin{figure}[htb]
\begin{center}
\begin{tikzpicture}[thick,>=stealth,scale=0.3]
\draw(0,0) node [trep] (1) {};  
\draw(2,0) node [trep] (2) {};   
\draw(4,0) node [trep] (3) {};  
\draw(6,0) node  {$\ldots$};  
\draw(8,0) node [trep] (n2) { };   
\draw(10,0) node [trep] (n1) { };   
\draw(12,0) node [trep] (n) { };  
\draw(0,-0.75) node  {\scriptsize $1$};  
\draw(2,-0.75) node  {\scriptsize $2$};   
\draw(4,-0.75) node  {\scriptsize $3$};  
\draw(6,0) node  {$\ldots$};  
\draw(8,-0.75) node  {\scriptsize $n\!\!-\!\!2$};   
\draw(10,-0.75) node  {\scriptsize $n\!\!-\!\!1$};   
\draw(12,-0.75) node  {\scriptsize $n$};  
\draw(1,1) node[trep] (a) {};
\draw(3,2) node[trep] (b) {};
\draw(7,4) node[trep] (c) {};
\draw(10,5.5) node[trep] (r) {};
\draw(11,1) node[trep] (d) {};
\draw(5,3) node  {.};
\draw(5.3,3.15) node  {.};
\draw(5.6,3.3) node  {.};
\draw (a)--(1);
\draw (a)--(2);
\draw (b)--(3);
\draw (b)--(a);
\draw (b)-- (4.5,2.75);
\draw (c)-- (6,3.5);
\draw (c)-- (n2);
\draw (d)--(n1);
\draw (d)--(n);
\draw (r)--(d);
\draw (r)--(c);
\draw(6,-2.5) node {$\mathit{rQI}=\binom{n-2}{2}$};
\end{tikzpicture}
\qquad
\begin{tikzpicture}[thick,>=stealth,scale=0.3]
\draw(0,0) node [trep] (1) {};  
\draw(2,0) node [trep] (2) {};   
\draw(4,0) node [trep] (3) {};  
\draw(6,0) node [trep] (4) {};  
\draw(8,0) node [trep] (5) {};  
\draw(10,0) node  {$\ldots$};  
\draw(12,0) node [trep] (n1) { };   
\draw(14,0) node [trep] (n) { };  
\draw(0,-0.75) node  {\scriptsize $1$};  
\draw(2,-0.75) node  {\scriptsize $2$};   
\draw(4,-0.75) node  {\scriptsize $3$};  
\draw(6,-0.75) node  {\scriptsize $4$};  
\draw(8,-0.75) node  {\scriptsize $5$};  
\draw(12,-0.75) node  {\scriptsize $n\!\!-\!\!1$};   
\draw(14,-0.75) node  {\scriptsize $n$};  
\draw(1,1) node[trep] (a1) {};
\draw(5,1) node[trep] (a2) {};
\draw (a1)--(1);
\draw (a1)--(2);
\draw (a2)--(3);
\draw (a2)--(4);
\draw(3,2) node[trep] (b) {};
\draw (b)--(a1);
\draw (b)--(a2);
\draw(5,3) node[trep] (c) {};
\draw (c)--(b);
\draw (c)--(5);
\draw(7,4) node  {.};
\draw(7.3,4.15) node  {.};
\draw(7.6,4.3) node  {.};
\draw(9,5) node[trep] (d) {};
\draw(11,6) node[trep] (r) {};
\draw (d)--(n1);
\draw (r)--(d);
\draw (r)--(n);
\draw (c)-- (6.5,3.75);
\draw (d)-- (8,4.5);
\draw(7,-2.5) node {$\mathit{rQI}=1$};
\end{tikzpicture}
\end{center}
\caption{\label{fig:nocherries} 
Two trees with 2 cherries and very different $\mathit{rQI}$.}
\end{figure}

\begin{figure}[htb]
\begin{center}
\begin{tabular}{cc}
\includegraphics[width=0.4\linewidth]{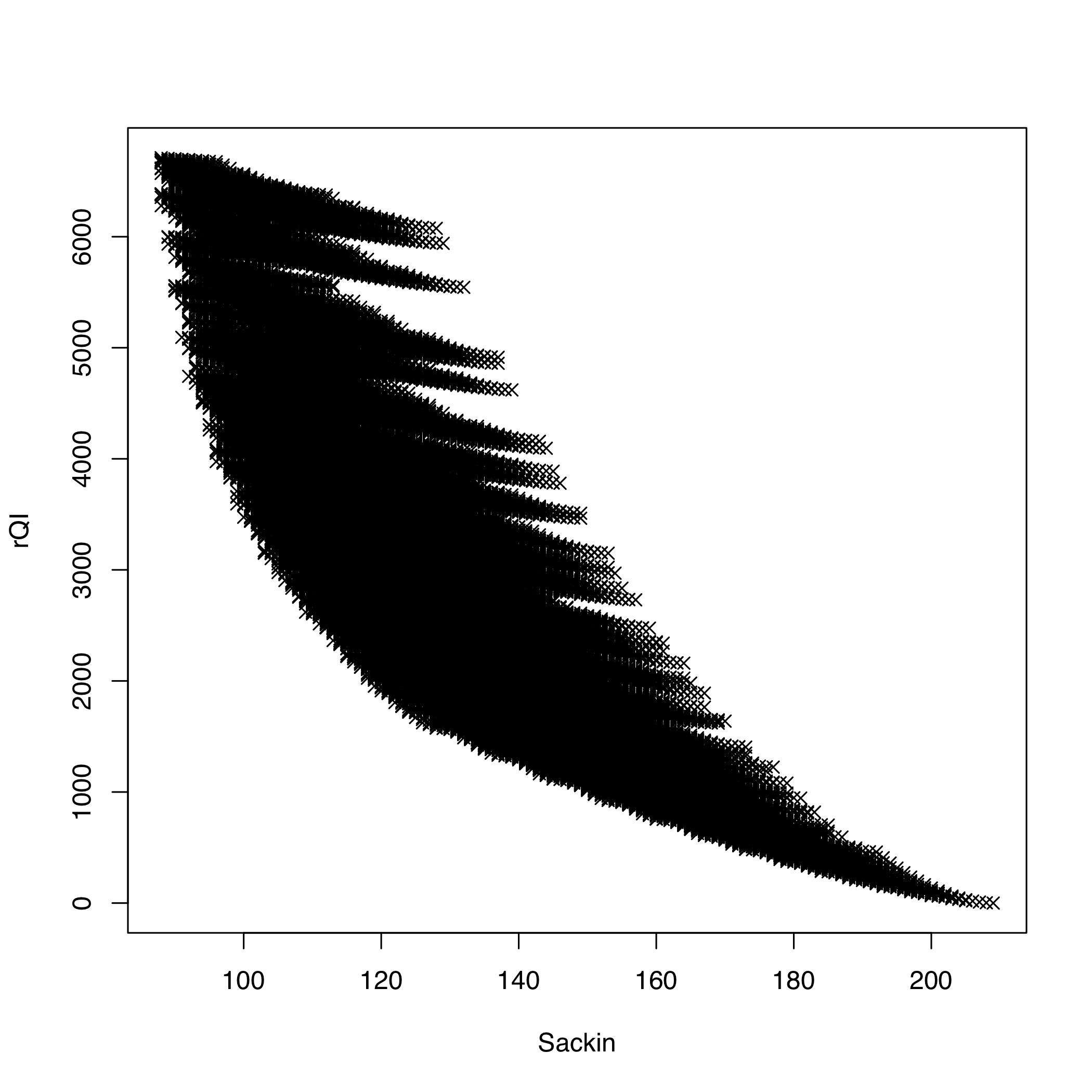} & 
\includegraphics[width=0.4\linewidth]{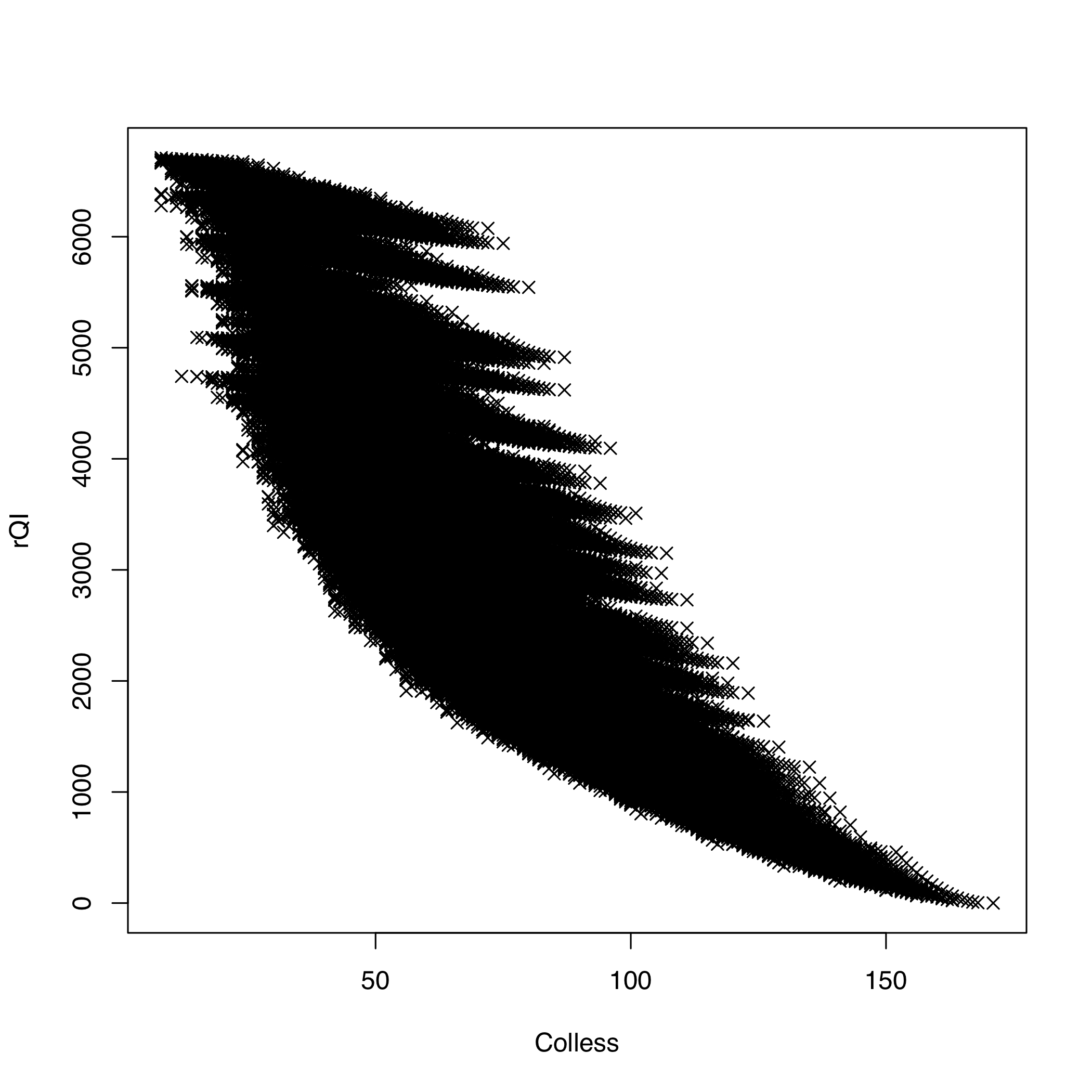}  
\\
(a) & (b)
\\
\includegraphics[width=0.4\linewidth]{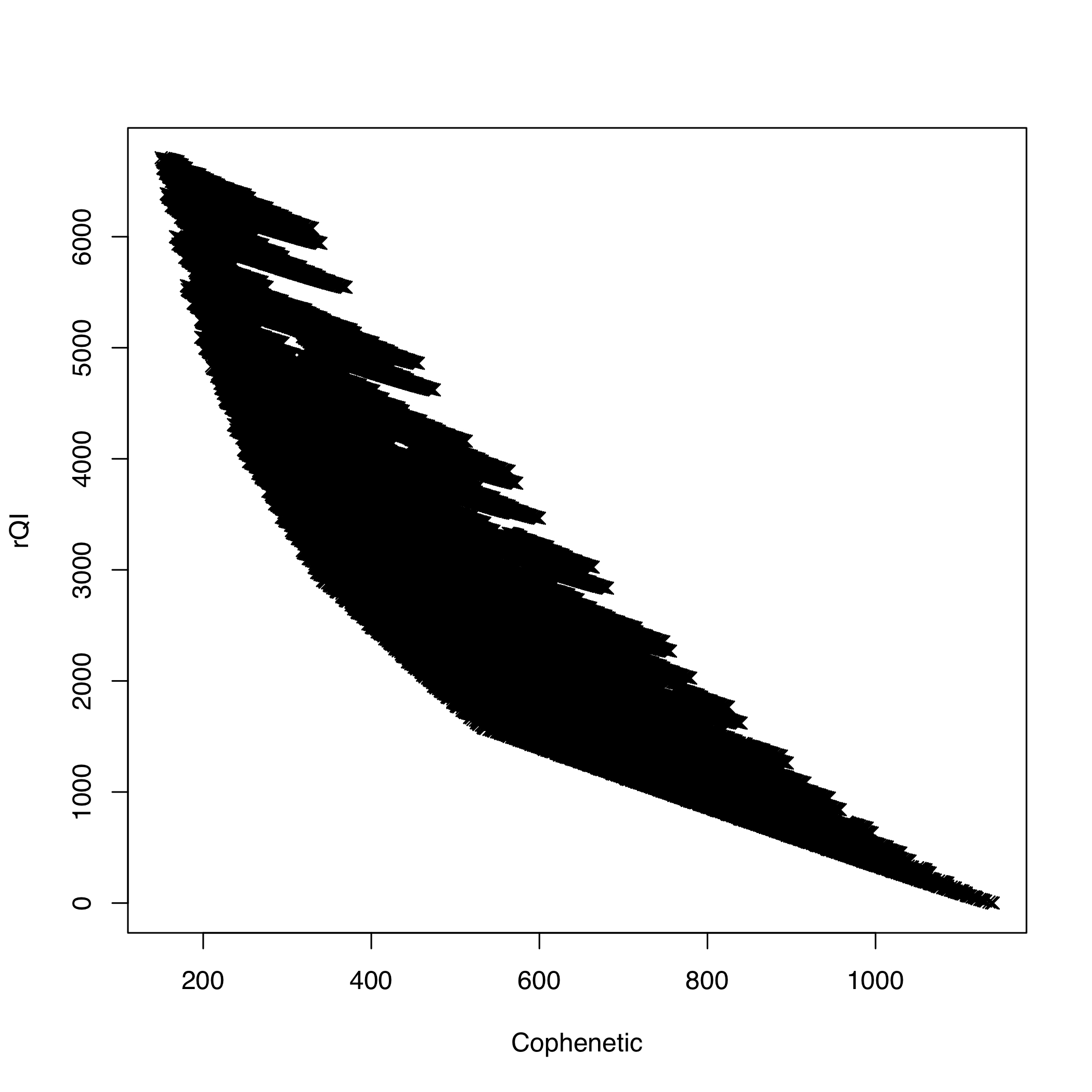} & 
\includegraphics[width=0.4\linewidth]{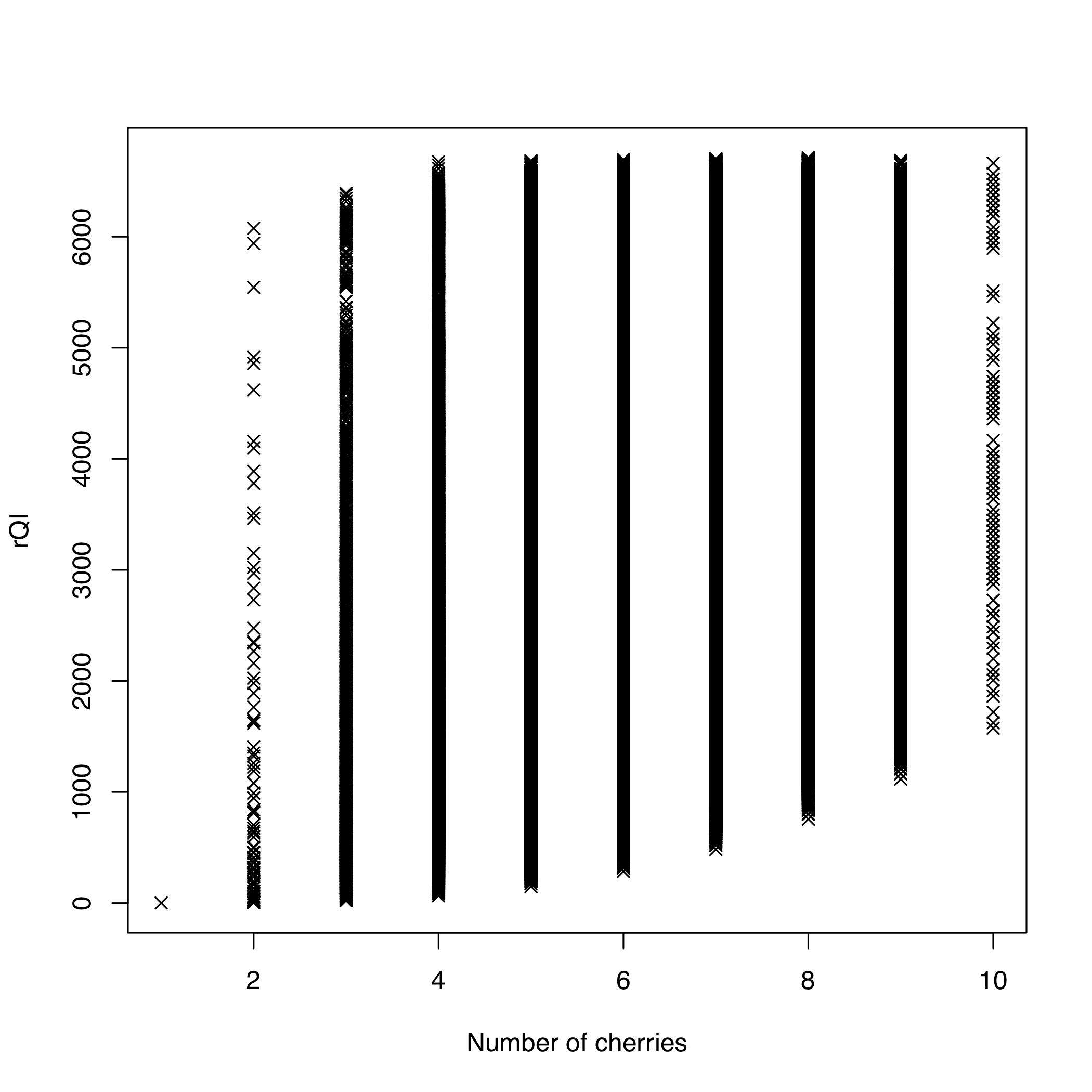}  
\\
(c) & (d)
\\
\includegraphics[width=0.4\linewidth]{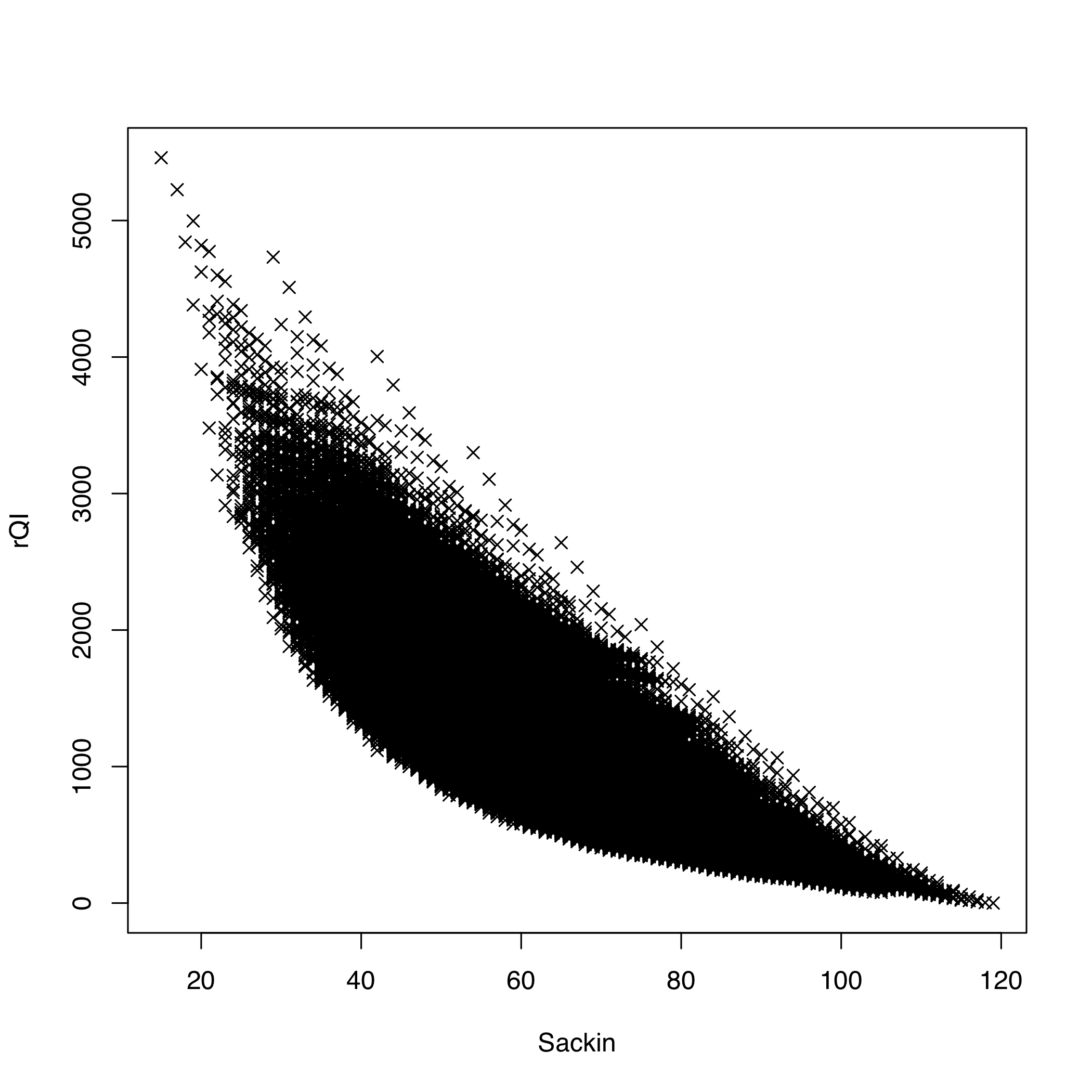} & 
\includegraphics[width=0.4\linewidth]{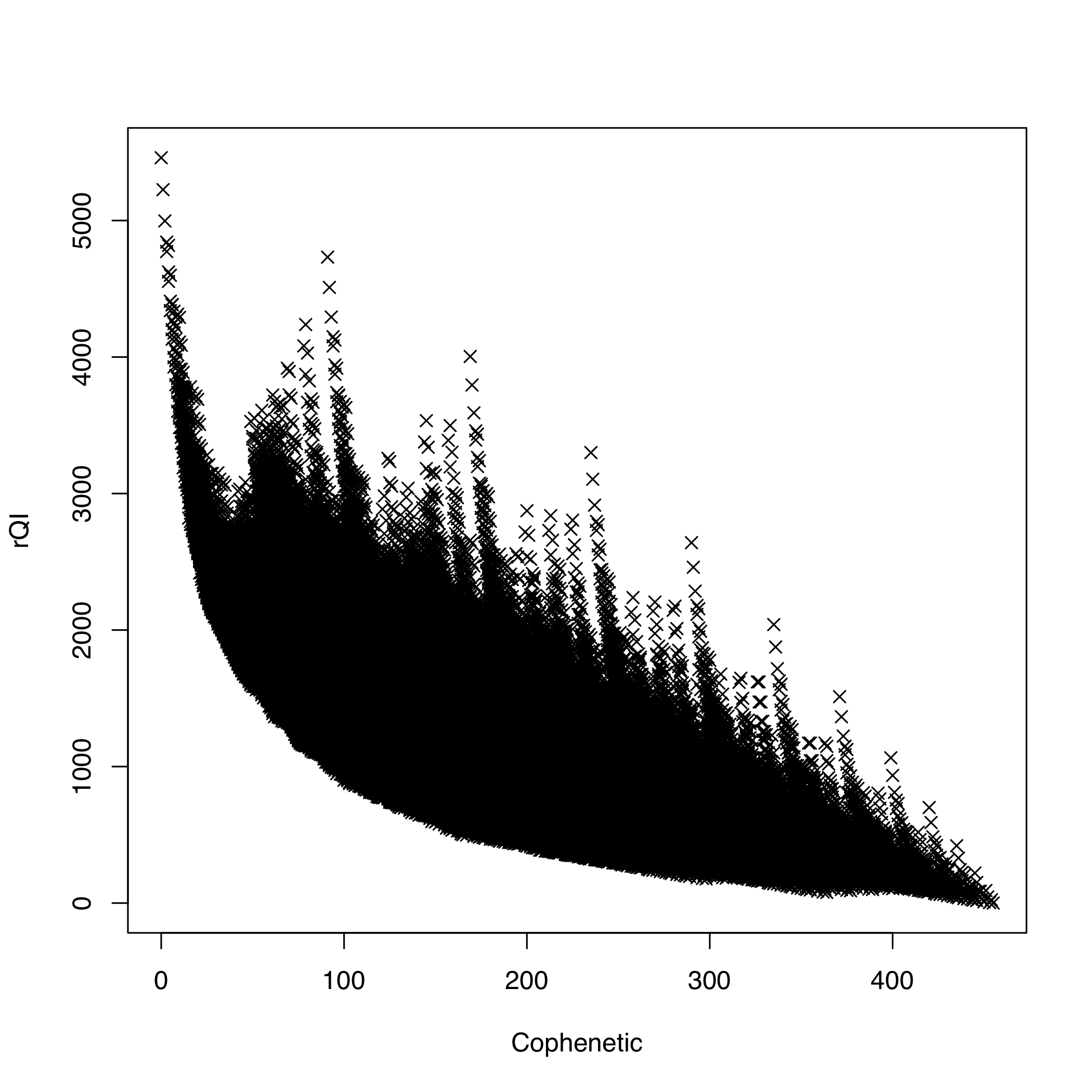}  
\\
(e) & (f)
\end{tabular}
\end{center}
\caption{\label{fig:15} 
Scatterplot of $\mathit{rQI}$ \textsl{versus}: (a) the Sackin index on $\BT_{20}$;
(b) the Colless index on $\BT_{20}$; (c) the total cophenetic index  on $\BT_{20}$; (d) 
the number of cherries on $\BT_{20}$; (e) the Sackin index on $\TT_{15}$;
(f) the total cophenetic index  on $\TT_{15}$.}
\end{figure}

\begin{table}[h]
\begin{center}\small
\begin{tabular}{l|r}
Correlation & Value \\ \hline
$\mathit{rQI}$ \textsl{vs} $S$ on $\BT_{20}$ &  $-0.889$
\\
$\mathit{rQI}$ \textsl{vs} $C$ on $\BT_{20}$ &  $-0.893$
\\
$\mathit{rQI}$ \textsl{vs} $\Phi$ on $\BT_{20}$ &  $ -0.935$
\\
$\mathit{rQI}$ \textsl{vs} $\mathit{Ch}$ on $\BT_{20}$ & $0.165$
\\
$\mathit{rQI}$ \textsl{vs} $S$ on $\TT_{15}$ & $-0.787$
\\
$\mathit{rQI}$ \textsl{vs} $\Phi$ on $\TT_{15}$ &  $-0.827$
\\
\end{tabular}
\end{center}
\caption{\label{tab:Spearman} Spearman's correlations corresponding to the scatterplots in Fig. \ref{fig:15}.}
\end{table}

\noindent\textbf{Acknowledgements.}
A preliminary version of this paper was presented at the \textsl{Workshop on Algebraic and combinatorial phylogenetics} held in Barcelona (June 26--30, 2017). We thank Mike Steel, Gabriel Riera, Seth Sullivant, the anonymous reviewers and the associate editor for their helpful suggestions on several aspects of this paper. This research was partially supported by the Spanish Ministry of Economy and Competitiveness and the European Regional Development Fund through project DPI2015-67082-P (MINECO /FEDER).

\section*{Appendices}

\subsection*{A.1: An alternative derivation of the variance of $\mathit{rQIB}_n$ under the Yule model}

In this section we give an alternative proof of the following result.

\begin{proposition}
Under the Yule model,
$$
Var_{Y}(\mathit{rQIB}_n)=\binom{n}{4}\frac{5 n^4+ 30 n^3+ 118 n^2 + 408 n+ 630}{33075}.
$$
\end{proposition}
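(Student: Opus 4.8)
The plan is to bypass the sampling-consistency computation of Section \ref{sec:QI} and instead exploit the recursive self-similarity of the Yule model, so as to obtain a genuinely independent check. Write $R_n$ for the random variable $\mathit{rQIB}_n$ under the Yule model. The key input is the classical description of the Yule model as a Markov branching process: the root of a Yule tree on $n$ leaves splits its leaf set into two blocks of sizes $I$ and $n-I$, where $I$ is uniform on $\{1,\dots,n-1\}$ (in the ordered version of the split), and, conditionally on the split, the two pending subtrees are independent Yule trees. Since $\mathit{rQIB}$ and the recurrence of Lemma \ref{lem:rec} are symmetric in the two subtrees, this yields the distributional identity
$$
R_n\ \stackrel{d}{=}\ R'_I+R''_{n-I}+\binom{I}{2}\binom{n-I}{2},
$$
where $R'_i,R''_j$ are independent copies of $R_i,R_j$, also independent of $I$.

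From here I would derive recurrences for the first two moments. Taking expectations and using the Vandermonde convolution $\sum_{i=1}^{n-1}\binom{i}{2}\binom{n-i}{2}=\binom{n+1}{5}$ recovers $E(R_n)=\tfrac13\binom{n}{4}$ (consistent with Corollary \ref{prop:expbin}, since $P^{*}_4(Q_3^*)=1/3$ for the Yule model). Squaring the identity and using the conditional independence of $R'_I$ and $R''_{n-I}$ gives
$$
E(R_n^2)=\frac1{n-1}\sum_{i=1}^{n-1}\Big(E(R_i^2)+E(R_{n-i}^2)+\binom{i}{2}^2\binom{n-i}{2}^2+2E(R_i)E(R_{n-i})+2\binom{i}{2}\binom{n-i}{2}\big(E(R_i)+E(R_{n-i})\big)\Big).
$$
Plugging in the closed form for $E(R_i)$ makes every term on the right other than $E(R_i^2)$ an explicit polynomial in $i$ and $n$, so those contributions can be summed over $i$ by Faulhaber's formulas (equivalently, repeated use of the hockey-stick and Vandermonde identities), collapsing the recurrence to the shape $(n-1)E(R_n^2)=2\sum_{i=1}^{n-1}E(R_i^2)+h(n)$ for an explicit polynomial $h$.

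It remains to solve this recurrence. Taking the difference of the relation at $n$ and at $n-1$ reduces it to the first-order recurrence $(n-1)E(R_n^2)=n\,E(R_{n-1}^2)+h(n)-h(n-1)$; alternatively, since the target $Var_Y(\mathit{rQIB}_n)$ is a polynomial in $n$ of degree $8$, so is $E(R_n^2)=Var_Y(R_n)+\tfrac19\binom{n}{4}^2$, and one may simply posit $E(R_n^2)$ to be such a polynomial, impose the recurrence, and pin down the remaining coefficients using the base values $R_1=R_2=R_3\equiv 0$ and $E(R_4^2)=E(R_4)=\tfrac13$ (the last equality holding because $R_4\in\{0,1\}$). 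Subtracting $E(R_n)^2=\tfrac19\binom{n}{4}^2$ and simplifying then produces $Var_Y(\mathit{rQIB}_n)=\binom{n}{4}\dfrac{5n^4+30n^3+118n^2+408n+630}{33075}$. The main obstacle is purely computational — carefully evaluating the several binomial sums that make up $h(n)$ and then solving the recurrence — but there is no conceptual difficulty, which is exactly why this argument serves as an independent corroboration of Corollaries \ref{cor:varQIBalfa} and \ref{cor:varQIBeta} specialized to $\alpha=0$ and $\beta=0$.
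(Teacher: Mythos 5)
Your argument is correct and reaches the stated formula, but by a genuinely different route from the paper's. The paper treats $\mathit{rQIB}$ as a recursive tree shape statistic and imports a ready-made second-moment recurrence (Cor.~1 of Cardona--Mir--Rossell\'o), phrased in terms of the increments $\varepsilon(a,b-1)=f_{\mathit{rQIB}}(a,b)-f_{\mathit{rQIB}}(a,b-1)$ and $R(n-1)=E_{Y}(\mathit{rQIB}_n)-E_{Y}(\mathit{rQIB}_{n-1})$, which relates $E_{Y}(\mathit{rQIB}_n^2)$ directly to $E_{Y}(\mathit{rQIB}_{n-1}^2)$; the probabilistic work is thus outsourced to the cited corollary. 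You instead derive the moment recurrence from scratch out of the Markov-branching description of the Yule model: the uniform root split (which is exactly $q_{m,0}(a)=1/(m-1)$ in the paper's own presentation of the $\beta$-model) plus conditional independence of the two pending subtrees, combined with Lemma~\ref{lem:rec}, give $R_n\stackrel{d}{=}R'_I+R''_{n-I}+\binom{I}{2}\binom{n-I}{2}$ and hence a full-history recurrence for $E(R_n^2)$, which your difference trick turns into a first-order one; the setup checks out at $n=4$ (both sides give $1/3$) and your mean computation via $\sum_i\binom{i}{2}\binom{n-i}{2}=\binom{n+1}{5}$ correctly recovers $\tfrac13\binom{n}{4}$. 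Your version buys self-containedness, at the cost of invoking the uniform-split/independence characterization of Yule; the paper's buys brevity and uniformity with its treatment of other recursive statistics, and both end in the same routine (if heavy) binomial algebra. One caveat: your ``alternative'' of positing that $E(R_n^2)$ is a degree-$8$ polynomial is circular as phrased, since it assumes the shape of the answer; it is salvageable (dividing the first-order recurrence by $n$ gives $y_n=y_{n-1}+(\text{polynomial})$, exactly as in the paper, so $E(R_n^2)=ny_n$ is automatically polynomial), but the difference-and-solve route you state first is the one that stands on its own.
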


\begin{proof}
By Lemma \ref{lem:rec}, $\mathit{rQIB}$ on $\BT_n$ is a bifurcating recursive tree shape statistic satisfying the recurrence
$$
\mathit{rQIB}(T\star T')=\mathit{rQIB}(T)+\mathit{rQIB}(T')+f_{\mathit{rQIB}}(|L(T)|,|L(T')|)
$$
with $f_{\mathit{rQIB}}(a,b)=\binom{a}{2}\binom{b}{2}$. 
Then, it satisfies the hypothesis in Cor. 1 of \citep{CMR} with
$$
\begin{array}{rl}
\displaystyle \varepsilon(a,b-1) &\displaystyle =f_{\mathit{rQIB}}(a,b)-f_{\mathit{rQIB}}(a,b-1)\\ & \displaystyle =\binom{a}{2}\binom{b}{2}-\binom{a}{2}\binom{b-1}{2}=(b-1)\binom{a}{2}\\[2ex]
\displaystyle R(n-1) & \displaystyle=E_{Y}(\mathit{rQIB}_n)-E_{Y}(\mathit{rQIB}_{n-1})\\ & \displaystyle =\frac{1}{3}\binom{n}{4}-\frac{1}{3}\binom{n-1}{4}=\frac{1}{3}\binom{n-1}{3}
\end{array}
$$
Since $E_{Y}(\mathit{rQIB}_1)=0$ and $f_{\mathit{rQIB}}(n-1,1)=0$, applying the aforementioned result from \citep{CMR} we have that
$$
\begin{array}{l}
E_{Y}(\mathit{rQIB}_n^2)  \displaystyle = \frac{n}{n-1}  E_{Y}(\mathit{rQIB}_{n-1}^2)+\frac{4}{n-1} \sum_{k=1}^{n-2} \varepsilon(k,n-k-1) E_{Y}(\mathit{rQIB}_k) \\
\qquad\qquad \displaystyle +\frac{2}{n-1} \sum_{k=1}^{n-2}  R(n-k-1)E_{Y}(\mathit{rQIB}_k) \\
\qquad\qquad \displaystyle
+\frac{1}{n-1}\sum_{k=1}^{n-2} (f_{\mathit{rQIB}}(k,n-k)^2- f_{\mathit{rQIB}}(k,n-k-1)^2)\\[2ex]
 \qquad\displaystyle = \frac{n}{n-1}  E_{Y}(\mathit{rQIB}_{n-1}^2)+\frac{4}{3(n-1)} \sum_{k=1}^{n-2} (n-k-1)\binom{k}{2}\binom{k}{4} \\
\qquad\qquad \displaystyle +\frac{2}{9(n-1)} \sum_{k=1}^{n-2}  \binom{n-k-1}{3} \binom{k}{4} \\
\qquad\qquad \displaystyle
+\frac{1}{n-1}\sum_{k=1}^{n-2} \binom{k}{2}^2\Bigg(\binom{n-k}{2}^2-\binom{n-k-1}{2}^2\Bigg)\\[2ex]
\qquad \displaystyle = \frac{n}{n-1}  E_{Y}(\mathit{rQIB}_{n-1}^2) +\frac{n}{3}\binom{n-2}{4}\frac{15 n^2 - 35 n + 6}{420}\\[1ex]
\qquad\qquad \displaystyle +\frac{n}{9} \binom{n-2}{4}\frac{n^2 - 13 n + 42}{840}\\[1ex]
\qquad\qquad \displaystyle + n\binom{n-2}{2}\frac{3 n^4 - 18 n^3 + 41 n^2 - 42 n + 36}{1680}\\[2ex]
\qquad \displaystyle = \frac{n}{n-1}  E_{Y}(\mathit{rQIB}_{n-1}^2) \\[1ex]
\qquad\qquad \displaystyle + \frac{n(n-2)(n-3)(253 n^4-2014 n^3 +6119 n^2-7430 n+ 3504)}{181440}
\end{array}
$$
Dividing by $n$ both sides of this expression for $E_{Y}(\mathit{rQIB}^2_n)$ and setting $y_n=E_{Y}(\mathit{rQIB}^2_n)/n$,
we obtain the recurrence
$$
\begin{array}{rl}
y_n & =\displaystyle y_{n-1}+ \frac{(n-2)(n-3)(253 n^4-2014 n^3 +6119 n^2-7430 n+ 3504)}{181440}.
\end{array}
$$
Since $y_0=y_1=0$, its solution is
$$
\begin{array}{rl}
y_n & \displaystyle = \sum_{k=2}^n  \frac{(k-2)(k-3)(253 k^4-2014 k^3 +6119 k^2-7430 k+ 3504)}{181440}\\[1ex]
& \displaystyle = \frac{(n - 3) (n - 2) (n - 1) (1265 n^4 - 7110 n^3 + 14419 n^2 - 4086 n + 5040)}{6350400}
\end{array}
$$
from where we obtain
$$
\begin{array}{rl}
E_{Y}(\mathit{rQIB}_n^2)& =ny_n\\
&\displaystyle = \binom{n}{4}\frac{1265 n^4 - 7110 n^3 + 14419 n^2 - 4086 n + 5040}{264600}.
\end{array}
$$

Finally
$$
\begin{array}{l}
Var_{Y}(\mathit{rQIB}_n)   \displaystyle =E_{Y}(\mathit{rQIB}_n^2)-E_{Y}(\mathit{rQIB}_n)^2\\
\qquad \displaystyle=\binom{n}{4}\frac{1265 n^4 - 7110 n^3 + 14419 n^2 - 4086 n + 5040}{264600}-\frac{1}{9}\binom{n}{4}^2\\
\qquad \displaystyle=\binom{n}{4}\frac{5 n^4 + 30 n^3 + 118 n^2 + 408 n + 630}{33075},
\end{array}
$$
as we claimed. \qed 
\end{proof}

\subsection*{A.2: Some tables used in Section \ref{sec:QI}}

\begin{table}[h]
\begin{center}\small
\begin{tabular}{c|c|c}
Name& Shape & $\Theta_{3,3}(T^*)$ \\ \hline
$B_{5,1}^*$ & $(*,(*,(*,(*,*))))$ &  0
 \\ \hline
$B_{5,2}^*$ & $(*,((*,*),(*,*)))$ &   0
\\ \hline
$B_{5,3}^*$ & $((*,*),(*,(*,*)))$ & 6
\\ \hline \hline
$B_{6,1}^*$ & $(*,(*,(*,(*,(*,*)))))$ & 0 
 \\ \hline
$B_{6,2}^*$ & $(*,(*,((*,*),(*,*))))$ &   0
\\ \hline
$B_{6,3}^*$ & $(*,((*,*),(*,(*,*))))$ & 0
 \\ \hline
$B_{6,4}^*$ & $((*,*),((*,*),(*,*)))$ &  18
 \\ \hline
$B_{6,5}^*$ & $((*,*),(*,(*,(*,*))))$ &  6
 \\ \hline
$B_{6,6}^*$ & $((*,(*,*)),(*,(*,*)))$ &  36
\\ \hline \hline
$B_{7,1}^*$ & $(*,(*,(*,(*,(*,(*,*))))))$ &  0
\\ \hline
$B_{7,2}^*$ & $(*,(*,(*,((*,*),(*,*)))))$ &  0
\\ \hline
$B_{7,3}^*$ & $(*,(*,((*,*),(*,(*,*)))))$ &  0
\\ \hline
$B_{7,4}^*$ & $(*,((*,*),((*,*),(*,*))))$ &  0
\\ \hline
$B_{7,5}^*$ & $(*,((*,*),(*,(*,(*,*)))))$ &  0
\\ \hline
$B_{7,6}^*$ & $(*,((*,(*,*)),(*,(*,*))))$ &  0
\\ \hline
$B_{7,7}^*$ & $((*,*),(*,(*,(*,(*,*)))))$ &  0
\\ \hline
$B_{7,8}^*$ & $((*,*),(*,((*,*),(*,*))))$ &  8
\\ \hline
$B_{7,9}^*$ & $((*,*),((*,*),(*,(*,*))))$ &  24
\\ \hline
$B_{7,10}^*$ & $((*,(*,*)),(*,(*,(*,*))))$ &  36
\\ \hline
$B_{7,11}^*$ & $((*,(*,*)),((*,*),(*,*)))$ &  36
\\
\hline \hline  
$B_{8,1}^*$ & $(*,(*,(*,(*,(*,(*,(*,*)))))))$&  0
\\ \hline
$B_{8,2}^*$ & $(*,(*,(*,(*,((*,*),(*,*))))))$&  0
\\ \hline
$B_{8,3}^*$ & $(*,(*,(*,((*,*),(*,(*,*))))))$&  0
\\ \hline
$B_{8,4}^*$ & $(*,(*,((*,*),((*,*),(*,*)))))$&  0
\\ \hline
$B_{8,5}^*$ & $(*,(*,((*,*),(*,(*,(*,*))))))$&  0
\\ \hline
$B_{8,6}^*$ & $(*,(*,((*,(*,*)),(*,(*,*)))))$&  0
\\ \hline
$B_{8,7}^*$ & $(*,((*,*),(*,(*,(*,(*,*))))))$&  0
\\ \hline
$B_{8,8}^*$ & $(*,((*,*),(*,*),(*,(*,*)))))$&  0
\\ \hline
$B_{8,9}^*$ & $(*,((*,*),(*,((*,*),(*,*)))))$&  0
\\ \hline
$B_{8,10}^*$ & $(*,((*,(*,*)),(*,(*,(*,*)))))$&  0
\\ \hline
$B_{8,11}^*$ & $(*,((*,(*,*)),((*,*),(*,*))))$&  0
\\ \hline
$B_{8,12}^*$ & $((*,*),(*,(*,(*,(*,(*,*))))))$&  0
\\ \hline
$B_{8,13}^*$ & $((*,*),(*,(*,(*,*),(*,*))))$&  2
\\ \hline
$B_{8,14}^*$ & $((*,*),(*,((*,*),(*,(*,*)))))$&  6
\\ \hline
$B_{8,15}^*$ & $((*,*),((*,*),(*,(*,(*,*))))))$&  12
\\ \hline
$B_{8,16}^*$ & $((*,*),((*,*),((*,*),(*,*))))$&  14
\\ \hline
$B_{8,17}^*$ & $((*,*),((*,(*,*)),(*,(*,*))))$&  18
\\ \hline
$B_{8,18}^*$ & $((*,(*,*)),(*,(*,(*,(*,*)))))$&  0
\\ \hline
$B_{8,19}^*$ & $((*,(*,*)),(*,((*,*),(*,*)))))$&  0
\\ \hline
$B_{8,20}^*$ & $((*,(*,*)),((*,*),(*,(*,*))))$&  0
\\ \hline
$B_{8,21}^*$ & $((*,(*,(*,*))),(*,(*,(*,*))))$&  36
\\ \hline
$B_{8,22}^*$ & $((*,(*,(*,*))),((*,*),(*,*)))$&  36
\\ \hline
$B_{8,23}^*$ & $(((*,*),(*,*)),((*,*),(*,*)))$&  38
\\ \hline
\end{tabular}
\end{center}
\caption{\label{table:1varbin} Coefficients of  the probabilities of the trees in $\BT_k^*$, for $k=5,6,7,8$, in the formula for the variance of $\mathit{rQIB}_n$.}
\end{table}

\begin{table}[h]
\begin{center}\small
\begin{tabular}{c|c}
Tree& $P_{\alpha,n}^{A,*}$ \\ \hline\vphantom{$\Big($} 
$Q_3^*$ & $\frac{1-\alpha}{3-\alpha}$
\\ \hline\hline\vphantom{$\Big($} 
$B_{5,3}^*$ & $\frac{2(1-\alpha)}{4-\alpha}$
\\ \hline \hline\vphantom{$\Big($} 
$B_{6,4}^*$  &  $\frac{(1-\alpha)^2(8-\alpha)}{(5-\alpha)(4-\alpha)(3-\alpha)}$
 \\ \hline\vphantom{$\Big($} 
$B_{6,5}^*$  &  $\frac{2(1-\alpha)(8-\alpha)}{(5-\alpha)(4-\alpha)(3-\alpha)}$
 \\ \hline\vphantom{$\Big($} 
$B_{6,6}^*$  &  $\frac{2(1-\alpha)(2-\alpha)}{(5-\alpha)(4-\alpha)}$
\\ \hline \hline\vphantom{$\Big($} 
$B_{7,8}^*$  &  $\frac{(1-\alpha)^2(2+\alpha)(10+\alpha)}{(6-\alpha)(5-\alpha)(4-\alpha)(3-\alpha)}$
\\ \hline\vphantom{$\Big($} 
$B_{7,9}^*$  &  $\frac{2(1-\alpha)^2(10+\alpha)}{(6-\alpha)(5-\alpha)(4-\alpha)}$
\\ \hline\vphantom{$\Big($} 
$B_{7,10}^*$ &  $\frac{10(1-\alpha)(2-\alpha)}{(6-\alpha)(5-\alpha)(3-\alpha)}$
\\ \hline\vphantom{$\Big($} 
$B_{7,11}^*$  &  $\frac{5(1-\alpha)^2(2-\alpha)}{(6-\alpha)(5-\alpha)(3-\alpha)}$
\\
\hline \hline\vphantom{$\Big($}   
$B_{8,13}^*$ &  $\frac{8(1-\alpha)^2(1+\alpha)(2+\alpha)(3+\alpha)}{(7-\alpha)(6-\alpha)(5-\alpha)(4-\alpha)(3-\alpha)}$
\\ \hline\vphantom{$\Big($} 
$B_{8,14}^*$ &  $\frac{16(1-\alpha)^2(1+\alpha)(3+\alpha)}{(7-\alpha)(6-\alpha)(5-\alpha)(4-\alpha)}$
\\ \hline\vphantom{$\Big($} \vphantom{$\Big($} 
$B_{8,15}^*$ &  $\frac{8(1-\alpha)^2(3+\alpha)(8-\alpha)}{(7-\alpha)(6-\alpha)(5-\alpha)(4-\alpha)(3-\alpha)}$
\\ \hline\vphantom{$\Big($} 
$B_{8,16}^*$ &  $\frac{4(1-\alpha)^3(3+\alpha)(8-\alpha)}{(7-\alpha)(6-\alpha)(5-\alpha)(4-\alpha)(3-\alpha)}$
\\ \hline\vphantom{$\Big($} 
$B_{8,17}^*$ &  $\frac{8(1-\alpha)^2(2-\alpha)(3+\alpha)}{(7-\alpha)(6-\alpha)(5-\alpha)(4-\alpha)}$
\\ \hline\vphantom{$\Big($} 
$B_{8,21}^*$ &  $\frac{20(1-\alpha)(2-\alpha)}{(7-\alpha)(6-\alpha)(5-\alpha)(3-\alpha)}$
\\ \hline\vphantom{$\Big($} 
$B_{8,22}^*$ &  $\frac{20(1-\alpha)^2(2-\alpha)}{(7-\alpha)(6-\alpha)(5-\alpha)(3-\alpha)}$
\\ \hline\vphantom{$\Big($} 
$B_{8,23}^*$ &  $\frac{5(1-\alpha)^3(2-\alpha)}{(7-\alpha)(6-\alpha)(5-\alpha)(3-\alpha)}$
\\ \hline
\end{tabular}
\end{center}
\caption{\label{table:2varbin} Probabilities under the $\alpha$-model of the trees involved in the formula for the variance of $\mathit{rQIB}_n$}
\end{table}

\begin{table}[h]
\begin{center}\small
\begin{tabular}{c|c}
Tree & $P_{\beta,n}^{B,*}$ \\ \hline\vphantom{$\Big($} 
$Q_3^*$  &  $\frac{3(\beta+2)}{7\beta+18}$
\\ \hline\hline\vphantom{$\Big($} 
$B_{5,3}^*$  &   $\frac{2(\beta+2)}{3\beta+8}$
\\ \hline \hline\vphantom{$\Big($} 
$B_{6,4}^*$  & $\frac{45(\beta+2)^2(\beta+4)}{(31\beta^2+194\beta+300)(7\beta+18)}$  
\\ \hline\vphantom{$\Big($} 
$B_{6,5}^*$  &   $\frac{60(\beta+2)(\beta+3)(\beta+4)}{(31\beta^2+194\beta+300)(7\beta+18)}$
 \\ \hline\vphantom{$\Big($} 
$B_{6,6}^*$  &   $\frac{10(\beta+2)(\beta+3)}{31\beta^2+194\beta+300}$
\\ \hline \hline\vphantom{$\Big($} 
$B_{7,8}^*$ &  $\frac{3(\beta+2)^2(\beta+4)(\beta+5)}{(\beta+3)(3\beta+8)(3\beta+10)(7\beta+18)}$
\\ \hline\vphantom{$\Big($} 
$B_{7,9}^*$  &  $\frac{2(\beta+2)^2(\beta+5)}{(\beta+3)(3\beta+8)(3\beta+10)}$
 \\ \hline\vphantom{$\Big($} 
$B_{7,10}^*$ &  $\frac{20(\beta+2)(\beta+3)}{3(3\beta+10)(7\beta+18)}$
\\ \hline\vphantom{$\Big($} 
$B_{7,11}^*$ &   $\frac{5(\beta+2)^2}{(3\beta+10)(7\beta+18)}$

\\ \hline \hline  \vphantom{$\Big($} 
$B_{8,13}^*$ &   $\frac{504(\beta+2)^2(\beta+4)^2(\beta+5)^2(\beta+6)}{ (127\beta^3+1383\beta^2+4958\beta+5880)(31\beta^2+194\beta+300)(3\beta+8)(7\beta+18)}$
\\ \hline\vphantom{$\Big($} 
$B_{8,14}^*$ &   $\frac{336(\beta+2)^2(\beta+4)(\beta+5)^2(\beta+6)}{ (127\beta^3+1383\beta^2+4958\beta+5880)(31\beta^2+194\beta+300)(3\beta+8)}$
\\ \hline\vphantom{$\Big($} 
$B_{8,15}^*$ &  $\frac{1680(\beta+2)^2(\beta+3)(\beta+4)(\beta+5)(\beta+6)}{ (127\beta^3+1383\beta^2+4958\beta+5880)(31\beta^2+194\beta+300)(7\beta+18)}$
 \\ \hline\vphantom{$\Big($} 
$B_{8,16}^*$ &   $\frac{1260(\beta+2)^3(\beta+4)(\beta+5)(\beta+6)}{ (127\beta^3+1383\beta^2+4958\beta+5880)(31\beta^2+194\beta+300)(7\beta+18)}$
\\ \hline\vphantom{$\Big($} 
$B_{8,17}^*$ &  $\frac{280(\beta+2)^2(\beta+3)(\beta+5)(\beta+6)}{ (127\beta^3+1383\beta^2+4958\beta+5880)(31\beta^2+194\beta+300)}$
 \\ \hline\vphantom{$\Big($} 
$B_{8,21}^*$ &  $\frac{560(\beta+2)(\beta+3)^3(\beta+4)}{ (127\beta^3+1383\beta^2+4958\beta+5880)(7\beta+18)^2}$
\\ \hline\vphantom{$\Big($} 
$B_{8,22}^*$ &   $\frac{840(\beta+2)^2(\beta+3)^2(\beta+4)}{ (127\beta^3+1383\beta^2+4958\beta+5880)(7\beta+18)^2}$ 
\\ \hline\vphantom{$\Big($} 
$B_{8,23}^*$ &   $\frac{315(\beta+2)^3(\beta+3)(\beta+4)}{ (127\beta^3+1383\beta^2+4958\beta+5880)(7\beta+18)^2}$
\\ \hline

\end{tabular}
\end{center}
\caption{\label{table:3varbin} Probabilities under the $\beta$-model of the trees involved in the formula for the variance of $\mathit{rQIB}_n$}
\end{table}

\end{document}